\renewcommand{\normalsize}{\fontsize{10pt}{\baselineskip}\selectfont}
\newtheorem{Example}{Example}[section]
\newtheorem{Theorem}{Theorem}[section]
\newtheorem{Theorem/Definition}{Theorem/Definition}[section]
\newtheorem{Proposition}{Proposition}[section]
\newtheorem{Lemma}{Lemma}[section]
\newtheorem{Corollary}{Corollary}[section]
\newcommand{\pd}{\partial}
\newcommand{\bC}{{\mathbb C}}
\newcommand{\bZ}{{\mathbb Z}}
\newcommand{\cB}{{\mathcal B}}
\newcommand{\cF}{{\mathcal F}}
\newcommand{\cG}{{\mathcal G}}
\newcommand{\cM}{{\mathcal M}}
\newcommand{\half}{\frac{1}{2}}
\newcommand{\Mbar}{\overline{\cM}}
\newcommand{\be}{\begin{equation}}
\newcommand{\ee}{\end{equation}}
\newcommand{\bea}{\begin{eqnarray}}
\newcommand{\ben}{\begin{eqnarray*}}
\newcommand{\een}{\end{eqnarray*}}
\newcommand{\eea}{\end{eqnarray}}
\DeclareMathOperator{\Aut}{Aut}
\DeclareMathOperator{\val}{val}
\DeclareMathOperator{\Res}{Res}
\definecolor{yellow}{rgb}{1,1,0}
\definecolor{orange}{rgb}{1,.7,0}
\definecolor{red}{rgb}{1,0,0}
\definecolor{green}{rgb}{0,1,1}
\definecolor{white}{rgb}{1,1,1}
\definecolor{A}{rgb}{.75,1,.75}
\theoremstyle{remark}
\newtheorem{Remark}{Remark}[section]
\begin{document}

\newtheorem{myDef}{Definition}
\newtheorem{thm}{Theorem}
\newtheorem{eqn}{equation}

\title[Topological 1D Gravity and KP Hierarchy]
{Topological 1D Gravity, KP Hierarchy, and\\
Orbifold Euler Characteristics of $\overline{\cM}_{g,n}$}

\author{Zhiyuan Wang}
\address{School of Mathematical Sciences\\
Peking University\\Beijing, 100871, China}
\email{zhiyuan19@math.pku.edu.cn}

\author{Jian Zhou}
\address{Department of Mathematical Sciences\\
Tsinghua University\\Beijing, 100084, China}
\email{jianzhou@mail.tsinghua.edu.cn}

\begin{abstract}

In this work we study the tau-function $Z^{1D}$ of the KP hierarchy specified by the topological 1D gravity.
As an application,
we present two types of algorithms to compute the orbifold Euler characteristics of $\overline\cM_{g,n}$.
The first is to use (fat or thin) topological recursion formulas
emerging from the Virasoro constraints for $Z^{1D}$;
and the second is to use a formula for the connected $n$-point functions
of a KP tau-function in terms of its affine coordinates on the Sato Grassmannian.
This is a sequel to an earlier work \cite{wz2}.

\end{abstract}

\maketitle


\section{Introduction}

The famous Witten Conjecture/Kontsevich Theorem \cite{wi, ko}
claims that a certain generating series of intersection numbers of $\psi$-classes on
the moduli spaces $\Mbar_{g,n}$ of stable curves \cite{dm, kn}
is a solution to the KdV hierarchy.
This striking theorem suggests physicists and mathematicians
to study problems in physics and geometry using techniques in integrable systems.
In this work,
we will apply the techniques in integrable systems
into another geometric problem concerning the moduli spaces $\Mbar_{g,n}$.
We show how to compute the orbifold Euler characteristics of $\Mbar_{g,n}$
using the formalism of emergent geometry of the KP hierarchy.

First let us recall some backgrounds about the orbifold Euler characteristics of $\Mbar_{g,n}$.
It is known in literatures that $\chi(\Mbar_{g,n})$
is a summation over all connected stable graphs of genus $g$ with $n$ external edges
\cite{bh}:
\be
\label{eq-intro-bh}
\chi(\Mbar_{g, n})=n!\cdot
\sum_{\Gamma \in \cG_{g, n}^{c}}
\frac{1}{|\Aut(\Gamma)|} \prod_{v \in V(\Gamma)}
\chi(\cM_{g_{v}, \val_{v}}),
\qquad 2 g-2+n>0,
\ee
where $\cG_{g,n}^c$ denotes the set of all such graphs,
and the orbifold Euler characteristics of $\cM_{g,n}$ is given by the famous Harer-Zagier formula
\cite{hz, pe}:
\be
\label{eq-intro-hz}
\chi(\cM_{g, n})=(-1)^{n} \cdot \frac{(2 g-1) B_{2 g}}{(2 g) !}(2 g+n-3) !, \qquad 2 g-2+n>0.
\ee
It is worthwhile mentioning that in the works \cite{hz, pe},
both Harer-Zagier and Penner have used techniques inspired by physics
(summation over fat graphs, Hermitian matrix models, etc.) to compute $\chi(\cM_{g,n})$.
However,
using \eqref{eq-intro-bh} to compute $\chi(\Mbar_{g, n})$ directly is unpractical
due to the complexity of writing down all the graphs without missing or repeating
any graph.
Furthermore,
the computation of the order of the automorphism group of each graph is also a formidable task.
The number of graphs grows rapidly when $(g,n)$ grows.
The exponential of the generating series of the graph sum formula \eqref{eq-intro-bh}
can be recast into a formal integral formula,
but expanding a formal integral and then taking logarithm is no less harder.
See \cite{ma, ma2} for some results in the case $g=0$,
and \cite{dn} for relating $\chi(\Mbar_{g,n})$ to a problem of counting lattice points in $\Mbar_{g,n}$.

In an earlier work \cite{wz2},
we have derived two types of recursions for $\chi(\Mbar_{g,n})$ which allow us
to compute all these numbers and derive some closed formulas,
using a formalism called the abstract QFT for stable graphs developed in \cite{wz1}.
Such a formalism is inspired by the BCOV holomorphic anomaly equations \cite{bcov, bcov1}.
A byproduct in \cite{wz2} is the following result which relates the computation of $\chi(\Mbar_{g,n})$ to
the topological 1D gravity and the KP hierarchy (see \cite[\S 6]{wz2}):
\begin{Theorem}
[\cite{wz2}]
Let $y,z$ be two formal variable and denote by $\chi(y,z)$ the following generating series
of the orbifold Euler characteristics of $\Mbar_{g,n}$:
\be
\label{eq-intro-gen-chibar}
\chi(y,z):=
\sum_{2g-2+n>0} \frac{y^n z^{2-2g}}{n!} \cdot \chi(\Mbar_{g,n})
-\widetilde V_0(y,z),
\ee
then we have:
\be
\chi(y,z) = F^{1D}(\bm t)|_{t_n = \widetilde V_{n+1}(y,z),n\geq 0} ,
\ee
where $F^{1D}$ is the free energy of the tau-function of the KP hierarchy specified by the topological 1D gravity,
and $\widetilde V_n(y,z)$ are the following formal series:
\be
\label{eq-intro-Vn}
\widetilde V_n(y,z):=
-\half y^2 z^2\cdot\delta_{n,0} + yz\cdot \delta_{n,1} +
\sum_{\substack{g\geq 0,\text{ } g>\half (2-n)}}
\chi(\cM_{g,n})^{2-2g-n}
\ee
for every $n\geq 0$,
and $\chi(\cM_{g,n})$ are given by the Harer-Zagier formula \eqref{eq-intro-hz}.
\end{Theorem}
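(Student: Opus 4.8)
The plan is to prove the identity by matching the Feynman-diagram expansions of the two sides: both $F^{1D}$ and the right-hand side of the graph-sum formula \eqref{eq-intro-bh} are sums over connected graphs, and the substitution $t_n=\widetilde V_{n+1}$ is precisely the dictionary between them. First I would recall from \cite{wz2} the graph-sum expansion of the free energy of the topological 1D gravity. Since this model is governed by a formal one-dimensional Gaussian-type integral, $F^{1D}$ is the sum over connected vacuum Feynman diagrams
\[
F^{1D}(\bm t)=\sum_{\Gamma}\frac{1}{|\Aut(\Gamma)|}\prod_{v\in V(\Gamma)}t_{\val_v-1},
\]
where $\Gamma$ runs over connected graphs, every internal edge carries the constant propagator normalized to $1$, and a vertex of valence $k$ carries the time $t_{k-1}$. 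Under the substitution this means a valence-$k$ vertex acquires the weight $t_{k-1}=\widetilde V_k$.

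Next I would expand each factor $\widetilde V_k=\sum_{2g-2+k>0}\chi(\cM_{g,k})z^{2-2g-k}+yz\,\delta_{k,1}$ (the only surviving special term among vertex weights, since $k=\val_v\geq1$) as a sum over the genus $g_v$ attached to the corresponding vertex; this converts each Feynman diagram into a genus-decorated graph. The crucial bookkeeping is the power of $z$: using $\sum_v\val_v=2|E(\Gamma)|$ for a vacuum graph together with $b_1(\Gamma)=|E(\Gamma)|-|V(\Gamma)|+1$, one computes
\[
\prod_{v}z^{2-2g_v-\val_v}=z^{\,2-2(\sum_v g_v+b_1(\Gamma))}=z^{2-2g},
\]
where $g=\sum_v g_v+b_1(\Gamma)$ is exactly the genus of the decorated graph. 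Thus the trivial propagator contributes no $z$, and the Euler-characteristic arithmetic reassembles the local vertex powers into the global factor $z^{2-2g}$ demanded by \eqref{eq-intro-gen-chibar}.

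Finally I would account for the external legs and reconcile the summation ranges. The leading term $yz$ of $\widetilde V_1=t_0$ plays the role of an external leg: a univalent genus-$0$ vertex carrying $yz$, together with its incident edge, is a marked point weighted by $y$. Under this reading a connected vacuum diagram in which a set of univalent vertices is assigned the $yz$-weight corresponds to a connected stable graph in $\cG_{g,n}^{c}$ with $n$ unlabeled legs; because multiplying \eqref{eq-intro-bh} by $y^nz^{2-2g}/n!$ cancels the $n!$ and leaves a sum with weight $1/|\Aut(\Gamma)|$ over graphs with unlabeled legs, the symmetry factors are expected to correspond (the careful check is discussed below). The genuine stable vertices are produced by the $g_v\geq1$ part of $\widetilde V_1$ and by $\widetilde V_k$ for $k\geq2$, reproducing $\prod_v\chi(\cM_{g_v,\val_v})$. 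The two unstable strata are handled by the remaining special terms: the diagram formed by a single edge joining two $yz$-caps (symmetry factor $\tfrac12$) produces the extra $\tfrac12 y^2z^2$ associated with the unstable $\cM_{0,2}$, while the absence of a time $t_{-1}$ forbids valence-$0$ vertices and hence omits the isolated-vertex contributions $\sum_{g\geq2}\chi(\cM_{g,0})z^{2-2g}$ to $\chi(\overline\cM_{g,0})$; both discrepancies are absorbed exactly by subtracting $\widetilde V_0=-\tfrac12 y^2z^2+\sum_{g\geq2}\chi(\cM_{g,0})z^{2-2g}$, yielding $F^{1D}|_{t_n=\widetilde V_{n+1}}=\chi(y,z)$.

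The main obstacle is the leg-and-automorphism bookkeeping in this last step. One must verify carefully that the symmetry factor $1/|\Aut(\Gamma)|$ of each vacuum diagram, in which the $yz$-caps are indistinguishable and may be permuted by automorphisms, agrees with the factor inherited from $n!/|\Aut|$ for the stable graph with unlabeled legs, and that the exceptional pairs $(g,n)=(0,1)$ and $(0,2)$ are counted exactly once by the $yz$ and $-\tfrac12 y^2z^2$ terms respectively. This is the standard but delicate matching of external-leg symmetry factors in passing between a vacuum partition function and its generating function with sources; once it is settled, the genus and leg gradings computed above complete the proof.
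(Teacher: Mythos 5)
Your proposal is correct in outline, but it takes a genuinely different route from the paper. The paper does not work at the level of graphs at all: it quotes Bini--Harer's integral representation \eqref{eq-generating-chibar} of $\exp\big(\sum \frac{1}{n!}\chi(\Mbar_{g,n})y^nz^{2-2g}\big)$ as a one-dimensional Gaussian integral, expands $-\half(x-yz)^2=-\half x^2+yzx-\half y^2z^2$, and observes that the integrand then coincides with that of $Z^{1D}$ in \eqref{eq-1d-partitionZ-t} (at $\lambda=1$) under $t_n=\widetilde V_{n+1}$, up to the constant prefactor $e^{\widetilde V_0}$; taking logarithms gives the theorem in two lines. All symmetry-factor bookkeeping is thereby delegated to Wick's theorem, which has already been applied on both sides. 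You instead match the two graph sums \eqref{eq-intro-bh} and \eqref{eq-1D-FR-F} directly, which is more self-contained (it does not presuppose the integral form of the Bini--Harer formula) and makes the genus grading $\prod_v z^{2-2g_v-\val_v}=z^{2-2g}$ and the role of the two unstable corrections $yz$ and $-\half y^2z^2$ completely transparent; your accounting of the missing valence-zero vertex (absorbing $V_0(z)$) and of the two-cap edge (absorbing $-\half y^2z^2$) is exactly right.

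The price of your route is the step you flag but do not execute: the identification of connected vacuum diagrams carrying $n$ indistinguishable $yz$-caps with stable graphs in $\cG^c_{g,n}$ carrying $n$ legs, including the claim that $1/|\Aut(\Gamma)|$ on the vacuum side reproduces $n!/(n!\,|\Aut(\Gamma)|)$ on the stable-graph side. This requires fixing the convention for $\Aut(\Gamma)$ in \eqref{eq-intro-bh} (whether legs are labelled and whether automorphisms may permute them) and checking that caps attached to the same vertex, or exchanged by a graph automorphism, are counted consistently. This is standard source-versus-vacuum bookkeeping and will go through, but as written your argument asserts rather than proves it; if you want to avoid it entirely, the paper's integral comparison is the shorter path.
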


Then one may expect to use techniques from integrable systems to derive some new algorithms to compute $\chi(\Mbar_{g,n})$.
This is exactly what we do in this work.

The main body of the present paper consists of two parts.
In the first part (\S \ref{sec-1D-KP}-\S \ref{sec-mainresult}),
we focus on the topological 1D gravity and KP hierarchy,
see \cite{ny, zhou6} and \cite{jm, djm, sa} for introductions of these two topics respectively.
The partition function of the topological 1D gravity is the following $1$-dimensional formal integral:
\be
Z^{1D}(\bm t)
:=\frac{1}{(2\pi\lambda^2)^\frac{1}{2}}\int dx
\exp\bigg[\frac{1}{\lambda^2}
\bigg(-\frac{1}{2}x^2+\sum_{n\geq 1}t_{n-1}\frac{x^n}{n!}\bigg)
\bigg],
\ee
involving infinite many coupling constants $\bm t=(t_0,t_1,t_2,\cdots)$.
It is the special case $N=1$ of the Hermitian one-matrix models,
which allows Feynman expansions by both fat and thin graphs (see \cite{zhou9}).

The partition function $Z^{1D}$
is a tau-function of the KP hierarchy (see \cite{ny2}),
and in the first part of this work
we will give full descriptions of this tau-function in Kyoto School's three different pictures:
\begin{itemize}
\item[1)]
On the Sato Grassmannian,
$Z^{1D}$ is described by the subspace $U^{1D}:=\text{span}\{f_n^{1D}(z)\}_{n\geq 0}$ of
$H:=\bC [z]\oplus z^{-1} \bC [[z^{-1}]]$
where (see \S \ref{sec-1d-affinecoord}):
\ben
f_n^{1D}(z)=
\begin{cases}
1 +\sum\limits_{k\geq 0} (2k+1)!! \cdot z^{-2k-2}, & n=0;\\
z^n, & n>0,
\end{cases}
\een
whose coefficients are called the affine coordinates.
Let $Q$ be the following differential operator:
\ben
Q=\pd_z +z -z^{-1},
\een
then
$Q_n:= -z^{n+1}\cdot Q$ is a Kac-Schwarz operator associated to $U^{1D}$ for every $n\geq -1$.
Moreover,
the Virasoro constraints for $Z^{1D}$ are given by these $\{Q_n\}_{n\geq -1}$
via the boson-fermion correspondence
(see \S \ref{sec-KS}-\S \ref{sec-KS-Virasoro}).

\item[2)]
In the fermionic picture,
$Z^{1D}$ is given by (see \S \ref{sec-fermi-1D}):
\begin{equation*}
Z^{1D}=
|0\rangle +\sum_{k=0}^\infty (2k+1)!! \cdot \psi_{-2k-\frac{3}{2}} \psi_{-\half}^* |0\rangle.
\end{equation*}
where $\psi_r$, $\psi_s^*$ are fermionic operators and $|0\rangle$ is the fermionic vacuum.

\item[3)]
In the bosonic picture,
$Z^{1D}$ can be described using Schur functions $s_{\mu}$ or the complete symmetric functions $h_k$ in a simple way
(see \S \ref{sec-fermi-1D}):
\begin{equation*}
Z^{1D} (\bm t)=1+\sum_{k=0} ^\infty (2k+1)!!\cdot s_{(2k+2|0)}
=1+\sum_{k=0} ^\infty (2k+1)!!\cdot h_{2k+2},
\end{equation*}
where the coupling constants $\bm t = (t_0,t_1,\cdots)$ are specialized to the Newton symmetric functions
by $t_n=n!\cdot p_{n+1}$.
\end{itemize}

In the second part (\S \ref{sec-compute-chi}) of this work,
we apply the results on the free energy $\log Z^{1D}$ to
the problem of computations of $\chi(\Mbar_{g,n})$.
Let the specialization $\widetilde\Psi$ be the following linear map:
\begin{equation*}
\begin{split}
\widetilde\Psi :\quad
\prod_{n\geq 1} z_1^{-1}\cdots z_n^{-1} \bC[z_1^{-1},\cdots, z_n^{-1}]
&\quad\to\quad \bC[[y]]((z)),\\
z_1^{-j_1-1}\cdots z_n^{-j_n-1} &\quad\mapsto\quad
\frac{\widetilde V_{j_1}(y,z)\cdots \widetilde V_{j_n}(y,z)}{n!\cdot j_1!\cdots j_n!}
\end{split}
\end{equation*}
where $\widetilde V_j$ are the generating series \eqref{eq-intro-Vn} of $\chi(\cM_{g,n})$,
then we obtain the following algorithms to compute the generating series \eqref{eq-intro-gen-chibar}
of $\chi(\Mbar_{g,n})$
(see \S \ref{sec-compute-chi} for details) :
\begin{Theorem}
We have:
\ben
\chi(y,z) = \widetilde \Psi \big( W^{1D} (\bm z) \big),
\een
where the total $n$-point function $W^{1D}(\bm z)$ can be computed using one of the following three approaches:
\begin{itemize}
\item[1)]
A quadratic recursion formula emerging from the thin Virasoro constraints
(derived in \cite{zhou5}, see \S \ref{eq-qrec-thin} for details);
\item[2)]
An Eynard-Orantin topological recursion emerging from the fat Virasoro constraints
(derived in \cite{zhou5}, see \S \ref{sec-toporec-fat} for details);
\item[3)]
A formula in terms of the affine coordinate of $Z^{1D}$
(see \S \ref{sec-npt-1d} for details).
\end{itemize}
\end{Theorem}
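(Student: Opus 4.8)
The plan is to reduce the asserted identity to the Theorem quoted from \cite{wz2} by recognizing the specialization $\widetilde\Psi$ as the exact combinatorial adjoint of the free-energy substitution $t_n\mapsto\widetilde V_{n+1}(y,z)$; the three recipes for $W^{1D}(\bm z)$ will then be justified as computations of one and the same object, the total connected $n$-point function of the tau-function $Z^{1D}$.

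First I would fix notation for the total $n$-point function. Writing $F^{1D}(\bm t)=\log Z^{1D}(\bm t)$ and expanding it as a formal power series in $\bm t=(t_0,t_1,\dots)$, I set
\begin{equation*}
W^{1D}(\bm z)=\sum_{n\geq 1}W^{1D}_n(z_1,\dots,z_n),\qquad
W^{1D}_n(z_1,\dots,z_n)=\sum_{m_1,\dots,m_n\geq 0}
\frac{\partial^n F^{1D}}{\partial t_{m_1}\cdots\partial t_{m_n}}\bigg|_{\bm t=0}
\prod_{i=1}^n\frac{(m_i+1)!}{z_i^{m_i+2}} .
\end{equation*}
Each $W^{1D}_n$ lies in $z_1^{-1}\cdots z_n^{-1}\bC[z_1^{-1},\dots,z_n^{-1}]$, so $W^{1D}(\bm z)$ is in the domain of $\widetilde\Psi$.

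The heart of the argument is then a term-by-term matching. Putting $j_i=m_i+1\geq 1$, so that $z_i^{-m_i-2}=z_i^{-j_i-1}$ and $(m_i+1)!=j_i!$, applying $\widetilde\Psi$ to the monomial $z_1^{-j_1-1}\cdots z_n^{-j_n-1}$ yields $\widetilde V_{j_1}\cdots\widetilde V_{j_n}/(n!\,j_1!\cdots j_n!)$; the factors $1/j_i!$ cancel the weights $j_i!$ built into $W^{1D}_n$, and the index shift identifies $\widetilde V_{j_i}=\widetilde V_{m_i+1}$ with $t_{m_i}$ under the substitution of the quoted Theorem. Summing over $n\geq 1$ and all the $m_i$ reassembles
\begin{equation*}
\widetilde\Psi\big(W^{1D}(\bm z)\big)
=\sum_{n\geq 1}\frac{1}{n!}\sum_{m_1,\dots,m_n\geq 0}
\frac{\partial^n F^{1D}}{\partial t_{m_1}\cdots\partial t_{m_n}}\bigg|_{\bm t=0}
\widetilde V_{m_1+1}\cdots\widetilde V_{m_n+1}
=F^{1D}(\bm t)\big|_{t_n=\widetilde V_{n+1}(y,z)},
\end{equation*}
which by the quoted Theorem equals $\chi(y,z)$. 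This step is pure bookkeeping; the one point demanding care is that the substitution of the infinite series $\widetilde V_{n+1}(y,z)$ be well defined, i.e.\ that each power of $y$ and $z$ receive only finitely many contributions, so the rearrangement is legitimate in $\bC[[y]]((z))$. This follows from the $y$-valuations of the $\widetilde V_n$ recorded in \eqref{eq-intro-Vn}.

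Finally I would verify that each of the three recipes computes $W^{1D}(\bm z)$. Approaches 1) and 2) are the thin and fat forms of the recursion satisfied by the $n$-point functions of a Hermitian one-matrix model; since we have already identified the Virasoro constraints of $Z^{1D}$ with the Kac--Schwarz operators $Q_n=-z^{n+1}(\partial_z+z-z^{-1})$, these recursions apply after specialization to $N=1$, and I would quote \cite{zhou5} for their derivation. Approach 3) is the general formula expressing the connected $n$-point functions of any KP tau-function through its affine coordinates on the Sato Grassmannian; substituting the explicit affine coordinates $f^{1D}_n(z)$, equivalently the numbers $(2k+1)!!$ computed earlier, yields a closed expression for $W^{1D}(\bm z)$. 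The three outputs agree because each computes the connected correlators of the single tau-function $Z^{1D}$. The main obstacle throughout is the consistent handling of the half-integer shifts and factorial normalizations relating the variables $z_i$, the KP times $t_n$, and the series $\widetilde V_n(y,z)$; once these are pinned down, the theorem is a formal consequence of the quoted Theorem together with the three $n$-point-function computations.
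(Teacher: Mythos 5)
Your proposal is correct and follows essentially the same route as the paper: the identity is the composition of the quoted theorem $\chi(y,z)=F^{1D}(\bm t)|_{t_n=\widetilde V_{n+1}}$ with the Taylor-expansion fact $F^{1D}(\bm t)=\Psi(W^{1D}(\bm z))$, noting that $\widetilde\Psi$ is exactly $\Psi$ followed by that substitution, and the three recipes are the cited computations of the same total connected $n$-point function. One small imprecision: the finiteness of the rearrangement rests not on the $y$-valuations alone but on the $z$-orders $\widetilde V_1=yz+O(z^{-1})$, $\widetilde V_2=O(z^{-2})$, $\widetilde V_n=O(z^{-(n-2)})$ for $n\geq 3$, which is how the paper's \S\ref{sec-compute-chi} remark bounds the number of contributing monomials for each fixed $y^nz^{2-2g}$.
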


In particular,
the third approach provides a closed formula which does not require recursive computations
(see \S \ref{sec-chi-affine}):
\begin{Theorem}
The generating series $\chi(y,z)$ is given by:
\begin{equation*}
\chi(y,z)
=\widetilde \Psi\bigg(\sum_{n\geq 1} (-1)^{n-1} \sum_{n\text{-cycles }\sigma}
\prod_{i=1}^{n}
\widehat A^{1D} (z_{\sigma(i)}, z_{\sigma(i+1)})
-\frac{1}{(z_1-z_2)^2}
\bigg),
\end{equation*}
where:
\be
\widehat A^{1D}(\xi,\eta)
=\sum_{k=0}^\infty (2k+1)!!\cdot \xi^{-1}\eta^{-2k-2}
+ \frac{1}{\xi-\eta}.
\ee
\end{Theorem}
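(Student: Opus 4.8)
The plan is to obtain the stated closed formula as the specialization to $Z^{1D}$ of a general expression for the connected $n$-point functions of an arbitrary KP tau-function in terms of its affine coordinates on the Sato Grassmannian. The preceding theorem already reduces the computation of $\chi(y,z)$ to applying the map $\widetilde\Psi$ to the total $n$-point function $W^{1D}(\bm z)$, so it suffices to produce a closed form for $W^{1D}(\bm z)$ and then transport it through $\widetilde\Psi$.

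First I would recall the general $n$-point formula of \S\ref{sec-npt-1d}. Writing the affine coordinates of a KP tau-function $Z$ as $a_{i,j}$, defined through the basis $f_i(z)=z^i+\sum_{j\geq 0}a_{i,j}z^{-j-1}$ of the corresponding point of the Sato Grassmannian, this formula expresses the connected $n$-point function of $\log Z$ as a sum over $n$-cycles,
\[
W_n^c(z_1,\dots,z_n)=(-1)^{n-1}\sum_{n\text{-cycles }\sigma}\prod_{i=1}^{n}\widehat A(z_{\sigma(i)},z_{\sigma(i+1)})-\delta_{n,2}\frac{1}{(z_1-z_2)^2},
\]
where the two-point kernel is $\widehat A(\xi,\eta)=\sum_{i,j\geq 0}a_{i,j}\xi^{-i-1}\eta^{-j-1}+\frac{1}{\xi-\eta}$. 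This identity is proved via the boson-fermion correspondence and Wick's theorem; here it could alternatively be re-derived directly from the fermionic form $Z^{1D}=\vac+\sum_{k}(2k+1)!!\,\psi_{-2k-\frac{3}{2}}\psi_{-\half}^*\vac$ recorded earlier in the paper, since the relevant vacuum expectation values factor through two-point functions.

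Second, I would read off the affine coordinates of $Z^{1D}$ from its description on the Sato Grassmannian: since $f_0^{1D}(z)=1+\sum_{k\geq 0}(2k+1)!!\,z^{-2k-2}$ and $f_n^{1D}(z)=z^n$ for $n>0$, the only nonzero affine coordinates are $a_{0,2k+1}=(2k+1)!!$. Substituting these into the kernel collapses it to exactly
\[
\widehat A^{1D}(\xi,\eta)=\sum_{k\geq 0}(2k+1)!!\,\xi^{-1}\eta^{-2k-2}+\frac{1}{\xi-\eta},
\]
which is the function in the statement. Summing the cycle formula over all $n\geq 1$ then reproduces the bracketed expression, with the single correction term $-\frac{1}{(z_1-z_2)^2}$ originating from the $n=2$ summand; applying $\widetilde\Psi$ and invoking $\chi(y,z)=\widetilde\Psi(W^{1D}(\bm z))$ completes the argument.

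The hard part is not the specialization, which is a direct substitution once the affine coordinates are in hand, but rather keeping the general $n$-point formula and its normalizations under control. In particular I would track carefully the connected-versus-total bookkeeping, the cyclic indexing convention $z_{\sigma(n+1)}:=z_{\sigma(1)}$ inside the product, and the precise location and sign of the correction term $-\frac{1}{(z_1-z_2)^2}$, since any slip there distorts the $n=2$ contribution. I would also verify that the normalization in the definition of $W_n^c$ is the one compatible with the indexing used by $\widetilde\Psi$, so that the term-by-term image of $W^{1D}(\bm z)$ assembles into $\chi(y,z)$ exactly.
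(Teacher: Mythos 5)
Your proposal is correct and follows essentially the same route as the paper: it combines the reduction $\chi(y,z)=\widetilde\Psi(W^{1D}(\bm z))$ with the decomposition $W^{1D}=\sum_{n\geq 1}G^{1D}_{(n)}$ and the specialization of Zhou's affine-coordinate formula for connected $n$-point functions to the coordinates $a^{1D}_{0,2k+1}=(2k+1)!!$, exactly as in Proposition \ref{prop-thm-npt-1d}. The points you flag for care (the convention $\sigma(n+1):=\sigma(1)$, the expansion convention for $\frac{1}{z_i-z_j}$, and the $n=2$ correction term) are precisely the normalizations the paper relies on, so nothing is missing.
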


Here let us briefly summarize all the approaches we have used to the computations of $\chi(\Mbar_{g,n})$
in \cite{wz2} and this work in the following diagram:
\begin{equation*}
\begin{tikzpicture}[scale=0.85]
\node [align=center,align=center] at (0,0.3) {graph sum formula\\ \& Harer-Zagier formula};
\draw [->] (-1.8,-0.3) -- (-2.2,-0.7);
\draw [->] (1.8,-0.3) -- (2.2,-0.7);
\node [align=center,align=center] at (-3.6,-1) {abstract QFT (\cite[\S 3]{wz2})};
\node [align=center,align=center] at (3.7,-1) {formal integral (\cite[\S 3]{wz2})};
\draw [<->] (-1.2,-1) -- (1.2,-1);
\draw [->] (-3.5,-1.3) -- (-3.5,-1.7);
\draw [->] (-1.2,-1.3) -- (0.5,-1.7);
\node [align=center,align=center] at (-3.5,-2) {two recursions (\cite[\S 3]{wz2})};
\draw [->] (-4.2,-2.3) -- (-4.3,-2.7);
\node [align=center,align=center] at (-5.25,-3.5) {differential equations,\\operator formalism,\\
etc. (\cite[\S 3]{wz2})};
\draw [->] (2.8,-1.3) -- (2.8,-1.7);
\node [align=center,align=center] at (3.5,-2) {topological 1D gravity (\cite[\S 4]{wz2})};
\draw [->] (1.6,-2.3) -- (1.5,-2.7);
\node [align=center,align=center] at (0.9,-3.25) {KP hierarchy\\(\cite[\S 6]{wz2})};
\draw [->] (3.8,-2.3) -- (3.9,-2.7);
\node [align=center,align=center] at (5,-3.25) {Virasoro constraints\\(\cite[\S 4]{wz2})};
\draw [->] (2.2,-3.5) -- (3.8,-4.2);
\draw [->] (5,-3.8) -- (5,-4.2);
\node [align=center,align=center] at (5.25,-5) {emergent geometry (\S \ref{sec-1D-KP})\\(spectral curve,\\topological recursion)};
\draw [->] (0.8,-3.8) -- (0.8,-4.2);
\node [align=center,align=center] at (0.6,-4.8) {affine coordinates (\S \ref{sec-1D-KP})\\ \& $n$-point functions (\S \ref{sec-mainresult})};
\node [align=center,align=center] at (0,-7.6) {numerical data\\ \& various formulas \\for $\chi(\Mbar_{g,n})$};
\draw [->] (-4,-4.5) -- (-2,-6.8);
\draw [->] (-3,-2.5) -- (-1.2,-6.8);
\draw [->] (0.5,-5.5) -- (0.5,-6.8);
\draw [->] (3.2,-5.8) -- (1.5,-6.8);
\node [align=center,align=center] at (0.75,-6.2) {\S \ref{sec-compute-chi}};
\node [align=center,align=center] at (2.75,-6.4) {\S \ref{sec-compute-chi}};
\end{tikzpicture}
\end{equation*}

The rest of this paper is arranged as follows.
In \S \ref{sec-1D-KP} we introduce some basics of the topological 1D gravity.
In \S \ref{sec-mainresult} we describe the tau-function $Z^{1D}$ in Kyoto School's approach
with the help of the affine coordinates
and compute the $n$-point functions.
These results will be applied to the computation of $\chi(\Mbar_{g,n})$ in \S \ref{sec-compute-chi}.
In Appendix \ref{sec-proof-comb},
we show how to calculate the affine coordinates of $Z^{1D}$.
Some numerical data for $Z^{1D}$ and $\chi(\Mbar_{g,n})$ will be presented in Appendix \ref{sec-app-data}.

\section{Topological 1D Gravity}
\label{sec-1D-KP}

In this section,
we recall some basic facts of the topological 1D gravity,
including its genus expansion and Virasoro constraints.
We also recall the emergent geometry for the topological 1D gravity.

\subsection{Topological 1D gravity and genus expansion}
\label{sec-1d-pre}

First let us recall the partition function and free energy of the topological 1D gravity
(see \cite{ny, zhou6}).

The topological 1D gravity is the special $N=1$ case of the Hermitian $N \times N$ one-matrix model.
The partition function $Z^{1D}$ is defined to be the following one-dimensional formal Gaussian integral
(regarded as a formal power series):
\be
\label{eq-1d-partitionZ-t}
Z^{1D}(\bm t)
:=\frac{1}{(2\pi\lambda^2)^\frac{1}{2}}\int dx
\exp\bigg(\frac{1}{\lambda^2}S(x)\bigg),
\ee
where the action functional $S(x)$ is defined by:
\be
\label{eq-1d-actionS-t}
S(x):=-\frac{1}{2}x^2+\sum_{n\geq 1}t_{n-1}\frac{x^n}{n!},
\ee
and $\bm t=(t_0,t_1,t_2,\cdots)$ are the coupling constants.
By directly expanding the formal integral \eqref{eq-1d-partitionZ-t},
one gets the following formula
(\cite[(116)]{zhou6}):
\be
\label{eq-expand-integral}
Z^{1D}(\bm t)= \sum_{n\geq 0}
\sum_{\sum_{j=1}^k j m_j  =2n}
\frac{(2n-1)!!}{\prod_{j=1}^k (j!)^{m_j} m_j!}
\lambda^{2n-2\sum_{j=1}^k m_j}
\cdot \prod_{j=1}^k t_{j-1}^{m_j}.
\ee

The partition function $Z^{1D}$ can be represented
as a summation over (not necessarily stable) ordinary graphs
(see \cite[(95)]{zhou6}):
\be
\label{eq-1d-partitionZ-tFeynman}
Z^{1D} (\bm t) = \sum_{\Gamma \in \cG^{or}} \frac{1}{|\Aut(\Gamma)|}
\prod_{v\in V(\Gamma)} \lambda^{\val(v)-2} t_{\val(v)-1},
\ee
where $\cG^{or}$ is the set of all ordinary graphs
(not necessarily connected,
and not necessarily stable),
$V(\Gamma)$ is the set of vertices of the graph $\Gamma$,
$\Aut(\Gamma)$ is the group of automorphisms of $\Gamma$,
and $\val(v)$ is the valence of the vertex $v\in V(\Gamma)$.
One can use \eqref{eq-expand-integral} or \eqref{eq-1d-partitionZ-tFeynman}
to write down first a few terms of $Z^{1D}$:
\begin{equation*}
\begin{split}
Z^{1D}(\bm t)=& 1 + (\half \lambda^{-2}t_0^2 + \half t_1)
+(\frac{1}{8} \lambda^{-4} t_0^4 +\frac{3}{4} \lambda^{-2} t_0^2 t_1 +\frac{3}{8}t_1^2
+\half t_0t_2 +\frac{1}{8}\lambda^2 t_3)\\
&+(\frac{1}{48}\lambda^{-6}t_0^6 + \frac{5}{16}\lambda^{-4}t_0^4 t_1 + \frac{15}{16}\lambda^{-2}t_0^2 t_1^2 +
\frac{5}{12}\lambda^{-2}t_0^3 t_2 + \frac{5}{16}t_1^3 + \frac{5}{4}t_0t_1t_2 \\
&+\frac{5}{16}t_0^2 t_3 + \frac{1}{8}\lambda^{2}t_0t_4 + \frac{5}{16}\lambda^{2}t_1t_3 +
\frac{5}{24}\lambda^{2}t_2^2 + \frac{1}{48}\lambda^{4}t_5)
+\cdots.
\end{split}
\end{equation*}

\begin{Remark}
A straightforward corollary of the above two formulas
\eqref{eq-expand-integral} and \eqref{eq-1d-partitionZ-tFeynman}
are the following well-known result of graph-counting:
\be
\label{eq-graphcounting}
\frac{(2n-1)!!}{\prod_{j=1}^k (j!)^{m_j}\cdot m_j!}
=\sum_{\Gamma} \frac{1}{|\Aut(\Gamma)|},
\ee
where the summation on the right-hand side is over all (not necessarily connected)
graphs with $m_i$ vertices of valence $i$.
This formula will be useful in Appendix \ref{sec-proof-comb}.
\end{Remark}

By \eqref{eq-1d-partitionZ-tFeynman},
the free energy $F^{1D}(\bm t):= \log Z^{1D}(\bm t)$ has the genus expansion:
\be
\label{eq-genusexp-F}
F^{1D}(\bm t)=\sum_{g=0}^\infty \lambda^{2g-2}F_g^{1D}(\bm t),
\ee
where the free energy $F_g^{1D}$ of genus $g$ is the following summation:
\be
\label{eq-1D-FR-F}
F_g^{1D}(\bm t)
=\sum_{\substack{\Gamma: \text{ connected,}\\ \text{genus}(\Gamma)=g }}\frac{1}{|\Aut(\Gamma)|}
\prod_{v\in V(\Gamma)}t_{\val(v)-1}.
\ee
The `genus' of a connected graph is the number of independent loops in it.
By
\ben
1-\text{genus} = |\{\text{vertices}\}|- |\{\text{edges}\}|
\een
one easily finds that the coefficient of a term $\lambda^{2g-2}t_{a_1}\cdots t_{a_n}$ in $F^{1D}$ is nonzero
only if the following selection rule holds
(\cite[(119)]{zhou6}):
\be
\label{eq-selection-1d}
\sum_{i=1}^n a_i = 2g-2+n.
\ee
Using the above graph sum formula,
one can write down the first few terms of $F^{1D}$:
\begin{equation*}
\begin{split}
F^{1D}(\bm t) =& \lambda^{-2}
(\half t_0^2 + \half t_0^2 t_1 +\half t_0^2t_1^2 +\frac{1}{6}t_0^3 t_2
+\half t_0^2t_1^3+ \half t_0^3t_1t_2+\cdots)\\
&+(\half t_1 + \half t_0 t_2 +\frac{1}{4}t_1^2 +\frac{1}{6}t_1^3+\frac{1}{4}t_0^2t_3
+t_0t_1t_2 +\cdots)\\
&+\lambda^2(\frac{1}{8}t_3+\frac{5}{24}t_2^2 + \frac{1}{8}t_0t_4 +\frac{1}{4}t_1t_3+\cdots)
+\cdots\cdots.
\end{split}
\end{equation*}

\begin{Remark}
Unfortunately,
neither taking the logarithm of \eqref{eq-expand-integral} nor using the Feynman graph expansion \eqref{eq-1D-FR-F}
is efficient for the computation of the coefficients of $\lambda^{2g-2}t_{a_1}\cdots t_{a_n}$ in $F^{1D}(\bm t)$
when $g$ or $a_1,\cdots,a_n$ are large.
In \cite[\S 7]{wz1},
we have developed a simple quadratic recursion to compute $F_g^{1D}$ in the renormalized coordinates $\{I_k\}$
(see \cite{iz, zhou6}).
The correlators of $F^{1D}$ in the original coupling constants $\{t_k\}$
can be solved recursively using the Virasoro constraints or
the topological recursion emerging from the Virasoro constraints
\cite{zhou5}.
In this work we will give a formula for the connected $n$-point functions
using affine coordinates in \S \ref{sec-mainresult}.
\end{Remark}

\subsection{Flow equations, polymer equation, and Virasoro constraints}
\label{sec-pre-Virasoro}

Now we recall the flow equations, polymer equation,
and Virasoro constraints for $Z^{1D}$.

First by taking partial derivatives directly to
the formal Gaussian integral \eqref{eq-1d-partitionZ-t},
one is able to derive the flow equations and polymer equation for $Z^{1D}$:
\begin{Theorem}
[\cite{ny, zhou6}]
\label{thm-1D-flow&polymer}
The partition function $Z^{1D}$ satisfies the flow equations:
\be
\label{eq-1d-flow}
\frac{\pd}{\pd t_n} Z^{1D}=
\frac{\lambda^{2n}}{(n+1)!}\frac{\pd^{n+1}}{\pd t_0^{n+1}}
Z^{1D},
\qquad \forall n\geq 0,
\ee
and the polymer equation:
\be
\label{eq-1d-polymer}
\sum_{n\geq 0}\frac{t_n-\delta_{n,1}}{n!}\lambda^{2n}
\frac{\pd^n}{\pd t_0^n} Z^{1D}=0.
\ee
\end{Theorem}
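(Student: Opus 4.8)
The plan is to derive both families of identities directly from the integral representation \eqref{eq-1d-partitionZ-t}, using nothing beyond differentiation under the integral sign and a formal integration by parts. The computations are elementary; the only point requiring care is that $Z^{1D}$ is a \emph{formal} Gaussian integral, defined by the power series \eqref{eq-expand-integral}, so each analytic manipulation must ultimately be read as a term-by-term statement about that series.

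The single observation underlying everything is an \emph{insertion lemma}. Since $\pd S(x)/\pd t_n = x^{n+1}/(n+1)!$ by \eqref{eq-1d-actionS-t}, differentiating the integrand $\exp(\lambda^{-2}S(x))$ with respect to $t_n$ multiplies it by $\lambda^{-2}x^{n+1}/(n+1)!$. In particular $\pd_{t_0}$ multiplies the integrand by $\lambda^{-2}x$, and because $x$ is independent of $t_0$, iterating gives
\be
\frac{\pd^{k}}{\pd t_0^{k}}\exp\Big(\frac{1}{\lambda^2}S(x)\Big)
=\frac{x^{k}}{\lambda^{2k}}\exp\Big(\frac{1}{\lambda^2}S(x)\Big),
\qquad k\geq 0.
\ee
Equivalently, inside the formal integral the insertion of a monomial $x^{k}$ is the same as applying the operator $\lambda^{2k}\pd_{t_0}^{k}$ to $Z^{1D}$.

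For the flow equations I would compute $\pd_{t_n}Z^{1D}$ by the insertion lemma, which produces $\frac{1}{\lambda^2(n+1)!}$ times the formal integral of $x^{n+1}\exp(\lambda^{-2}S)$; the very same integral arises from $\frac{\lambda^{2n}}{(n+1)!}\pd_{t_0}^{n+1}Z^{1D}$ once $\pd_{t_0}^{n+1}$ is rewritten as insertion of $x^{n+1}/\lambda^{2(n+1)}$, and matching the two gives \eqref{eq-1d-flow}. For the polymer equation I would use that the formal integral of a total $x$-derivative vanishes, i.e. $\int dx\,\tfrac{d}{dx}\exp(\lambda^{-2}S(x))=0$; expanding the derivative gives $\lambda^{-2}S'(x)\exp(\lambda^{-2}S)$, and from \eqref{eq-1d-actionS-t} one finds $S'(x)=\sum_{k\geq 0}(t_k-\delta_{k,1})x^k/k!$, the $-x$ from the quadratic term being exactly the $k=1$ correction $-\delta_{k,1}$. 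Converting each insertion $x^k$ back into $\lambda^{2k}\pd_{t_0}^{k}$ via the insertion lemma turns this vanishing into precisely \eqref{eq-1d-polymer}.

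The main thing to justify, rather than a genuine obstacle, is the legitimacy of these operations for a formal integral. Both differentiation under the integral sign and the integration-by-parts identity reduce, after expanding $\exp\big(\lambda^{-2}\sum_{n\geq 1} t_{n-1}x^n/n!\big)$ in the coupling constants, to term-by-term statements about the Gaussian moments $\frac{1}{(2\pi\lambda^2)^{1/2}}\int x^{2m}\exp(-x^2/(2\lambda^2))\,dx=(2m-1)!!\,\lambda^{2m}$ together with the vanishing of odd moments; for these genuine integrals the boundary terms vanish, so the formal identities hold. Once this bookkeeping is in place the two parts of the theorem follow with no further input.
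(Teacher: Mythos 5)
Your proposal is correct and follows essentially the same route the paper indicates: the paper's own justification is precisely "taking partial derivatives directly to the formal Gaussian integral," i.e.\ the insertion of $x^{n+1}/(n+1)!$ via $\pd_{t_n}$ matched against iterated $\pd_{t_0}$ for the flow equations, and the vanishing of the formal integral of a total derivative (integration by parts, justified term-by-term on Gaussian moments) for the polymer equation. Your computations of $\pd S/\pd t_n$ and $S'(x)=\sum_{k\geq 0}(t_k-\delta_{k,1})x^k/k!$ check out, and the remark about reducing everything to the moment identities of the genuine Gaussian integral is exactly the right way to legitimize the formal manipulations.
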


From the flow equations and the polymer equation,
one can further derive the following Virasoro constraints
for the topological 1D gravity:
\begin{Theorem}
[\cite{ny, zhou6}]
\label{thm-1d-Virasoro-t}
The partition function $Z^{1D}$ satisfies:
\be
\label{eq-1d-Virasoro-t}
L_m (Z^{1D}) =0,\qquad
\forall  m\geq -1,
\ee
where the Virasoro operators $\{L_m\}$ are defined by:
\be
\begin{split}
&L_{-1}=\frac{t_0}{\lambda^2}+
\sum_{n\geq 1}(t_n-\delta_{n,1})\frac{\pd}{\pd t_{n-1}},\\
&L_0=1+\sum_{n\geq 0}(n+1)(t_n-\delta_{n,1})\frac{\pd}{\pd t_{n}},\\
&L_m= \lambda^2\cdot (m+1)! \frac{\pd}{\pd t_{m-1}}
+\sum_{n\geq 1} (t_{n-1}-\delta_{n,2})
\frac{(m+n)!}{(n-1)!}\frac{\pd}{\pd t_{m+n-1}},
\quad m\geq 1.
\end{split}
\ee
Moreover,
the operators $\{L_m\}_{m\geq -1}$ satisfy the Virasoro commutation relations:
\be
[L_m,L_n]=(m-n)L_{m+n},\qquad \forall  m,n\geq -1.
\ee
\end{Theorem}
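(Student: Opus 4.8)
The plan is to derive both statements from the flow equations~\eqref{eq-1d-flow} and the polymer equation~\eqref{eq-1d-polymer}. Write $\partial_k := \partial/\partial t_k$ and let
\[
\mathcal{P} := \sum_{n\geq 0}\frac{t_n-\delta_{n,1}}{n!}\lambda^{2n}\partial_0^{\,n}
\]
denote the polymer operator, so that \eqref{eq-1d-polymer} reads $\mathcal{P}(Z^{1D})=0$. The guiding observation is that the flow equations let me trade every time derivative acting on $Z^{1D}$ for a power of $\partial_0$, namely $\partial_k Z^{1D}=\frac{\lambda^{2k}}{(k+1)!}\partial_0^{\,k+1}Z^{1D}$. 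Substituting this inside each $L_m$ should collapse the three formally different expressions for $L_{-1}$, $L_0$, and $L_m$ ($m\geq 1$) into a single uniform $t_0$-derivative of the polymer equation.

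Concretely, I expect to establish the unified identity $L_m(Z^{1D}) = \lambda^{2m}\,\partial_0^{\,m+1}\big(\mathcal{P}(Z^{1D})\big)$ for all $m\geq -1$ (with $\partial_0^{\,0}=\mathrm{id}$). For $m\geq 1$ the substitution turns the sum $\sum_{n\geq1}(t_{n-1}-\delta_{n,2})\tfrac{(m+n)!}{(n-1)!}\partial_{m+n-1}$ into $\sum_{k\geq 0}(t_k-\delta_{k,1})\tfrac{\lambda^{2(m+k)}}{k!}\partial_0^{\,m+k+1}$ after reindexing $k=n-1$, where $\delta_{n,2}=\delta_{k,1}$ reproduces exactly the polymer shift, while the leading piece $\lambda^2(m+1)!\,\partial_{m-1}$ becomes $\lambda^{2m}(m+1)\partial_0^{\,m}$. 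The crucial bookkeeping point is that only the $n=0$ summand of $\mathcal{P}$ carries a $t_0$-dependent coefficient (namely $t_0$ itself), so applying $\partial_0^{\,m+1}$ produces, through $\partial_0(t_0)=1$ and $\partial_0^{\,j}(t_0)=0$ for $j\geq 2$, the Leibniz correction $(m+1)\partial_0^{\,m}Z^{1D}$ in addition to $t_0\partial_0^{\,m+1}Z^{1D}$. Scaled by $\lambda^{2m}$ this correction matches exactly the inhomogeneous leading term of $L_m$, and likewise yields the constant $+1$ in $L_0$ and the $t_0/\lambda^2$ in $L_{-1}$. Once the identity holds, $\mathcal{P}(Z^{1D})=0$ forces $L_m(Z^{1D})=0$.

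The commutation relations are an algebraic identity among the $\{L_m\}$ alone, independent of $Z^{1D}$. Each operator has the shape $L_m=\sum_k a^{(m)}_k(\bm t)\,\partial_k + c_m$ with $a^{(m)}_k$ affine-linear in $\bm t$, so $[L_m,L_n]$ is again first order (the pure second-order parts cancel automatically), and I would verify $[L_m,L_n]=(m-n)L_{m+n}$ by matching the coefficient of each $\partial_k$ and the constant term. A cleaner conceptual route, consistent with the later Kac--Schwarz description via the operators $Q_n=-z^{n+1}Q$, is to realize $\{L_m\}_{m\geq -1}$ as the image of the Witt-algebra generators $\ell_m=-z^{m+1}\partial_z$ under bosonization; since $[\ell_m,\ell_n]=(m-n)\ell_{m+n}$, the relations are inherited provided the representation carries no anomaly in this range.

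I expect the main obstacle to be twofold. In the constraints, the delicate part is the Leibniz bookkeeping above: one must check that the single $t_0$-dependent coefficient in $\mathcal{P}$ generates \emph{exactly} the inhomogeneous pieces distinguishing $L_{-1}$, $L_0$, and $L_{m\geq 1}$, and that the reindexing aligns $\delta_{n,2}$ with the polymer shift $\delta_{k,1}$. In the commutation relations, the index manipulation is routine except at $m+n=0$, where a Virasoro central term proportional to $m^3-m$ could in principle appear; the content of the stated identity is that no such anomaly survives beyond $(m-n)L_{m+n}$ for these particular operators (so that $\{L_k\}_{k\geq -1}$ closes into a Witt subalgebra), and confirming its absence is the one computation I would carry out with care.
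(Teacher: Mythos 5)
Your proposal is correct and follows exactly the route the paper indicates (and that the cited references \cite{ny, zhou6} carry out): apply $\partial_0^{m+1}$ to the polymer equation, pick up the Leibniz term $(m+1)\partial_0^m Z^{1D}$ from the single $t_0$-dependent coefficient, and convert powers of $\partial_0$ back to single $t$-derivatives via the flow equations, which yields precisely $L_m(Z^{1D})=\lambda^{2m}\partial_0^{m+1}\bigl(\mathcal{P}(Z^{1D})\bigr)$ for all $m\geq -1$. One minor caution on your alternative route for the commutators: the bosonization/Kac--Schwarz identification in \S \ref{sec-KS-Virasoro} produces the \emph{reformulated} operators $L_m^{1D}$ of \eqref{eq-1d-Virasoro-opr-t}, which differ from the $L_m$ of this theorem as differential operators (they agree only on $Z^{1D}$), so for the commutation relations as stated the direct coefficient check is the one to use; it closes without anomaly since $m+n=0$ with $m,n\geq -1$ forces $|m|\leq 1$, where $m^3-m=0$.
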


\subsection{A reformulation of the Virasoro constraints using flow equations}
\label{sec-Virasoro-new}

In this subsection,
we reformulate the Virasoro constraints \eqref{eq-1d-Virasoro-t} in another fashion.
We will need this new version of Virasoro constraints in the next section.

Using the flow equations \eqref{eq-1d-flow},
we easily see that for every $1\leq m_1\leq m-1$ and $m_2:=m-m_1$,
we always have:
\be
\label{eq-cor-flow}
\lambda^2\cdot m!\frac{\pd}{\pd t_{m-1}} Z^{1D}=
\lambda^{2m}\frac{\pd^m}{\pd t_0^m} Z^{1D} =
\lambda^4\cdot m_1!m_2! \frac{\pd^2}{\pd t_{m_1-1}\pd t_{m_2-1}}
Z^{1D},
\ee
thus:
\begin{equation*}
\begin{split}
(m+1)! \frac{\pd Z^{1D}}{\pd t_{m-1}}
=&\bigg(\frac{m+3}{2} + \frac{m-1}{2}\bigg) \cdot m! \frac{\pd Z^{1D}}{\pd t_{m-1}} \\
=&\frac{m+3}{2}\cdot m! \frac{\pd Z^{1D}}{\pd t_{m-1}}
+\half \sum_{\substack{m_1+m_2=m\\m_1,m_2\geq 1}} \lambda^2\cdot m_1!m_2!
\frac{\pd^2 Z^{1D}}{\pd t_{m_1-1}\pd t_{m_2-1}}.
\end{split}
\end{equation*}
Using this relation,
we can reformulate the Theorem \ref{thm-1d-Virasoro-t} in the following way:
\begin{Theorem}
$Z^{1D}$ satisfies the following Virasoro constraints:
\be
L_{m}^{1D}(Z^{1D})=0,
\qquad\forall m\geq -1,
\ee
where the new Virasoro operators $\{L_{m}^{1D}\}_{m\geq -1}$ are defined by:
\be
\label{eq-1d-Virasoro-opr-t}
\begin{split}
L_{-1}^{1D}=&\frac{t_0}{\lambda^2}+
\sum_{n\geq 1}(t_n-\delta_{n,1})\frac{\pd}{\pd t_{n-1}},\\
L_0^{1D}=&1+\sum_{n\geq 0}(n+1)(t_n-\delta_{n,1})\frac{\pd}{\pd t_{n}},\\
L_m^{1D}:=& \frac{m+3}{2}\lambda^2 \cdot m!\cdot \frac{\pd}{\pd t_{m-1}}
+\half \lambda^4\cdot \sum_{\substack{m_1+m_2=m\\m_1,m_2\geq 1}}
m_1!\cdot m_2!\cdot \frac{\pd^2}{\pd t_{m_1-1}\pd t_{m_2-1}} \\
&+\sum_{n\geq 0}\frac{(m+n+1)!}{n!}(t_{n}-\delta_{n,1})\frac{\pd}{\pd t_{m+n}},
\qquad m\geq 1.
\end{split}
\ee
Moreover,
one still has $[L_m^{1D},L_n^{1D}]=(m-n)L_{m+n}^{1D}$ for all $m,n\geq -1$.
\end{Theorem}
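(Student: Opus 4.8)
The plan is to derive both assertions from the original Virasoro constraints of Theorem \ref{thm-1d-Virasoro-t} together with the flow equations \eqref{eq-1d-flow}, handling the annihilation of $Z^{1D}$ and the Lie-algebra relations by rather different arguments. For $m\in\{-1,0\}$ the operators are literally unchanged, $L_m^{1D}=L_m$, so $L_m^{1D}(Z^{1D})=0$ is just Theorem \ref{thm-1d-Virasoro-t} and there is nothing to prove; the whole content lies in the range $m\geq 1$. There I would first record the difference operator $D_m:=L_m^{1D}-L_m$. After rewriting the last sum of the original $L_m$ by the index shift $n\mapsto n+1$, its transport term $\sum_{n\geq 0}\frac{(m+n+1)!}{n!}(t_n-\delta_{n,1})\frac{\pd}{\pd t_{m+n}}$ coincides with the corresponding term of $L_m^{1D}$ and cancels, leaving the constant-coefficient operator
\begin{equation*}
D_m=-\frac{m-1}{2}\,m!\,\lambda^2\frac{\pd}{\pd t_{m-1}}
+\half\lambda^4\sum_{\substack{m_1+m_2=m\\m_1,m_2\geq 1}}m_1!\,m_2!\,\frac{\pd^2}{\pd t_{m_1-1}\pd t_{m_2-1}}.
\end{equation*}

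With this notation the first assertion is immediate: the identity $D_m(Z^{1D})=0$ is exactly the relation displayed just before the statement (multiplied by $\lambda^2$ and rearranged), which itself follows from the flow-equation chain \eqref{eq-cor-flow}; hence $L_m^{1D}(Z^{1D})=L_m(Z^{1D})+D_m(Z^{1D})=0$.

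For the commutation relations I would work purely at the operator level. Writing $L_k^{1D}=L_k+D_k$ with the convention $D_{-1}=D_0=0$ and expanding bilinearly gives
\begin{equation*}
[L_m^{1D},L_n^{1D}]=[L_m,L_n]+[L_m,D_n]+[D_m,L_n]+[D_m,D_n].
\end{equation*}
The term $[L_m,L_n]=(m-n)L_{m+n}$ is given, and the decisive simplification is that every $D_k$ has constant coefficients, so any two of them commute: $[D_m,D_n]=0$. Consequently the desired relation $[L_m^{1D},L_n^{1D}]=(m-n)L_{m+n}^{1D}$ reduces to the single identity
\begin{equation*}
[L_m,D_n]+[D_m,L_n]=(m-n)\,D_{m+n},
\end{equation*}
again with $D_k=0$ for $k\leq 0$. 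In each bracket only the variable-coefficient part of $L_m$ (resp.\ $L_n$) contributes, since its constant-coefficient piece commutes with the constant-coefficient $D$'s; the elementary relation $[\,t_k\frac{\pd}{\pd t_{m+k}},\frac{\pd}{\pd t_{n-1}}\,]=-\delta_{k,n-1}\frac{\pd}{\pd t_{m+k}}$ and its second-order analogue then turn these commutators into first- and second-order constant-coefficient operators, which I would match term by term against $(m-n)D_{m+n}$.

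The main obstacle is precisely this last matching. One must carry the factorial weights $\frac{(m+k+1)!}{k!}$ and the $\delta_{k,1}$ corrections through the brackets, and in particular handle the second-order part of $D_n$, whose commutator with the $t_k$-terms of $L_m$ runs over the decompositions $m_1+m_2=n$; the genuine work is to check that all these contributions collapse to exactly $(m-n)$ times the first- and second-order weights of $D_{m+n}$. I would organize the verification so that the first-order output reproduces the $\frac{\pd}{\pd t_{m+n-1}}$ coefficient of $D_{m+n}$ and the second-order output reproduces its $\sum_{a+b=m+n}$ part, and I would treat separately the degenerate cases $m+n\leq 0$ (where $D_{m+n}=0$) and $m$ or $n\in\{-1,0\}$; for instance when $n=0$ the identity becomes $[D_m,L_0]=m\,D_m$, which holds because $L_0$ acts as a grading operator under which both pieces of $D_m$ carry weight $m$.
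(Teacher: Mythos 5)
Your proof of the first assertion is essentially the paper's own: the paper's entire argument is the displayed identity preceding the theorem, which uses the flow-equation chain \eqref{eq-cor-flow} to rewrite $(m+1)!\,\pd Z^{1D}/\pd t_{m-1}$, and your statement $D_m(Z^{1D})=0$ is precisely that identity recast as the vanishing of the difference operator $D_m=L_m^{1D}-L_m$ on $Z^{1D}$. For the commutation relations, however, the paper offers no argument at this point; they are only justified later, in \S\ref{sec-KS-Virasoro}, by identifying $L_n^{1D}$ with the bosonization $\widehat{Q_n^{1D}}$ of the Kac--Schwarz operators $Q_n^{1D}=-z^{n+1}Q^{1D}$, whose Virasoro relations are a one-line computation with differential operators in $z$. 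Your direct verification through the decomposition $L_k^{1D}=L_k+D_k$ with $[D_m,D_n]=0$ is therefore a genuinely different and more elementary route: it stays entirely in the $t$-variables and avoids the representation-theoretic transport (together with the implicit check that the central extension of $\widehat{\mathfrak{gl}(\infty)}$ contributes no anomalous term), at the price of the term-by-term matching that you outline but do not carry out. That matching does close: the first-order coefficients combine via $\tfrac{n(n-1)}{2}-\tfrac{m(m-1)}{2}=-(m-n)\tfrac{m+n-1}{2}$, reproducing the $\pd/\pd t_{m+n-1}$ coefficient of $(m-n)D_{m+n}$, and after re-indexing the second-order contributions of $[L_m,D_n]$ by $c=m+a$, $d=b$ (and those of $[D_m,L_n]$ by $c=n+a$, $d=b$), the coefficient of $\lambda^4 c!\,d!\,\pd^2/\pd t_{c-1}\pd t_{d-1}$ with $c+d=m+n$, once symmetrized in $c\leftrightarrow d$, telescopes to $\tfrac{m-n}{2}$, which is exactly the second-order weight of $(m-n)D_{m+n}$. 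One caution: your blanket claim that only the variable-coefficient parts of $L_m$, $L_n$ contribute fails verbatim for $n=-1$, since $L_{-1}$ contains the multiplication operator $t_0/\lambda^2$, which does not commute with the $\pd^2/\pd t_0^2$ part of $D_m$; its contribution is precisely what cancels the $n=1$ term of the transport sum (for instance $[D_2,L_{-1}]=0=3D_1$), so the separate treatment you announce for $m,n\in\{-1,0\}$ is genuinely needed and does suffice.
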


\subsection{$Z^{1D}|_{\lambda=1}$ as a tau-function of the KP hierarchy}

It is known in Nishigaki-Yoneya \cite{ny2} that
the partition function $Z^{1D}|_{\lambda=1}$ is a tau-function
of the KP hierarchy.
In this subsection we recall some relevant results.
We will always assume $\lambda=1$ and omit `$|_{\lambda=1}$' for simplicity.

Using the flow equations and Virasoro constraints,
one can derive a cut-and-join representation for the partition function $Z^{1D}$:
\begin{Theorem}
[\cite{wz4}]
We have
$Z^{1D} = \exp(M^{1D}) (1)$,
where:
\be
\begin{split}
M^{1D}=& \half\sum_{k,n\geq 0} \frac{(n+k)!}{n!\cdot k!} t_k t_n \frac{\pd}{\pd t_{n+k-1}}
+ \half \sum_{n\geq 2}\frac{t_{n+1}}{(n+1)!}\sum_{\substack{i+j=n\\i,j\geq 1}} i!\cdot j!
\frac{\pd^2}{\pd t_{i-1}\pd t_{j-1}}\\
&+ \sum_{n\geq 0} \frac{t_{n+2}}{n+2}\cdot \frac{\pd}{\pd t_n} + \half t_0^2 + \half t_1,
\end{split}
\ee
and we use the convention $\frac{\pd}{\pd t_{-1}}:=0$.
\end{Theorem}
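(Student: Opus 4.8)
The plan is to read the claimed identity as the integrated form of a first–order ``evolution'' equation and to produce that equation from the flow, polymer, and Virasoro constraints already at hand. First I introduce the Euler-type grading operator
\[
D:=\sum_{n\geq 0}(n+1)\,t_n\,\pd_{t_n},
\]
which assigns weight $n+1$ to each $t_n$. A one-line inspection shows that every monomial occurring in $M^{1D}$ raises the $D$-weight by exactly $2$: for the cubic term one has $(k+1)+(n+1)-\bigl((n+k-1)+1\bigr)=2$; for $t_{n+1}\,\pd_{t_{i-1}}\pd_{t_{j-1}}$ with $i+j=n$ one has $(n+2)-i-j=2$; for $t_{n+2}\,\pd_{t_n}$ the weight is $2$; and the two scalar terms $\half t_0^2$, $\half t_1$ have weight $2$ as well. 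On the other hand, the explicit expansion \eqref{eq-expand-integral} (with $\lambda=1$) exhibits $Z^{1D}$ as a sum of $D$-homogeneous pieces $Z_{(2d)}$ of even weight, with $Z_{(0)}=1$. Granting these two facts, the statement $Z^{1D}=\exp(M^{1D})(1)$ is equivalent to the single master equation $D\,Z^{1D}=2\,M^{1D}Z^{1D}$: comparing weight-$2d$ components and using that $M^{1D}$ raises weight by $2$ turns it into the recursion $d\,Z_{(2d)}=M^{1D}Z_{(2d-2)}$, which together with $Z_{(0)}=1$ forces $Z_{(2d)}=\frac{1}{d!}(M^{1D})^{d}(1)$ and hence the claim after summing over $d$.

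It therefore remains to prove $D\,Z^{1D}=2\,M^{1D}Z^{1D}$, and I would attack the two sides separately. For the left-hand side, the reformulated Virasoro constraint $L_0^{1D}Z^{1D}=0$ gives immediately $D\,Z^{1D}=2\,\pd_{t_1}Z^{1D}-Z^{1D}$, and the $n=1$ flow equation \eqref{eq-1d-flow} rewrites this as
\[
D\,Z^{1D}=\pd_{t_0}^2 Z^{1D}-Z^{1D}.
\]
For the right-hand side the strategy is to collapse every derivative in $M^{1D}$ to a power of $\pd_{t_0}$ using \eqref{eq-1d-flow} and \eqref{eq-cor-flow}, and to recognize the outcome as the square of a single operator. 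Writing $\Phi:=\sum_{n\geq 0}\frac{t_n}{n!}\pd_{t_0}^{\,n}$, the goal of this step is the identity $2\,M^{1D}Z^{1D}=\Phi^2 Z^{1D}$. Once this is in place the computation closes quickly: the polymer equation \eqref{eq-1d-polymer} reads $\Phi\,Z^{1D}=\pd_{t_0}Z^{1D}$, and since only the $n=0$ summand contributes to the commutator one has $[\pd_{t_0},\Phi]=1$, whence
\[
\Phi^2 Z^{1D}=\Phi\,\pd_{t_0}Z^{1D}=\pd_{t_0}\Phi\,Z^{1D}-Z^{1D}=\pd_{t_0}^2 Z^{1D}-Z^{1D},
\]
which matches $D\,Z^{1D}$ exactly.

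The main obstacle is precisely the reduction $2\,M^{1D}Z^{1D}=\Phi^2 Z^{1D}$, which is where all the bookkeeping lives. After applying the flow equations, the cubic term of $M^{1D}$ becomes $\sum_{s\geq 1}\bigl(\sum_{n+k=s}\frac{t_nt_k}{n!\,k!}\bigr)\pd_{t_0}^{\,s}Z^{1D}$; the second and third sums of $M^{1D}$ collapse, using that there are $n-1$ admissible pairs $(i,j)$ with $i+j=n$ and the telescoping coefficient identity $\half\cdot\frac{m-2}{m}+\frac{1}{m}=\half$, into $\sum_{r\geq 1}\frac{t_{r+1}}{r!}\pd_{t_0}^{\,r}Z^{1D}$. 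The delicate point is that expanding $\Phi^2$ honestly — commuting each power $\pd_{t_0}^{\,n}$ past the single non-commuting variable $t_0$ via $\pd_{t_0}^{\,n}t_0=t_0\,\pd_{t_0}^{\,n}+n\,\pd_{t_0}^{\,n-1}$ — generates, beyond the two sums just described, exactly the extra scalar contributions $t_0^2 Z^{1D}$ and $t_1 Z^{1D}$ that correspond to the boundary terms $\half t_0^2$ and $\half t_1$ in $M^{1D}$. Checking that every numerical factor lines up is the only real work; everything surrounding it is formal.
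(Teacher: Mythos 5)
Your argument is correct, and I have checked the computations it rests on. A caveat on the comparison: the paper does not actually prove this theorem — it is quoted from the in-preparation reference \cite{wz4} — so there is no in-text proof to measure you against. The paper's only hint is the sentence ``Using the flow equations and Virasoro constraints, one can derive a cut-and-join representation,'' and your proof uses exactly those ingredients (plus the polymer equation), so in spirit it is the intended derivation; what you supply is a complete, self-contained organization of it. The reduction of $Z^{1D}=\exp(M^{1D})(1)$ to the single equation $D\,Z^{1D}=2M^{1D}Z^{1D}$ via the grading $D=\sum_{n\ge 0}(n+1)t_n\pd_{t_n}$ is sound: each summand of $M^{1D}$ is homogeneous of $D$-degree $+2$, the expansion \eqref{eq-expand-integral} shows $Z^{1D}$ has only even-degree components with $Z_{(0)}=1$, and the componentwise recursion $d\,Z_{(2d)}=M^{1D}Z_{(2d-2)}$ then pins down both sides. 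The left-hand side $D\,Z^{1D}=\pd_{t_0}^2Z^{1D}-Z^{1D}$ follows from $L_0^{1D}=1+D-2\pd_{t_1}$ and the $n=1$ flow equation. For the right-hand side I verified the key identity $\Phi^2Z^{1D}=2M^{1D}Z^{1D}$ directly: writing $\Phi^2=\sum_{k,n\ge0}\frac{t_kt_n}{k!\,n!}\pd_{t_0}^{k+n}+\sum_{k\ge1}\frac{t_k}{(k-1)!}\pd_{t_0}^{k-1}$ and using $\pd_{t_0}^mZ^{1D}=m!\,\pd_{t_{m-1}}Z^{1D}$, the first sum yields $t_0^2Z^{1D}$ plus twice the cubic term of $M^{1D}$ applied to $Z^{1D}$, while the second yields $t_1Z^{1D}+\sum_{k\ge2}t_k\pd_{t_{k-2}}Z^{1D}$; and indeed $2(M_2+M_3)Z^{1D}=\sum_{k\ge2}\bigl(\tfrac{k-2}{k}+\tfrac{2}{k}\bigr)t_k\pd_{t_{k-2}}Z^{1D}=\sum_{k\ge2}t_k\pd_{t_{k-2}}Z^{1D}$, using that there are $n-1$ pairs $(i,j)$ with $i+j=n$, $i,j\ge1$, and that $i!\,j!\,\pd_{t_{i-1}}\pd_{t_{j-1}}Z^{1D}=n!\,\pd_{t_{n-1}}Z^{1D}$. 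The only blemishes are cosmetic: in your prose the ``cubic term becomes $\sum_{s\ge1}(\cdots)$'' and the ``collapse into $\sum_{r\ge1}\frac{t_{r+1}}{r!}\pd_{t_0}^r$'' are really the contributions to $2M^{1D}$, not to $M^{1D}$; the factors of two are consistent with your target identity but should be stated carefully in a written-up version.
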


Now let us take $t_n = n!\cdot p_{n+1}$ where $p_i$ is the Newton symmetric function of degree $i$.
Then $Z^{1D}\in\Lambda$,
where $\Lambda$ is the bosonic Fock space,
i.e.,
the space of all symmetric functions (see \cite{djm} for an introduction).

Recall that $\Lambda$ has a basis $\{p_\lambda\}_\lambda$,
where for a partition $\lambda=(\lambda_1,\cdots,\lambda_n)$  (with $\lambda_1\geq \cdots\lambda_n >0$)
we denote $p_\lambda := p_{\lambda_1}\cdots p_{\lambda_n}$.
Moreover, we use the convention $p_\emptyset :=1$.
Let $\alpha_n$ ($n\in\bZ$) be the following bosonic operators on $\Lambda$:
\be
\alpha_n :=
\begin{cases}
p_{-n}\cdot, & n<0;\\
0, & n=0;\\
n\frac{\pd}{\pd p_n}, & n>0,
\end{cases}
\ee
then they satisfy the canonical commutation relations:
\be
[\alpha_m, \alpha_n] = m \delta_{m+n,0}.
\ee
After taking $t_n:= n!\cdot p_{n+1}$ for every $n\geq 0$,
the operator $M^{1D}$ becomes:
\be
M^{1D}
= \half \sum_{i+j+k=-2}:\alpha_i\alpha_j\alpha_k: +
\half \sum_{i+j=-2}:\alpha_i\alpha_j: +\half \alpha_{-2}
\ee
where the normal ordering $:\alpha_{i_1}\alpha_{i_2}\cdots \alpha_{i_n}:$
of the bosons $\alpha_{i_1},\cdots, \alpha_{i_n}$ is defined by:
\ben
:\alpha_{i_1}\alpha_{i_2}\cdots \alpha_{i_n}:
= \alpha_{i_{\sigma(1)}}\alpha_{i_{\sigma(2)}}\cdots \alpha_{i_{\sigma(n)}},
\een
where $\sigma\in S_n$ is a permutation such that $\sigma(1)\leq \cdots\leq \sigma(n)$.
Then one finds that $M^{1D}$ is an element in $\widehat{\mathfrak{gl}(\infty)}$
(see eg. \cite[\S 4]{kaz}),
and thus $\exp(M^{1D})\in \widehat{GL(\infty)}$.

In Sato's theory the space of all (formal power series) tau-functions is the $\widehat{GL(\infty)}$-orbit
of the trivial solution $\tau=1$ (see \cite{sa, djm}).
Thus we obtain the result of Nishigaki-Yoneya \cite{ny2} that
$Z^{1D}=\exp(M^{1D})(1)$ is a tau-function of the KP hierarchy,
and the time variables are given by:
\be
x_n := \frac{p_n}{n} = \frac{1}{n!} t_{n-1},
\qquad
n\geq 1.
\ee

\subsection{Quadratic recursion of the $n$-point functions}
\label{eq-qrec-thin}

One way to compute the free energy $F^{1D}$ is to derive recursions
for the (connected) $n$-point functions:
\begin{equation}
\begin{split}
W_{g,n}^{1D}(z_1,\cdots,z_n)
:=&\sum_{j_1,\cdots,j_n\geq 1}
\frac{\pd^n F_g^{1D} (\bm t)}{\pd x_{j_1}\cdots \pd x_{j_n}}\bigg|_{\bm x=0}
\cdot z_1^{-j_1-1} \cdots z_n^{-j_n-1},
\end{split}
\end{equation}
where $(g,n)\not=(0,1)$,
and $x_n = \frac{1}{n!}t_{n-1}$ are the time variables of the KP hierarchy.
Then the following quadratic recursion formula for these $n$-point functions
emerges naturally from the Virasoro constraints \eqref{eq-1d-Virasoro-t}:
\begin{Proposition}
[\cite{zhou5}]
Denote $W_{0,1}^{1D}(z_1):=z_1^{-1}$,
then all other $W_{g,n}^{1D}$ can be recursively computed by:
\be
\label{eq-toporec-1}
\begin{split}
& W_{g, n+1}^{1D}\left(z_{0}, z_{1}, \cdots, z_{n}\right) \\
=& \sum_{j=1}^{n} D_{z_{0}, z_{j}} W_{g, n}^{1D}\left(z_{1},
\cdots, z_{n}\right)+E_{z_{0}, u, v} W_{g-2, n+2}^{1D}\left(u, v, z_{1}, \cdots, z_{n}\right) \\
&+\sum_{g_{1}+g_{2}=g-1 \atop I \sqcup J=[n]} E_{z_{0}, u, v}
\left(W_{g_{1},|I|+1}^{1D}\left(u, z_{I}\right) \cdot
W_{g_{2},|J|+1}^{1D}\left(v, z_{J}\right)\right).
\end{split}
\ee
where $[n]:=\{1,2,\cdots,n\}$,
$z_{I}:=\left(z_{i_{1}}, \cdots, z_{i_{k}}\right)$ for a set of indices
$I=\left\{i_{1}, \cdots, i_{k}\right\}$,
and the operators $D_{z_{0}, z_{j}}$ and $E_{z_{0}, u, v}$ are defined by:
\begin{equation*}
\begin{split}
&D_{z_{0}, z_{j}} f\left(z_{j}\right):=
\frac{f\left(z_{0}\right)-f\left(z_{j}\right)}{z_{0}\left(z_{0}-z_{j}\right)^{2}}
-\frac{1}{z_{0}\left(z_{0}-z_{j}\right)}
\frac{\partial}{\partial z_{j}} f\left(z_{j}\right),\\
&E_{z_{0}, u, v} f(u, v):= \frac{1}{z_{0}} \cdot
\big( \lim _{u \rightarrow v} f(u, v) \big)\big|_{v=z_{0}}.
\end{split}
\end{equation*}
\end{Proposition}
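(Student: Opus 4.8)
The plan is to derive the recursion \eqref{eq-toporec-1} directly from the reformulated Virasoro constraints \eqref{eq-1d-Virasoro-opr-t}, following the strategy of \cite{zhou5}. First I would pass from $Z^{1D}$ to the free energy $F^{1D} = \log Z^{1D}$: dividing the constraint $L_m^{1D}(Z^{1D}) = 0$ by $Z^{1D}$ turns the second-order operator $\frac{\pd^2}{\pd t_{m_1-1}\pd t_{m_2-1}}$ into the combination $\frac{\pd^2 F^{1D}}{\pd t_{m_1-1}\pd t_{m_2-1}} + \frac{\pd F^{1D}}{\pd t_{m_1-1}}\frac{\pd F^{1D}}{\pd t_{m_2-1}}$, which is the common source of both the $W_{g-2,n+2}^{1D}$ term and the bilinear sum $\sum_{g_1+g_2=g-1}W_{g_1}^{1D}W_{g_2}^{1D}$ in \eqref{eq-toporec-1}.

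Next I would rewrite everything in the KP time variables $x_j = \frac{1}{j!}t_{j-1}$. This is the step that absorbs the factorials: one has $m!\frac{\pd}{\pd t_{m-1}} = \frac{\pd}{\pd x_m}$, and the ``shift'' summand of the last term simplifies cleanly, since $\frac{(m+n+1)!}{n!}(t_n - \delta_{n,1})\frac{\pd}{\pd t_{m+n}} = x_{n+1}\frac{\pd}{\pd x_{m+n+1}} - \delta_{n,1}\frac{\pd}{\pd x_{m+n+1}}$. Inserting the genus expansion $F^{1D} = \sum_g \lambda^{2g-2}F_g^{1D}$ and collecting the coefficient of a fixed power of $\lambda$ then yields, for each $m\geq 1$, a genus-by-genus identity; the explicit $\lambda^2$ and $\lambda^4$ prefactors of \eqref{eq-1d-Virasoro-opr-t} are exactly what shift the genus so that the second-derivative part lands at genus $g-2$ and the product part at $g_1 + g_2 = g-1$.

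The heart of the argument is then to convert these $t$-derivative identities into the generating-function form \eqref{eq-toporec-1}. I would differentiate the genus-$g$ constraint by $\frac{\pd^n}{\pd x_{j_1}\cdots\pd x_{j_n}}$, set $\bm x = 0$, multiply by $\prod_i z_{j_i}^{-j_i-1}$ together with a suitable power of the output variable $z_0$, and sum over $m\geq 1$ and all $j_1,\cdots,j_n\geq 1$. In the loop-insertion term $x_{n+1}\frac{\pd}{\pd x_{m+n+1}}F^{1D}$ the factor $x_{n+1}$ must be consumed by one of the external derivatives $\frac{\pd}{\pd x_{j_i}}$; resumming the resulting series in $z_0$ and $z_{j_i}$ produces precisely the rational kernels of $D_{z_0,z_j}$, while the pure dilaton-shift term $-\frac{\pd}{\pd x_{m+1}}F^{1D}$ assembles into the output $W_{g,n+1}^{1D}(z_0,z_1,\cdots,z_n)$ on the left-hand side. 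Likewise, the convolution $m_1 + m_2 = m$ in the quadratic term forces the two new arguments onto the diagonal and then onto $z_0$, which is exactly the operator $E_{z_0,u,v}$. The base cases come from the $m=-1$ and $m=0$ (string and dilaton) constraints together with the prescribed value $W_{0,1}^{1D}(z_1) = z_1^{-1}$.

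I expect the main obstacle to be the bookkeeping in this last resummation: one must verify that the weights $\frac{m+3}{2}$ from the linear term, the factor $\frac{(m+n+1)!}{n!}$, and the convolution sum combine into exactly the kernels $\frac{1}{z_0(z_0-z_j)^2}$, $\frac{1}{z_0(z_0-z_j)}\frac{\pd}{\pd z_j}$ and $\frac{1}{z_0}\lim_{u\to v}$ appearing in $D$ and $E$, with every power of $z_0$ correctly placed. In particular the linear term $\frac{m+3}{2}\lambda^2 m!\frac{\pd}{\pd t_{m-1}}$ contributes at a genus and an index shifted from the other terms, and reconciling this requires the selection rule \eqref{eq-selection-1d}, which couples the genus to the total $z$-degree and is ultimately what makes the genus-$g$ recursion close.
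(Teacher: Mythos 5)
First, a caveat: the paper does not actually prove this Proposition --- it is quoted from \cite{zhou5} (listed as ``in preparation''), so there is no in-text proof to compare yours against, and I can only assess your plan on its own terms. Your overall strategy --- divide the Virasoro constraints by $Z^{1D}$, pass to the KP times $x_j=\frac{1}{j!}t_{j-1}$, expand in genus, differentiate by the external times, and resum into generating functions --- is the right one, and your genus bookkeeping for the $\lambda^4$ terms (second derivative landing at $g-2$, products at $g_1+g_2=g-1$) is correct.

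There is, however, a genuine gap at exactly the point you defer to ``bookkeeping,'' plus two slips that feed into it. The slips: $\frac{(m+n+1)!}{n!}\,t_n\frac{\pd}{\pd t_{m+n}}=(n+1)x_{n+1}\frac{\pd}{\pd x_{m+n+1}}$, not $x_{n+1}\frac{\pd}{\pd x_{m+n+1}}$ (the factor $n+1$ is what produces the double pole in $D_{z_0,z_j}$), and the dilaton shift is $-\frac{\pd}{\pd x_{m+2}}$, not $-\frac{\pd}{\pd x_{m+1}}$. The gap: the genus-$g$ component of $L_m^{1D}(Z^{1D})/Z^{1D}=0$ reads
\[
\pd_{x_{m+2}}F_g=\tfrac{m+3}{2}\,\pd_{x_m}F_{g-1}+\tfrac12\!\!\sum_{m_1+m_2=m}\!\!\Big(\pd_{x_{m_1}}\pd_{x_{m_2}}F_{g-2}+\!\!\sum_{g_1+g_2=g-1}\!\!\pd_{x_{m_1}}F_{g_1}\,\pd_{x_{m_2}}F_{g_2}\Big)+\sum_{k\geq 1} kx_k\,\pd_{x_{m+k}}F_g,
\]
whereas in \eqref{eq-toporec-1} the $E$-terms carry the \emph{unhalved} convolution, and the only genus-$(g-1)$, one-new-leg contribution is $2z_0^{-2}W^{1D}_{g-1,n+1}(z_0,z_{[n]})$, coming from the two exceptional summands $(g_1,I)=(0,\emptyset)$ and $(g_2,J)=(0,\emptyset)$ with the fictitious $W_{0,1}^{1D}(u)=u^{-1}$ --- i.e.\ a coefficient $2$, not $\frac{m+3}{2}$. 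The selection rule \eqref{eq-selection-1d} does not force $m=1$ and cannot reconcile this; what does is one more application of the flow equations \eqref{eq-cor-flow}: split $m+1=2+(m-1)$ and convert the leftover $\frac{m-1}{2}\pd_{x_m}F_{g-1}=\frac12\sum_{m_1+m_2=m}\pd_{x_m}F_{g-1}$ back into $\frac12\sum_{m_1+m_2=m}\big(\pd_{x_{m_1}}\pd_{x_{m_2}}F_{g-2}+\sum\pd F_{g_1}\pd F_{g_2}\big)$, which supplies the missing half of the quadratic terms. A sanity check at $(g,n)=(2,0)$, $m=2$: the Virasoro identity gives $\pd_{x_4}F_2=\frac52\cdot1+\frac12\cdot1=3$, while the recursion gives $E_{z_0,u,v}(W_{0,2}^{1D})+2E_{z_0,u,v}(W_{0,1}^{1D}W_{1,1}^{1D})=1+2=3$; the totals agree but the term-by-term split is different, and this rearrangement --- together with the role of the convention $W_{0,1}^{1D}(z)=z^{-1}$, which your plan never explains --- is precisely what is missing from your argument.
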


\subsection{Fat genus expansion, fat spectral curve, and emergence of E-O topological recursion}
\label{sec-toporec-fat}

The quadratic recursion \eqref{eq-toporec-1} looks similar to but is not an Eynard-Orantin topological recursion
\cite{ce, eo}.
In order to obtain an E-O topological recursion,
one needs to consider the fat genus expansion and fat Virasoro constraints for $Z^{1D}$.
This has been done in \cite{zhou5},
and in this subsection we recall the main results.

The topological 1D gravity is the special case $N=1$ of the Hermitian one-matrix models,
thus it admits a genus expansion by fat graphs,
see \cite{zhou9}.
Roughly speaking,
a fat graph is (the equivalence class of) the $1$-skeleton of a cell-decomposition of a Riemann surface,
and the genus of a fat graph is the genus of the surface.
Then one has a fat genus expansion and a family of fat Virasoro constraints for $Z^{1D}$
which are different from those we have seen in \S \ref{sec-pre-Virasoro}.
Here we only recall some necessary results,
for details of the fat and thin emergent geometry, see \cite{zhou5, zhou8, zhou9};
and see the book \cite{lz} for an introduction for fat graphs and Hermitian matrix models.

Take $\lambda=1$, and let
\be
F^{1D} (\bm t) = \sum_{g\geq 0} F_g^{\text{fat}} (\bm t)
\ee
be the fat genus expansion of the free energy (see \cite{zhou9}).
Let $x_n = \frac{1}{n!}t_{n-1}$,
and define the fat $n$-point function of genus $g$ to be:
\be
W_{g,n}^{\text{fat}}(z_1,\cdots,z_n)
:=\sum_{j_1,\cdots,j_n\geq 1}
\frac{\pd^n F_g^{\text{fat}} (\bm t)}{\pd x_{j_1}\cdots \pd x_{j_n}}\bigg|_{\bm x=0}
\cdot z_1^{-j_1-1} \cdots z_n^{-j_n-1}
\ee
for $g\geq 0$ and $n\geq 1$.
Notice that for the fat genus expansion,
the coefficients in the case $(g,n)=(0,1)$ are given by the Catalan numbers
(see eg. \cite{dmss, mu}),
while in the thin genus expansion \eqref{eq-genusexp-F} there is no terms of type $(g,n)=(0,1)$.

\begin{Example}
The following examples are known in literatures
(see eg. \cite{zhou5}):
\be
\begin{split}
&W_{0,1}^{\text{fat}} (z_1) = \half (z_1-\sqrt{z_1^2-4}), \\
&W_{0,2}^{\text{fat}} (z_1,z_2) = \frac{1}{2(z_1-z_2)^2}
\bigg(
-1 + \frac{z_1z_1-4}{\sqrt{(z_1^2-4)(z_2^2-4)}}
\bigg).
\end{split}
\ee
They can be computed using the fat Virasoro constraints.
\end{Example}

In order to derive a topological recursion in the sense of Eynard-Orantin,
one needs a spectral curve first.
In \cite{zhou8},
the second author has shown that the following
fat special deformation emerges from the fat Virasoro constraints
(\cite[(78)]{zhou8}):
\be
y= \frac{1}{\sqrt{2}}
\sum_{n\geq 1} \frac{t_{n-1}-\delta_{n,2}}{(n-1)!}\cdot z^{n-1}
+\frac{\sqrt{2}}{z}
+\sqrt{2}\cdot\sum_{n\geq 1}
n!  \frac{\pd F_0^{\text{fat}}}{\pd t_{n-1}} \cdot z^{-n-1},
\ee
which is a family of curves on the $(y,z)$-plane parametrized by $(t_0, t_1,\cdots)$.
By setting the coupling constants to zero
(and perform a suitable rescaling),
one obtains the following fat spectral curve:
\be
\label{eq-fat-curve}
z^2-4y^2 =4.
\ee
This is the special case $N=1$ of the fat spectral curve for the Hermitian one-matrix models
(at finite $N$),
see \cite[\S 3]{zhou8} for details and the quantum deformation theory for this spectral curve.
In physics literatures,
this curve is known as the semi-circle law for the Hermitian matrix models,
see eg. \cite{mar} for a treatment from the large $N$ point of view.

Furthermore,
one can show that the Eynard-Orantin topological recursion on this spectral curve emerges naturally
from the fat Virasoro constraints:
\begin{Theorem}
[\cite{zhou5}]
Let $p,p_i$ be points on this spectral curve \eqref{eq-fat-curve}, and denote:
\be
\begin{split}
&\omega_{0,1}^{\text{fat}} (p_1) = y(p_1) dz(p_1),\\
&\omega_{0,2}^{\text{fat}} (p_1,p_2) =
\frac{1}{2(z_{1}(p)-z_{2}(p))^{2}}\bigg(1+\frac{z_{1}(p) z_{2}(p)-4 }{4 y_{1}(p) y_{2}(p)}\bigg) d z_{1}(p) d z_{2}(p)
\end{split}
\ee
and
\be
\omega_{g,n}^{\text{fat}} (z_1,\cdots,z_n) := W_{g,n}^{\text{fat}}(z_1,\cdots,z_n)dz_1\cdots dz_n
\ee
for $2g-2+n>0$.
Then the following E-O topological recursion emerges from the fat Virasoro constraints for $F^{1D}$:
\be
\label{eq-EO-fat}
\begin{split}
\omega_{g, n+1}^{\text{fat}}(p_{0}, \cdots, p_{n})=
&(\Res_{p \rightarrow p_{+}}+\Res_{p \rightarrow p_{-}})
\bigg[K( p_0 , p ) \bigg(\omega_{g-1, n+2}^{\text{fat}}(p, \sigma(p), p_{[n]})\\
&+\sum_{g_{1}+g_{2}=g \atop I \sqcup J=[n]}^{s}
\omega_{g_{1},|I|+1}^{\text{fat}}(p, p_{I}) \omega_{g_{2},|J|+1}^{\text{fat}}(\sigma(p), p_{J})\bigg)\bigg],
\end{split}
\ee
for $2g-2+n>0$,
where $p_\pm$ is the two branch point of the spectral curve,
$\sigma$ is the involution $\sigma(z,y)=\sigma(z,-y)$ near the branch points,
and we denote $p_I:=(p_{i_1,\cdots,p_{i_k}})$ for $I=\{i_1,\cdots,i_k\}$.
The `s' on $\sum$ means excluding terms involving $\omega_{0,1}^{\text{fat}}$ in this summation.
The recursion kernel $K$ is given by:
\be
K\left(p_{0}, p\right)=\frac{\int_{\sigma(p)}^{p} B^{H}
\left(p_{0}, p\right)}{2(y(p)-y(\sigma(p))) d z(p)}=\frac{d z_{0}}{4 y_{0}\left(z_{0}-z\right) d z}.
\ee
\end{Theorem}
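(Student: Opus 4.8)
The plan is to identify the fat Virasoro constraints for $Z^{1D}$ with the loop equations (Schwinger--Dyson equations) of the $N=1$ Hermitian one-matrix model, and then to run Eynard's general argument that the loop equations, \emph{together with} the analytic structure of the correlators, are equivalent to the Eynard--Orantin recursion \eqref{eq-EO-fat}. Concretely the proof factors through two reformulations: first repackage the constraints $L_m^{\text{fat}}Z^{1D}=0$ into a single quadratic functional relation for a resolvent-type generating series in an auxiliary spectral variable $z$, and then convert that relation, order by order in the fat genus expansion, into the residue formula on the spectral curve \eqref{eq-fat-curve}.

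First I would assemble the generating series by multiplying each fat Virasoro constraint by an appropriate power of $z$ and summing over $m\geq -1$; this is exactly the manipulation producing the fat special deformation recalled above. The planar (disk) part of the resulting loop equation is an algebraic equation whose solution is $W_{0,1}^{\text{fat}}(z)=\frac{1}{2}(z-\sqrt{z^2-4})$. Its branch points at $z=\pm 2$ and its square-root cut determine the spectral curve \eqref{eq-fat-curve}, on which the E-O coordinate is $y=\frac{1}{2}\sqrt{z^2-4}$. I would then fix the global rational parametrization $z=\zeta+\zeta^{-1}$, $\;2y=\zeta-\zeta^{-1}$, under which the two branch points $p_\pm$ sit at $\zeta=\pm 1$, i.e.\ $z=\pm 2$, the sheet involution is $\sigma(\zeta)=\zeta^{-1}$, and the Bergman kernel $B^H$ takes the normalized form $d\zeta_1\,d\zeta_2/(\zeta_1-\zeta_2)^2$, which matches $\omega_{0,2}^{\text{fat}}$ from the Example above after transforming back to the $z$-variable.

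Next I would expand the loop equation in the fat genus parameter. For every $(g,n)$ with $2g-2+n>0$ this yields the two abstract loop equations of Eynard's formalism: the \emph{linear loop equation}, that $\omega_{g,n+1}^{\text{fat}}(p,p_{[n]})+\omega_{g,n+1}^{\text{fat}}(\sigma(p),p_{[n]})$ extends holomorphically across $p_\pm$; and the \emph{quadratic loop equation}, that the bilinear combination
\[
\omega_{g-1,n+2}^{\text{fat}}(p,\sigma(p),p_{[n]})+\sum_{\substack{g_1+g_2=g\\ I\sqcup J=[n]}}^{s}\omega_{g_1,|I|+1}^{\text{fat}}(p,p_I)\,\omega_{g_2,|J|+1}^{\text{fat}}(\sigma(p),p_J)
\]
is holomorphic at $p_\pm$ as a quadratic differential in $p$. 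In parallel I would prove, by induction on $2g-2+n$, that as rational functions on the genus-zero curve the $\omega_{g,n}^{\text{fat}}$ have poles only at the two branch points; the base cases are $\omega_{0,1}^{\text{fat}}$ and $\omega_{0,2}^{\text{fat}}$, and the inductive step uses the flow equations \eqref{eq-1d-flow} and the Virasoro structure to rule out poles elsewhere and at $\infty$.

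Given these two inputs, \eqref{eq-EO-fat} follows by the standard residue computation: Cauchy's formula with kernel $B^H$ represents $\omega_{g,n+1}^{\text{fat}}(p_0,p_{[n]})$ as a sum of local residues at $p_\pm$; the linear loop equation annihilates the symmetric (holomorphic) part so that only the antisymmetric part survives, and the quadratic loop equation is precisely the bilinear object appearing inside the kernel $K(p_0,p)=dz_0/\big(4y_0(z_0-z)\,dz\big)$, so the two residues assemble into the claimed recursion. I expect the genuine difficulty to lie in the second reformulation --- translating the explicit operators $L_m^{\text{fat}}$ into the quadratic loop equation with the correct normalization of $K$ and the correct action of $\sigma$ near $z=\pm 2$; once the loop equations and the ``poles only at branch points'' property are in hand, the final passage to the residue recursion is routine. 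A secondary subtlety is that the $N=1$ curve is the semicircle \eqref{eq-fat-curve}, so one must work in the global coordinate $\zeta$ rather than in $z$ to make the local expansions at the coincident endpoints of the cut unambiguous.
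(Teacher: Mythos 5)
First, a point of reference: the paper does not prove this theorem. It is quoted from \cite{zhou5} (listed as ``in preparation''), and the surrounding text only recalls how the fat spectral curve \eqref{eq-fat-curve} emerges from the fat special deformation, citing \cite{zhou8}. So there is no in-paper argument to compare against, and your proposal must be judged on its own terms. Your outline is the standard Chekhov--Eynard / abstract-loop-equation route and is almost certainly the intended one: the identification of the disk amplitude $W_{0,1}^{\text{fat}}(z)=\tfrac12(z-\sqrt{z^2-4})$, the Zhukovsky parametrization $z=\zeta+\zeta^{-1}$, $2y=\zeta-\zeta^{-1}$ with global involution $\sigma(\zeta)=\zeta^{-1}$, the matching of $\omega_{0,2}^{\text{fat}}$ with $d\zeta_1\,d\zeta_2/(\zeta_1-\zeta_2)^2$, and the final Cauchy-kernel residue computation are all correct as stated.

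The genuine gap is that the one step carrying essentially all the content --- extracting the genus-graded linear and quadratic loop equations from the explicit fat Virasoro operators --- is announced rather than performed, and at $N=1$ it conceals a real issue that your write-up does not address: there is no $1/N^2$ left to separate the genera, so the single functional relation obtained by summing the constraints against powers of $z$ does not by itself split into one equation per genus $g$. One must either carry out the derivation at general $N$ (or with a surrogate formal parameter counting faces of fat graphs) and extract the genus-$g$ component before specializing, or prove directly that the combinatorially defined $F_g^{\text{fat}}$ satisfy the genus-$g$ piece of the loop equation; without this, the quadratic loop equation you invoke for each $(g,n)$ is not yet established, and the induction and residue argument have nothing to act on. Two smaller points: your induction establishing ``poles only at branch points'' cannot take $\omega_{0,1}^{\text{fat}}$ and $\omega_{0,2}^{\text{fat}}$ as base cases in the form stated, since $\omega_{0,1}^{\text{fat}}$ has poles at $\zeta=0,\infty$ and $\omega_{0,2}^{\text{fat}}$ has its pole on the diagonal --- the property holds only in the stable range $2g-2+n>0$, so the induction must start at $(0,3)$ and $(1,1)$ and these must be checked by hand; and the pole-structure statement is itself usually proved \emph{using} the loop equations rather than independently of them, so the logical order of your two ``parallel'' steps needs care to avoid circularity.
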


\section{Affine Coordinates for $Z^{1D}|_{\lambda=1}$ and Some Applications}
\label{sec-mainresult}

In this section,
we study the tau-function $Z^{1D}|_{\lambda=1}$ of the KP hierarchy.
In particular,
we discuss some applications of its affine coordinates on the Sato Grassmannian.
We describe this tau-function as a Bogoliubov transform in the fermionic picture,
and as a summation of Schur functions in the bosonic picture,
and then show how to compute the connected $n$-point functions using the affine coordinates.
Moreover,
we find a family of Kac-Schwarz operators for this tau-function.
Some of these results will not be used in the computation of $\chi(\Mbar_{g,n})$ in the next section,
but we describe them for completeness.

\subsection{Affine coordinates for the tau-function $Z^{1D}|_{\lambda=1}$}
\label{sec-1d-affinecoord}

First let us recall the affine coordinates of the tau-function $Z^{1D}|_{\lambda=1}$ of the KP hierarchy.
From now on,
we will always assume that $\lambda=1$,
and omit `$|_{\lambda=1}$' for simplicity.
This evaluation dose not lose any information since the $\lambda$-dependence can be read off
from $Z^{1D}|_{\lambda=1}$ by the selection rule \eqref{eq-selection-1d}.

In Kyoto School's approach to the KP hierarchy,
there are three equivalent descriptions to a tau-function:
\begin{itemize}
\item[1)]
An element on the Sato Grassmannian;
\item[2)]
A vector in the fermionic Fock space;
\item[3)]
A vector in the bosonic Fock space.
\end{itemize}

Here let us briefly recall the construction of the big cell $Gr_{(0)}$ of the Sato Grassmannian.
Let $H$ be the infinite-dimensional vector space:
\ben
H:=\big\{\text{formal series }\sum_{n\in \bZ} a_n z^{n}
\big| a_n=0 \text{ for } n>>0
\big\},
\een
Then one has $H=H_+ \oplus H_-$,
where
$H_+ = \bC [z]$ and
$H_- = z^{-1} \bC [[z^{-1}]]$.
Denote by $\pi_\pm :H\to H_\pm$ the two natural projections.
The big cell $Gr_{(0)}$ of the Sato Grassmannian consists of all linear subspaces $U\subset H$
such that the projection $\pi_+: U \to H_+$ is an isomorphism of vector spaces.

An element $U\subset H$ of the big cell $Gr_{(0)}$ of the Sato Grassmannian is
spanned by a basis of the form:
\be
\big\{
\tilde f_n
=z^{n} + \sum_{j<n}\tilde a_{n,j} z^{j}
\big\}_{n\geq 0},
\ee
which is called an admissible basis.
Among various admissible basis for $U$,
there is a unique one of the following form
(see eg. \cite{by} or \cite[\S 3]{zhou1}):
\be
\big\{
f_n
=z^{n} + \sum_{m\geq 0} a_{n,m} z^{-m-1}
\big\}_{n\geq 0},
\ee
called the normalized basis for $U$.
The coefficients $\{a_{n,m}\}_{n,m\geq 0}$ are called the affine coordinates for $U$.
Given an element $U\in Gr_{(0)}$,
one can construct a vector $|U\rangle$ in the fermionic Fock space
as a semi-infinite wedge product,
and then construct a vector $\tau_U$ in the bosonic Fock space using the boson-fermion correspondence.
The element $\tau_U$ constructed in this way is a tau-function of the KP hierarchy.
Conversely,
given any tau-function of the KP hierarchy,
one can associate an element in $Gr_{(0)}$ to it.
See \cite{djm} for details of boson-fermion correspondence.
We will give a brief review in Appendix \ref{sec-app-pre}.

In particular,
the tau-function $Z^{1D}$ can be regarded as an element in the big cell $Gr_{(0)}$
of the Sato Grassmannian.
Let us denote by $U^{1D}\subset H$ the point on $Gr_{(0)}$ corresponding to $Z^{1D}$,
and $\{a_{n,m}^{1D}\}_{n,m\geq 0}$ the affine coordinates of $U^{1D}$.
Then:
\begin{Theorem}
[\cite{zhou7}]
\label{thm-affinecoord-1D}
The affine coordinates $\{a_{n,m}^{1D}\}_{n,m\geq 0}$ are given by:
\be
\label{eq-affinecoord-1D}
a_{n,m}^{1D} = \begin{cases}
m!!, & \text{if $n=0$ and $m$ is odd;}\\
0, & \text{otherwise.}
\end{cases}
\ee
Or equivalently,
$U^{1D}\subset H$ is spanned by the normalized basis $\{f_n^{1D}(z)\}_{n\geq 0}$
where:
\be
\label{eq-nbasis-1D}
f_n^{1D}(z)=
\begin{cases}
1 +\sum\limits_{k\geq 0} (2k+1)!! \cdot z^{-2k-2}, & n=0;\\
z^n, & n>0.
\end{cases}
\ee
\end{Theorem}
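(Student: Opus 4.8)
The plan is to route the computation through the bosonic (symmetric-function) picture, where the expansion of $Z^{1D}$ is explicit, and then to read off the affine coordinates from the Schur-function coefficients via the boson-fermion dictionary. First I would specialize the Feynman/integral expansion \eqref{eq-expand-integral}. Setting $\lambda=1$ and substituting the power-sum variables $t_{j-1}=(j-1)!\,p_j$, the graph-counting coefficient $\tfrac{(2n-1)!!}{\prod_j(j!)^{m_j}m_j!}$ collapses: the factor $\prod_j((j-1)!)^{m_j}/(j!)^{m_j}=\prod_j j^{-m_j}$ combines with the $m_j!$ to reproduce exactly the symmetry factor $z_\mu=\prod_j j^{m_j}m_j!$ of the partition $\mu=(1^{m_1}2^{m_2}\cdots)$. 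This yields the closed power-sum form
\begin{equation*}
Z^{1D}=\sum_{\mu:\,|\mu|\text{ even}}\frac{(|\mu|-1)!!}{z_\mu}\,p_\mu,
\end{equation*}
where the empty partition contributes the constant $1$.

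Second, I would pass to the Schur basis. Using the standard facts $\langle p_\mu,s_\lambda\rangle=\chi^\lambda_\mu$ for the Hall inner product and $h_N=\sum_{\mu\vdash N}z_\mu^{-1}p_\mu=s_{(N)}$, the coefficient of $s_\lambda$ in $Z^{1D}$ becomes
\begin{equation*}
(|\lambda|-1)!!\sum_{\mu\vdash|\lambda|}z_\mu^{-1}\chi^\lambda_\mu
=(|\lambda|-1)!!\,\langle h_{|\lambda|},s_\lambda\rangle
=(|\lambda|-1)!!\,\delta_{\lambda,(|\lambda|)},
\end{equation*}
which is nonzero only for a single row $\lambda=(N)$ with $N$ even. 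Hence $Z^{1D}=1+\sum_{k\geq0}(2k+1)!!\,s_{(2k+2)}$; this is the Schur/Newton form promised in item 3) of the introduction, and it uses only the graph-counting identity \eqref{eq-graphcounting} together with elementary symmetric-function theory.

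Third, I would convert the Schur coefficients into affine coordinates. By the determinantal (Giambelli) formula underlying the boson-fermion correspondence (reviewed in Appendix \ref{sec-app-pre}), the coefficients of the single-hook Schur functions in a big-cell tau-function are, up to a sign governed by the leg, exactly the affine coordinates $a_{n,m}$, with the hole index $m$ recording the arm of the hook and the particle index $n$ its leg. Since the only nonzero Schur coefficients above belong to single rows — the hooks of zero leg $(N)=(N-1\,|\,0)$ — only the coordinates with $n=0$ survive, with trivial sign and value $a_{0,N-1}=(N-1)!!$; writing $m=N-1$ (odd) gives \eqref{eq-affinecoord-1D}, and substituting these coordinates into $f_n=z^n+\sum_m a_{n,m}z^{-m-1}$ produces the normalized basis \eqref{eq-nbasis-1D}.

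The main obstacle is this third step rather than the first two: the passage from Schur coefficients to affine coordinates is where the boson-fermion machinery and the index/sign conventions must be handled with care (in particular, matching which of $n,m$ records the arm versus the leg, and checking that the leg sign is trivial here because every surviving hook has leg $0$). As an independent consistency check I would verify the normalized basis directly against the Kac-Schwarz data: a short computation with $Q=\pd_z+z-z^{-1}$ shows that all odd negative powers in $Qf_0^{1D}$ cancel by the identity $(2j+1)!!=(2j+1)(2j-1)!!$, leaving $Qf_0^{1D}=z=f_1^{1D}\in U^{1D}$, which confirms that $\mathrm{span}\{f_n^{1D}\}$ is compatible with the Virasoro/Kac-Schwarz structure of $Z^{1D}$.
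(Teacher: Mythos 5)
Your proposal is correct and follows essentially the same route as the paper's proof in Appendix \ref{sec-proof-comb}: establish $Z^{1D}=1+\sum_{k\geq 0}(2k+1)!!\,h_{2k+2}$ under the substitution $t_n=n!\,p_{n+1}$, identify $h_{m+1}=s_{(m|0)}$, and read off the affine coordinates from the hook Schur coefficients via the boson--fermion dictionary of Theorem \ref{thm-coeff-Bogoliubov}. The only (cosmetic) difference is that you obtain the power-sum expansion directly from the integral expansion \eqref{eq-expand-integral}, whereas the paper packages the same counting identity \eqref{eq-graphcounting} as a Feynman-graph interpretation of $h_{2n}$ applied to the graph sum \eqref{eq-1d-partitionZ-tFeynman}.
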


Theorem \ref{thm-affinecoord-1D} can be proved by various methods.
In \cite{zhou7},
the second author have computed the affine coordinates for the partition function $Z_N$ of
the Hermitian one-matrix models (at finite $N$)
using a method inspired by the representation theory \cite{iz2},
and the above theorem can be obtained by simply taking $N=1$ in \cite[(64)]{zhou7}.
In another paper \cite{zhou2},
the affine coordinates for the partition function of counting Grothendieck's dessin d'enfants
which includes the Hermitian one-matrix models as a special case,
were computed using the fermionic reformulation of the Virasoro constraints.

In Appendix \ref{sec-proof-comb},
we will give another proof of this result using a combinatorial method.
This proof is rather simple,
but reveals some interesting relations between symmetric functions and Feynman graphs,
and this fits into our purpose of relating quantum field theories to integrable hierarchies.

\subsection{A fermionic representation and a bosonic representation for $Z^{1D}$}
\label{sec-fermi-1D}

By applying \cite[Theorem 3.1]{zhou1} to the affine coordinates \eqref{eq-affinecoord-1D},
one obtains the following fermionic representation of
the tau-function $Z^{1D}$
(see \cite{zhou1} for notations,
or Appendix \ref{sec-app-pre} for a brief introduction):
\begin{Proposition}
In the fermionic Fock space,
the tau-function $Z^{1D}$ is given by the following Bogoliubov transform
of the fermionic vacuum $|0\rangle$:
\be
Z^{1D} = e^{A^{1D}} |0\rangle,
\ee
where $A^{1D}$ is the following quadratic operator of the fermionic creators $\{\psi_r,\psi_r^*\}_{r>0}$:
\be
A^{1D}:= \sum_{m,n\geq 0} a_{n,m}^{1D} \psi_{-m-\half} \psi_{-n-\half}^*=
\sum_{k=0}^\infty (2k+1)!! \cdot \psi_{-2k-\frac{3}{2}} \psi_{-\half}^*.
\ee
Expanding the exponential,
then we have:
\be
Z^{1D}=
|0\rangle +\sum_{k=0}^\infty (2k+1)!! \cdot \psi_{-2k-\frac{3}{2}} \psi_{-\half}^* |0\rangle.
\ee
\end{Proposition}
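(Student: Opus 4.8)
The plan is to invoke the general correspondence between affine coordinates and Bogoliubov transforms established in \cite[Theorem 3.1]{zhou1}, and then to substitute the explicit affine coordinates \eqref{eq-affinecoord-1D} from Theorem \ref{thm-affinecoord-1D}. Recall that this theorem asserts that for any point $U$ of the big cell $Gr_{(0)}$ with affine coordinates $\{a_{n,m}\}_{n,m\geq 0}$, the associated fermionic state is
\be
|U\rangle=\exp\bigg(\sum_{m,n\geq 0} a_{n,m}\,\psi_{-m-\half}\psi_{-n-\half}^*\bigg)|0\rangle.
\ee
Since $Z^{1D}$ corresponds to $U^{1D}$, whose affine coordinates are recorded in \eqref{eq-affinecoord-1D}, it suffices to plug those values in and simplify. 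Thus no new input is required beyond \cite{zhou1} and Theorem \ref{thm-affinecoord-1D}, and the Proposition reduces to a bookkeeping statement.

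First I would carry out the substitution. Because $a_{n,m}^{1D}$ vanishes unless $n=0$, the double sum collapses to the single index $n=0$, so the conjugate factor is always $\psi_{-n-\half}^*=\psi_{-\half}^*$; and because the coefficient further vanishes unless $m$ is odd, writing $m=2k+1$ with $k\geq 0$ and using $a_{0,2k+1}^{1D}=(2k+1)!!$ turns the operator into
\be
A^{1D}=\sum_{k\geq 0}(2k+1)!!\,\psi_{-2k-\frac{3}{2}}\psi_{-\half}^*,
\ee
which is the stated closed form. The only subtlety at this step is matching the index conventions (half-integer fermionic modes versus the integer labels $n,m$ of the affine coordinates), which I would verify once.

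Next I would show that the exponential truncates after the linear term. The key structural observation is that every summand in $A^{1D}$ carries the same conjugate creator $\psi_{-\half}^*$. When forming $(A^{1D})^2$, each product contains the factor $\psi_{-\half}^*\,\psi_{-2j-\frac{3}{2}}\,\psi_{-\half}^*$; anticommuting the two $\psi^*$-factors past the intervening $\psi$ (the relevant anticommutator $\{\psi_{-\half}^*,\psi_{-2j-\frac{3}{2}}\}$ vanishes, since those modes never pair) produces the factor $(\psi_{-\half}^*)^2=0$. Hence $(A^{1D})^2=0$ as an operator, so $e^{A^{1D}}=1+A^{1D}$ and
\be
Z^{1D}=e^{A^{1D}}|0\rangle=|0\rangle+\sum_{k\geq 0}(2k+1)!!\,\psi_{-2k-\frac{3}{2}}\psi_{-\half}^*|0\rangle,
\ee
as claimed.

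The main (and essentially only) obstacle is the sign bookkeeping in this truncation step: one must confirm that the anticommutators $\{\psi_{-\half}^*,\psi_{-2j-\frac{3}{2}}\}$ and $\{\psi_{-\half}^*,\psi_{-\half}^*\}$ indeed vanish in the chosen convention, so that $(A^{1D})^2$ collapses exactly rather than leaving a residual term. Once the conventions of \cite{zhou1} are pinned down this is immediate, and the remainder of the argument is a direct substitution.
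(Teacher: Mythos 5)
Your proposal is correct and follows essentially the same route as the paper: cite \cite[Theorem 3.1]{zhou1} (restated as Theorem \ref{thm-coeff-Bogoliubov}), substitute the affine coordinates \eqref{eq-affinecoord-1D}, and observe that the exponential truncates because every summand of $A^{1D}$ ends in the same creator $\psi_{-\half}^*$, forcing $(A^{1D})^2=0$. The anticommutator check you flag is indeed the only point needing care, and it goes through since $\{\psi_r,\psi_s^*\}=\delta_{r+s,0}$ never fires for the modes appearing here.
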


Then applying the boson-fermion correspondence,
one obtains a bosonic representation for $Z^{1D}$ as a summation of symmetric functions
(after taking $t_{n} = n!\cdot p_{n+1}$ for every $n\geq 0$,
where $p_{n+1}$ is the Newton symmetric function of degree $n+1$):
\be
\label{eq-Z-symm}
\begin{split}
Z^{1D} (\bm t)=&
1+\sum_{k=0} ^\infty (2k+1)!!\cdot s_{(2k+1|0)}\\
=& 1+\sum_{k=0} ^\infty (2k+1)!!\cdot h_{2k+2},
\end{split}
\ee
where $s_{(2k+1|0)}$ is the Schur function associated to the partition $(2k+1|0)=(1^{2k+2})$,
and $h_{2k+2}$ is the complete symmetric function of degree $2k+2$.
See \cite{mac} for an introduction to the symmetric functions.

\subsection{Representing $F^{1D}$ in terms of Schur functions}

No let us represent the free energy $F^{1D}=\log(Z^{1D})$ in terms of Schur functions.
From \eqref{eq-Z-symm} one sees that
\ben
F^{1D}(\bm t)= \sum_{n\geq 1}\frac{(-1)^n}{n}
\bigg(
\sum_{k=0}^\infty (2k+1)!! \cdot h_{2k+2}
\bigg)^n
\een
by expanding the logarithm. Thus:
\begin{Proposition}
We have:
\begin{equation*}
\begin{split}
F^{1D}(\bm t)=&
\sum_{n\geq 1}\sum_{k_1,\cdots, k_n\geq 0}
\frac{(-1)^n}{n}(2k_1+1)!!\cdots (2k_n+1)!!
\cdot h_{2k_1+2}\cdots h_{2k_n+2}\\
=&\sum_{n\geq 1}\sum_{k_1,\cdots, k_n\geq 0}
\frac{(-1)^n}{n}(2k_1+1)!!\cdots (2k_n+1)!!
\cdot s_{(2k_1+1|0)}\cdots s_{(2k_n+1|0)},
\end{split}
\end{equation*}
where $t_n=n!\cdot p_{n+1}$ for every $n\geq 0$.
\end{Proposition}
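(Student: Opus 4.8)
The plan is to derive $F^{1D}=\log Z^{1D}$ by expanding the logarithm of the bosonic expression \eqref{eq-Z-symm} directly, so that the whole argument becomes a formal-power-series manipulation in the ring of symmetric functions $\Lambda$ graded by degree. First I would write $Z^{1D}=1+G$, where
\[
G:=\sum_{k\geq 0}(2k+1)!!\cdot h_{2k+2}
\]
is the non-constant part of \eqref{eq-Z-symm}. The key structural observation is that each summand $h_{2k+2}$ is homogeneous of degree $2k+2\geq 2$, so $G$ has neither a constant nor a degree-$1$ component. Hence $G^n$ is a sum of symmetric functions each of degree $\geq 2n$, and for any fixed total degree $d$ only finitely many $n$ (namely $n\leq d/2$) contribute. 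This degree filtration is what makes the series $\sum_{n\geq 1}\tfrac{(-1)^{n}}{n}G^n$ a well-defined element of $\Lambda$, and it is the one point that genuinely needs checking; everything else is bookkeeping.

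Next I would expand each power. Distributing $G^n=\big(\sum_{k\geq 0}(2k+1)!!\,h_{2k+2}\big)^n$ over its $n$ factors gives
\[
G^n=\sum_{k_1,\dots,k_n\geq 0}(2k_1+1)!!\cdots(2k_n+1)!!\cdot h_{2k_1+2}\cdots h_{2k_n+2},
\]
and inserting this termwise into the logarithm series yields the first displayed formula of the Proposition. The interchange of the summations over $n$ and over $(k_1,\dots,k_n)$, together with the reindexing, is legitimate precisely by the degree-filtration argument above, since in each fixed degree both sums are finite.

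Finally, the passage from the $h$-form to the $s$-form uses only the identification $h_{2k+2}=s_{(2k+1|0)}$ already recorded in \eqref{eq-Z-symm} (classically, $h_m=s_{(m)}$): replacing each factor $h_{2k_j+2}$ by $s_{(2k_j+1|0)}$ converts the first expression into the second. I expect no real obstacle in the computation, since the substance is entirely contained in \eqref{eq-Z-symm} and in the elementary logarithm expansion; the only things to be careful about are (i) the convergence and reindexing justification described above, and (ii) tracking the sign in the prefactor $\tfrac{(-1)^{n}}{n}$ consistently with the expansion $\log(1+x)=\sum_{n\geq 1}\tfrac{(-1)^{n-1}}{n}x^n$, which I would verify against the intermediate display preceding the statement.
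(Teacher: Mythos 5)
Your proposal follows exactly the paper's route: the paper's entire proof is to write $F^{1D}=\log Z^{1D}$ and expand the logarithm of the bosonic formula \eqref{eq-Z-symm}, then distribute the $n$-th power over the factors $h_{2k+2}=s_{(2k+1|0)}$. Your added justification via the degree filtration on $\Lambda$ (each $h_{2k+2}$ has degree $\geq 2$, so only $n\leq d/2$ contributes in degree $d$) is the correct way to make the formal rearrangement rigorous, and the paper leaves it implicit.

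One point you flagged but deferred deserves to be settled rather than ``verified against the intermediate display'': carrying out $\log(1+G)=\sum_{n\geq 1}\frac{(-1)^{n-1}}{n}G^{n}$ honestly produces the prefactor $\frac{(-1)^{n-1}}{n}$, whereas both the intermediate display and the Proposition as printed have $\frac{(-1)^{n}}{n}$; checking against the intermediate display therefore cannot resolve the discrepancy, since it contains the same sign. The decisive check is the degree-$2$ coefficient: under $t_n=n!\cdot p_{n+1}$ (and $\lambda=1$) the quadratic part of $F^{1D}$ is $\half t_0^2+\half t_1=\half p_1^2+\half p_2=h_2$, which matches $+\,(2\cdot 0+1)!!\,h_2$ and hence forces $(-1)^{n-1}$. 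So your derivation is correct and the printed sign is a typo; with that correction your argument proves the statement.
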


It is well-known that the product of two Schur functions is given by
the following Littlewood-Richardson rule:
\ben
s_\lambda s_\mu =\sum_{\nu} c_{\lambda\mu}^\nu s_\nu,
\een
where the summation is over partitions $\nu$ with $|\nu|=|\lambda|+|\mu|$,
and the coefficients $c_{\lambda\mu}^\nu$ can be obtained by counting certain operations
on the corresponding Young diagrams
(see eg. \cite[\S 1]{mac}).
Thus we can rewrite the above formula as:
\begin{Corollary}
We have:
\begin{equation*}
\begin{split}
F^{1D}(\bm t)=&
\sum_{n\geq 1}\sum_{k_1,\cdots, k_n\geq 0} \sum_{\mu_1,\cdots,\mu_{n-1}}
\frac{(-1)^n}{n}(2k_1+1)!!\cdots (2k_n+1)!!\\
&\qquad\cdot
c_{(2k_1+1|0),(2k_2+1|0)}^{\mu_1}
c_{\mu_1,(2k_3+1|0)}^{\mu_2}\cdots
c_{\mu_{n-2},(2k_n-1|0)}^{\mu_{n-1}} s_{\mu_{n-1}},
\end{split}
\end{equation*}
where $t_n=n!\cdot p_{n+1}$ ($n\geq 0$),
and $c_{\lambda\mu}^\nu$ are the Littlewood-Richardson coefficients.
\end{Corollary}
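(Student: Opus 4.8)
The plan is to derive the stated expansion directly from the preceding Proposition by iterating the Littlewood--Richardson rule. The Proposition already writes $F^{1D}(\bm t)$ as a sum over $n\geq 1$ and $k_1,\dots,k_n\geq 0$ of the scalar $\frac{(-1)^n}{n}(2k_1+1)!!\cdots(2k_n+1)!!$ times the product of Schur functions $s_{(2k_1+1|0)}\cdots s_{(2k_n+1|0)}$. It therefore suffices to expand each such $n$-fold product of Schur functions as a single $\bC$-linear combination of Schur functions, reading off the coefficients as the announced strings of Littlewood--Richardson numbers, and then to reinsert this into the Proposition.

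First I would fix arbitrary partitions $\lambda^{(1)},\dots,\lambda^{(n)}$ and establish, by induction on $n$ using associativity of multiplication in the ring $\Lambda$ of symmetric functions, the telescoping identity
\[
s_{\lambda^{(1)}}\cdots s_{\lambda^{(n)}}
=\sum_{\mu_1,\dots,\mu_{n-1}}
c_{\lambda^{(1)},\lambda^{(2)}}^{\mu_1}\,
c_{\mu_1,\lambda^{(3)}}^{\mu_2}\cdots
c_{\mu_{n-2},\lambda^{(n)}}^{\mu_{n-1}}\, s_{\mu_{n-1}}.
\]
The case $n=1$ is the tautology (empty product of coefficients), and $n=2$ is precisely the Littlewood--Richardson rule $s_{\lambda^{(1)}}s_{\lambda^{(2)}}=\sum_{\mu_1}c_{\lambda^{(1)},\lambda^{(2)}}^{\mu_1}s_{\mu_1}$. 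For the inductive step I would take the $(n-1)$-fold expansion, whose top summation variable is $\mu_{n-2}$, multiply it on the right by $s_{\lambda^{(n)}}$, and apply the Littlewood--Richardson rule once more in the form $s_{\mu_{n-2}}s_{\lambda^{(n)}}=\sum_{\mu_{n-1}}c_{\mu_{n-2},\lambda^{(n)}}^{\mu_{n-1}}s_{\mu_{n-1}}$; collecting terms introduces exactly one new coefficient $c_{\mu_{n-2},\lambda^{(n)}}^{\mu_{n-1}}$ and one new index $\mu_{n-1}$, matching the pattern. Specializing $\lambda^{(i)}=(2k_i+1|0)$ and substituting into the Proposition then yields the Corollary verbatim.

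The one point that deserves a word of care is the meaning of the infinite sums, since both the sum over $n$ and each sum over $k_i$ are unbounded. Here I would invoke the grading of $\Lambda$: the Schur function $s_{(2k+1|0)}$ is homogeneous of degree $2k+2$, so any product $s_{(2k_1+1|0)}\cdots s_{(2k_n+1|0)}$ has degree at least $2n$, and every Littlewood--Richardson coefficient $c_{\lambda\mu}^{\nu}$ vanishes unless $|\nu|=|\lambda|+|\mu|$. Consequently, after fixing a target partition $\mu_{n-1}$ of prescribed size, only finitely many tuples $(n;k_1,\dots,k_n)$ contribute to its coefficient, so each coefficient is a finite sum and the identity is an equality of formal power series in $\Lambda$ (under $t_n=n!\cdot p_{n+1}$). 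I do not expect a genuine obstacle in this argument: the content is a purely formal manipulation, and the only bookkeeping to track carefully is the index pattern in the telescoping product together with this degree argument guaranteeing that each coefficient of $s_{\mu_{n-1}}$ is well defined.
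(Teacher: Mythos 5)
Your argument is correct and is essentially the paper's own derivation: the Corollary is obtained from the preceding Proposition by iterating the Littlewood--Richardson rule, exactly as in your telescoping induction, and your grading remark supplies the (omitted but harmless) justification that the coefficient of each $s_{\mu_{n-1}}$ is a finite sum. Note only that your derivation produces $c_{\mu_{n-2},(2k_n+1|0)}^{\mu_{n-1}}$ in the last factor, so the $(2k_n-1|0)$ appearing in the printed statement is evidently a typo.
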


\subsection{A Kac-Schwarz operator for $Z^{1D}$}
\label{sec-KS}

In this subsection we find a Kac-Schwarz operator
for the tau-function $Z^{1D}$.

First let us briefly recall the definition of a Kac-Schwarz operator \cite{ks, sc}.
Let $U\subset H=\bC[z]\oplus z^{-1}\bC[[z^{-1}]]$ be an element of the big cell $Gr_{(0)}$ of the Sato Grassmannian.
A Kac-Schwarz operator for $U$ is a differential operator $Q$ in $z$,
satisfying the following condition:
\ben
Q( U)\subset  U.
\een
Among all the Kac-Schwarz operators,
the most interesting ones are those which takes a Laurent polynomial of degree $n$
to a Laurent polynomial of degree $n+1$;
i.e.,
those such that there exists a basis for $U$ of the form:
\ben
\{u_n:=c_n z^n + \text{low order terms}\}_{n=0,1,2,\cdots}
\een
where $c_n\not=0$,
satisfying $Q(u_n)=u_{n+1}$ for every $n\geq 0$.
In this case,
the vector space $U$ is spanned by $\{Q^k(u_0)\}_{k\geq 0}$.

Now let us consider the case of the topological 1D gravity.
Let $U^{1D}$ be the element on the big cell of the Sato Grassmannian corresponding to the tau-function $Z^{1D}$,
and we already know that $U^{1D}$ is spanned by $\{f_n^{1D}\}_{n\geq 0}$ given by \eqref{eq-nbasis-1D}.
Now let us define:
\be
Q^{1D} := \pd_z +z-z^{-1},
\ee
then we have:
\begin{Theorem}
The operator $Q^{1D}= \pd_z +z-z^{-1}$ is a Kac-Schwarz operators for $U^{1D}$.
Moreover,
we have:
\be
U^{1D} = \text{span} \{(Q^{1D})^n( f_0^{1D})\}_{n\geq 0}.
\ee
\end{Theorem}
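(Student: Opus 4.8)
The plan is to verify the two assertions directly against the normalized basis \eqref{eq-nbasis-1D}. Since $U^{1D}=\text{span}\{f_n^{1D}\}_{n\geq 0}$ and $Q^{1D}$ is linear, the inclusion $Q^{1D}(U^{1D})\subset U^{1D}$ will follow as soon as I check that $Q^{1D}(f_n^{1D})\in U^{1D}$ for every $n\geq 0$. For the indices $n\geq 1$, where $f_n^{1D}=z^n$, this is immediate: a one-line computation gives
\be
Q^{1D}(z^n)=\pd_z(z^n)+z\cdot z^n-z^{-1}\cdot z^n=z^{n+1}+(n-1)z^{n-1},
\ee
and for $n\geq 1$ both monomials on the right are of the form $z^m$ with $m\geq 1$ (for $n=1$ the lower term drops out), hence lie in $U^{1D}$.

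The only genuine computation is the action on $f_0^{1D}$, and this is where I expect the main obstacle, since it requires the negative-power tails to cancel exactly. I would expand $Q^{1D}(f_0^{1D})$ term by term, writing $\pd_z f_0^{1D}$, $z\cdot f_0^{1D}$ and $-z^{-1}\cdot f_0^{1D}$ separately. Two cancellations must be tracked. First, the $z^{-1}$ coefficient: the term $-z^{-1}$ coming from $-z^{-1}\cdot 1$ is cancelled by the $k=0$ term of $z\cdot\sum_k(2k+1)!!z^{-2k-2}$, which contributes $1!!\cdot z^{-1}=z^{-1}$. Second, for the coefficient of $z^{-2k-3}$ one combines the contribution $(2k+1)!!(-2k-2)$ from $\pd_z$ with $-(2k+1)!!$ from $-z^{-1}$, giving $-(2k+3)(2k+1)!!=-(2k+3)!!$; after reindexing, this is exactly cancelled by the term $(2k+3)!!z^{-2k-3}$ produced by $z\cdot f_0^{1D}$. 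The identity $(2k+3)!!=(2k+3)(2k+1)!!$ is the arithmetic heart of the cancellation. What survives is the single term
\be
Q^{1D}(f_0^{1D})=z,
\ee
which lies in $U^{1D}$ and completes the proof that $Q^{1D}$ is a Kac--Schwarz operator.

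For the spanning statement I would argue by triangularity. Set $g_n:=(Q^{1D})^n(f_0^{1D})$. From the computation above $g_0=f_0^{1D}$ and $g_1=z$, and since $Q^{1D}$ raises the degree of a polynomial by one while preserving its leading coefficient, an easy induction using $Q^{1D}(z^j)=z^{j+1}+(j-1)z^{j-1}$ shows that for each $n\geq 1$ the element $g_n$ is a monic polynomial in $z$ of degree exactly $n$ (with only the constant term absent, as it is never generated). Consequently $\text{span}\{g_1,\dots,g_n\}=\text{span}\{z,\dots,z^n\}$ for every $n$, so $\text{span}\{g_n\}_{n\geq 1}=\text{span}\{z^n\}_{n\geq 1}=\text{span}\{f_n^{1D}\}_{n\geq 1}$; adjoining $g_0=f_0^{1D}$ then yields $\text{span}\{(Q^{1D})^n(f_0^{1D})\}_{n\geq 0}=U^{1D}$, as claimed. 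This last step is routine once the identity $Q^{1D}(f_0^{1D})=z$ is in hand, so essentially all the work is concentrated in the single cancellation computation described above.
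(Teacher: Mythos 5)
Your proposal is correct and follows essentially the same route as the paper: a direct term-by-term verification that $Q^{1D}(f_0^{1D})=z$ via the cancellation $(2k+3)!!=(2k+3)(2k+1)!!$, the identity $Q^{1D}(z^n)=z^{n+1}+(n-1)z^{n-1}$ for $n\geq 1$, and an induction showing $(Q^{1D})^n(f_0^{1D})$ is a monic degree-$n$ polynomial to obtain the spanning statement. Your explicit remark that the constant term is never generated (since the coefficient of $f_0^{1D}$ vanishes at $n=1$) is a welcome extra precision that the paper leaves implicit.
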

\begin{proof}
By direct computations we easily check that:
\begin{equation*}
\begin{split}
Q^{1D} ( f_0^{1D})=&
-\sum_{k=0}^\infty (2k+2)\cdot(2k+1)!!z^{-2k-3}+z
+\sum_{k=0}^\infty (2k+1)!!z^{-2k-1}\\
&-z^{-1}-\sum_{k=0}^\infty (2k+1)!!z^{-2k-3}\\
=&z,
\end{split}
\end{equation*}
i.e., $Q^{1D}( f_0^{1D})= f_1^{1D}$.
Moreover,
for $n\geq 1$ we have:
\begin{equation*}
Q^{1D} ( f_n^{1D})= nz^{n-1}+z^{n+1}-z^{n-1}=
 f_{n+1}^{1D}+(n-1)  f_{n-1}^{1D},
\end{equation*}
thus by induction $(Q^{1D})^n( f_0^{1D})$ is a polynomial in $z$ of degree $n$ for every $n\geq 1$.
Then the conclusion holds.
\end{proof}

\begin{Remark}
Now let us take the classical limit of this Kac-Schwarz operator
(and then perform an additional rescaling) by:
\ben
z\mapsto \frac{z}{\sqrt{2}}, \qquad\qquad \pd_z \mapsto y,
\een
and in this way we obtain a plane curve:
\be
y= -\frac{z}{\sqrt{2}}+\frac{\sqrt{2}}{z},
\ee
which is called the signed Catalan curve.
This is the thin spectral curve of the topological 1D gravity
emerging from the thin Virasoro constraints (see \cite[(311)]{zhou6}).
See \cite[\S 10-\S 11]{zhou6} for the quantum deformation theory for this curve.

\end{Remark}

\subsection{From Kac-Schwarz operators to Virasoro operators}
\label{sec-KS-Virasoro}

In this subsection,
we explain how to recover the Virasoro constraints of the topological 1D gravity
from the Kac-Schwarz operator $Q^{1D}$ defined above.

For every $n\geq -1$,
define:
\be
Q_n^{1D}:= - z^{n+1} Q^{1D}
= -z^{n+1}\pd_z -z^{n+2} +z^n.
\ee
Then one can check that they satisfy the Virasoro commutation relations:
\be
\begin{split}
[Q_m^{1D},Q_n^{1D}] =&
z^m(\pd_z +z+z^{-1})\circ z^n Q^{1D}
-z^n(\pd_z +z+z^{-1})\circ z^m Q^{1D}\\
=& (n-m) z^{m+n-1} Q^{1D}\\
=&(m-n) Q_{m+n}^{1D},
\qquad \forall m,n\geq -1.
\end{split}
\ee
Moreover,
it is clear that:
\be
Q_n^{1D} ( U^{1D})\subset  U^{1D},
\qquad \forall n\geq -1,
\ee
i.e., $\{Q_n^{1D}\}_{n\geq -1}$ are all Kac-Schwarz operators for $U^{1D}\in Gr_{(0)}$.

The operators $\{Q_n^{1D}\}_{n\geq -1}$ can be translated into operators
on the bosonic Fock space via the boson-fermion correspondence.
Now let us explain this.
A differential operator $z^l \pd_z^m$ is an element in $\mathfrak{gl}(\infty)$ (see \cite{kaz}),
and it defines an action $\widehat{z^l \pd_z^m}$ on semi-infinite wedge products
(i.e., vectors in the fermionic Fock space) by:
\ben
&&\widehat{z^l \pd_z^m}(z^{k_1}\wedge z^{k_2}\wedge \cdots)\\
&=&
\sum_{n=0}^\infty m!(\binom{k_n}{m}-\delta_{l,m}\binom{n}{m})
z^{k_1}\wedge \cdots\wedge
z^{k_{n-1}}\wedge z^{k_n+l-m}\wedge z^{k_{n+1}}\wedge \cdots,
\een
where $k_i\in \bZ$ for every $i$ and $k_{i+1}=k_i+1$ for $i>>0$.
This defines a central extension $\widehat{\mathfrak{gl}(\infty)}$ of $\mathfrak{gl}(\infty)$.
Then by the boson-fermion correspondence,
they become operators on the bosonic Fock space.
According to \cite[Prop. 6.5]{kaz},
one has
(after some change of notations):
\be
\widehat{z^m}=\alpha_m,
\qquad\quad
\widehat{z^n\pd_z}=
-\half\sum_{i\in \bZ}:\alpha_i \alpha_{n-1-i}:
-\frac{n-2}{2}\alpha_{n-1},
\ee
where $:\alpha_i \alpha_{n-1-i}:$ means the normal-ordering of bosons,
and $\alpha_m$ are the bosonic operators given by:
\be
\alpha_m=\begin{cases}
m\frac{\pd}{\pd p_m},
& m>0;\\
1, & m=0;\\
p_{-m}, & m<0.
\end{cases}
\ee
Here $p_m= mx_m$ (and recall that $x_m= \frac{1}{m!}t_{m-1}$ where $\{t_n\}_{n\geq 0}$ are the coupling constants
and $\{x_m\}_{m\geq 1}$ are time variables).
Therefore on the bosonic Fock space,
the operators $Q_n^{1D} = -z^{n+1}\pd_z -z^{n+2} +z^n$ ($n\geq -1$) becomes:
\begin{equation*}
\begin{split}
Q_n^{1D} =& -\alpha_{n+2} + \frac{n+1}{2} \alpha_n
+\half \sum_{i\in \bZ}:\alpha_i \alpha_{n-i}:\\
=&-\alpha_{n+2} + \frac{n+3-\delta_{n,0}}{2} \alpha_n
+\half \sum_{i\not= 0,n}:\alpha_i \alpha_{n-i}:.
\end{split}
\end{equation*}
Now let us rewrite $Q_n^{1D}$ in terms of the coupling constants $\{t_n=n!\cdot p_{n+1}\}_{n\geq 0}$,
and then it follows that:
\begin{Theorem}
Under the boson-fermion correspondence,
we have:
\be
\widehat{Q_n^{1D}} = L_n^{1D},
\qquad
\forall n\geq -1,
\ee
where $Q_n^{1D}$ are the Kac-Schwarz operators:
\ben
Q_n^{1D} := - z^{n+1} Q^{1D} = -z^{n+1}\pd_z -z^{n+2} +z^n,
\een
and $\{L_n^{1D}\}_{n\geq -1}$ are the Virasoro operators defined by \eqref{eq-1d-Virasoro-opr-t}.
\end{Theorem}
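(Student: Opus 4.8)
The plan is to start from the bosonic expression for $\widehat{Q_n^{1D}}$ already obtained above,
\ben
\widehat{Q_n^{1D}}
= -\alpha_{n+2} + \frac{n+3-\delta_{n,0}}{2}\alpha_n
+\half\sum_{i\neq 0,n}:\alpha_i\alpha_{n-i}:,
\een
and simply rewrite every boson $\alpha_m$ in terms of the coupling constants via the dictionary $\alpha_m = m!\,\frac{\pd}{\pd t_{m-1}}$ for $m>0$, $\alpha_0=1$, and $\alpha_{-m}=\frac{t_{m-1}}{(m-1)!}$ for $m>0$. These follow from $p_m = \frac{t_{m-1}}{(m-1)!}$ together with $\frac{\pd}{\pd p_m}=(m-1)!\,\frac{\pd}{\pd t_{m-1}}$. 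I would then verify term-by-term that the result coincides with $L_n^{1D}$ of \eqref{eq-1d-Virasoro-opr-t}, treating $m\geq 1$ first and the two exceptional cases $n=-1,0$ afterwards.

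For $m\geq 1$ the main step is to split the normal-ordered sum $\half\sum_{i\neq 0,n}:\alpha_i\alpha_{n-i}:$ according to the signs of $i$ and $n-i$. The terms with $0<i<n$ (both factors annihilation-type) commute, so $:\alpha_i\alpha_{n-i}:=\alpha_i\alpha_{n-i}$, and with $m_1=i,\,m_2=n-i$ they reproduce exactly the double-derivative term $\half\sum_{m_1+m_2=m}m_1!\,m_2!\,\frac{\pd^2}{\pd t_{m_1-1}\pd t_{m_2-1}}$. The remaining terms have one positive and one negative index; since normal ordering places the creator $\alpha_{<0}$ on the left, each becomes $\frac{(m+j+1)!}{j!}\,t_j\,\frac{\pd}{\pd t_{m+j}}$, and the two sub-cases $i>n$ and $i<0$ contribute identically, so the prefactor $\half$ is absorbed by the $i\leftrightarrow n-i$ symmetry, yielding $\sum_{j\geq0}\frac{(m+j+1)!}{j!}\,t_j\,\frac{\pd}{\pd t_{m+j}}$. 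The term $\frac{n+3}{2}\alpha_n$ matches the leading $\frac{m+3}{2}m!\,\frac{\pd}{\pd t_{m-1}}$. The one point requiring care is the leftover $-\alpha_{n+2}=-(m+2)!\,\frac{\pd}{\pd t_{m+1}}$: this is precisely the $j=1$ contribution of the dilaton shift $-\delta_{j,1}$ hidden in the factor $(t_j-\delta_{j,1})$ of $L_m^{1D}$, so it belongs to the sum rather than being an extra term. Recognizing this identification is the main (though mild) obstacle, alongside the bookkeeping of which index sits on the left under normal ordering and the $\alpha_0$-exclusions.

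Finally I would dispatch $n=-1$ and $n=0$ by direct substitution. For $n=-1$ the mixed sum collapses to $\sum_{k\geq1}t_k\,\frac{\pd}{\pd t_{k-1}}$, while $-\alpha_1+\alpha_{-1}$ gives $-\frac{\pd}{\pd t_0}+t_0$, reproducing $L_{-1}^{1D}$ once the $-\delta_{n,1}$ term there is read as $-\frac{\pd}{\pd t_0}$. For $n=0$ the sum gives $\sum_{j\geq0}(j+1)t_j\,\frac{\pd}{\pd t_j}$, the constant $\frac{3-1}{2}\alpha_0=1$ supplies the leading $1$, and $-\alpha_2=-2\,\frac{\pd}{\pd t_1}$ again realizes the $\delta_{n,1}$ shift, matching $L_0^{1D}$. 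Since each $\widehat{Q_n^{1D}}$ and $L_n^{1D}$ then agree as differential operators in the $t$-variables, the theorem follows; the stated Virasoro relations for $\{L_n^{1D}\}$ are automatically consistent with the already-verified $[Q_m^{1D},Q_n^{1D}]=(m-n)Q_{m+n}^{1D}$ together with the fact that the boson-fermion correspondence is a Lie algebra homomorphism.
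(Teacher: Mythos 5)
Your proposal is correct and follows essentially the same route as the paper: the paper derives the bosonic expression $\widehat{Q_n^{1D}}=-\alpha_{n+2}+\frac{n+3-\delta_{n,0}}{2}\alpha_n+\half\sum_{i\neq 0,n}:\alpha_i\alpha_{n-i}:$ via Kazarian's formulas and then asserts that rewriting in the coupling constants yields $L_n^{1D}$, which is exactly the term-by-term substitution you carry out (including the correct identification of $-\alpha_{n+2}$ with the dilaton-shift term $-\delta_{j,1}$). Your write-up merely makes explicit the bookkeeping the paper leaves implicit.
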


Notice that $Q_n^{1D}(U^{1D}) \subsetneqq U^{1D}$,
and then $Q_n^{1D}(U^{1D})\not\in Gr_{(0)}$.
This condition implies the Virasoro constraints
$L_n^{1D}(Z^{1D})=0$ for $n\geq -1$.

\subsection{Computing the $n$-point functions using Zhou's formula}
\label{sec-npt-1d}

In \cite[\S 5]{zhou1},
the second author has derived a formula to compute the connected $n$-point functions
associated to an arbitrary tau-function of the KP hierarchy in terms of its affine coordinates.
In this subsection we apply it to the topological 1D gravity.

Let $\tau=\tau(\bm x)$ be a tau-function of the KP hierarchy with respect to
the time variables $\bm x=(x_1,x_2,\cdots)$,
and define the (all-genera) connected $n$-point function associated to $\tau$ to be:
\be
G^{(n)}(z_1,\cdots,z_n)
:=\sum_{j_1,\cdots,j_n\geq 1}
\frac{\pd^n \log \tau (\bm x)}{\pd x_{j_1}\cdots \pd x_{j_n}}\bigg|_{\bm x=0}
\cdot z_1^{-j_1-1} \cdots z_n^{-j_n-1}.
\ee
Then one has:
\begin{Theorem}
[\cite{zhou1}]
Let $\{a_{n,m}\}_{n,m\geq 0}$ be the affine coordinates for $\tau$ on the big cell of the Sato Grassmannian,
and denote:
\be
A(\xi,\eta):= \sum_{m,n\geq 0}
a_{n,m} \xi^{-n-1} \eta^{-m-1}.
\ee
Then:
\be
\label{eq-thm-npt}
G^{(n)}(z_1,\cdots,z_n)
=(-1)^{n-1} \sum_{n\text{-cycles }\sigma}
\prod_{i=1}^{n} \widehat A (z_{\sigma(i)}, z_{\sigma(i+1)})
-\frac{\delta_{n,2}}{(z_1-z_2)^2},
\ee
where $\sigma(n+1):=\sigma(1)$, and:
\be
\widehat A (z_i,z_j)=\begin{cases}
i_{z_i,z_j}\frac{1}{z_i-z_j} + A(z_i,z_j), & i<j;\\
A(z_i,z_i), & i=j;\\
i_{z_j,z_i}\frac{1}{z_i-z_j} + A(z_i,z_j), & i>j,
\end{cases}
\ee
and $i_{\xi,\eta}\frac{1}{\xi+\eta}:= \sum_{k\geq 0} (-1)^k \xi^{-1-k}\eta^k$.
\end{Theorem}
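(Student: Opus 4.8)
The plan is to pass to the fermionic picture and compute everything by Wick's theorem. By the fermionic representation recalled just above (from \cite{zhou1}), I would write $\tau = e^{A}|0\rangle$ with $A=\sum_{m,n\geq 0}a_{n,m}\,\psi_{-m-\half}\psi^*_{-n-\half}$, introduce the fermion fields $\psi(z)=\sum_r \psi_r z^{-r-\half}$, $\psi^*(z)=\sum_r \psi^*_r z^{-r-\half}$, and the current $J(z)=:\!\psi(z)\psi^*(z)\!:$, whose modes realize the bosonic $\alpha_k$ under the boson-fermion correspondence. The first step is to identify the generating series: since $\pd_{x_j}\log\tau$ is extracted from the coefficient of $z^{-j-1}$ in $J(z)$, one gets $G^{(n)}(z_1,\dots,z_n)=\langle J(z_1)\cdots J(z_n)\rangle^{g}_c$, the connected normalized expectation in the state $g|0\rangle$ with $g=e^{A}$.

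The second step is to compute the dressed fermion propagator in this state: the two-point function of one $\psi(\eta)$ and one $\psi^*(\xi)$ in $g|0\rangle$. Because $g=e^{A}$ acts as a linear (Bogoliubov) transformation on the modes, this is exactly computable and splits into a free part and a part linear in the affine coordinates; I expect it to reproduce $\widehat A(\xi,\eta)$, with the free contraction producing the Cauchy kernel expanded as $i_{\xi,\eta}\tfrac{1}{\xi-\eta}$ (the operator ordering selecting which geometric series to use) and the Bogoliubov correction producing $A(\xi,\eta)=\sum a_{n,m}\xi^{-n-1}\eta^{-m-1}$. The diagonal value $\widehat A(z_i,z_i)=A(z_i,z_i)$ then arises from the normal-ordering subtraction that removes the singular free piece.

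The third step is Wick's theorem. Each current $J(z_i)=:\!\psi(z_i)\psi^*(z_i)\!:$ must contribute one $\psi$ and one $\psi^*$, so a contraction pattern amounts to pairing $\psi(z_i)$ with some $\psi^*(z_{\sigma(i)})$; the \emph{connected} patterns are exactly those for which $\sigma$ links all $n$ insertions into a single loop, i.e.\ $\sigma$ is an $n$-cycle. Each such loop yields $\prod_{i=1}^n \widehat A(z_{\sigma(i)},z_{\sigma(i+1)})$, and the fermionic reordering around one loop of length $n$ contributes the overall sign $(-1)^{n-1}$; summing over all $n$-cycles gives the first term of the claimed formula. (Equivalently, the full disconnected correlator is the determinant $\det[\widehat A(z_i,z_j)]$, and the cyclic sum with sign $(-1)^{n-1}$ is precisely its cumulant.) Finally, for $n=2$ the purely free part of the loop produces $\tfrac{1}{(z_1-z_2)^2}$, which is present already for the trivial tau-function $\tau=1$; subtracting it is the $-\delta_{n,2}/(z_1-z_2)^2$ in the statement. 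As a consistency check I would verify that the whole right-hand side vanishes when all $a_{n,m}=0$, which forces this subtraction for $n=2$ and uses the classical identity that the cyclic sums of pure Cauchy kernels vanish for $n\geq 3$.

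I expect the main obstacle to be bookkeeping rather than a single hard idea: translating the bosonic derivatives $\pd_{x_j}$ into current insertions through the boson-fermion correspondence, fixing the operator-ordering conventions so that the various $\tfrac{1}{z_i-z_j}$ are expanded with exactly the prescriptions $i_{z_i,z_j}$ and $i_{z_j,z_i}$ recorded in the definition of $\widehat A$ (so that the answer is a well-defined element of the relevant ring of Laurent series), and checking that the fermion-loop signs assemble to $(-1)^{n-1}$ uniformly in $n$. Keeping the normal-ordering subtractions straight — so that the diagonal term $\widehat A(z_i,z_i)$ and the $n=2$ correction come out exactly as stated — is where the care is concentrated.
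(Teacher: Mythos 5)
Your proposal is correct and follows essentially the same route as the source: the paper itself quotes this theorem from \cite{zhou1} without reproving it, and the proof there is exactly the fermionic computation you outline — realizing $G^{(n)}$ as a connected correlator of currents against the Bogoliubov-transformed vacuum $e^{A}|0\rangle$, computing the dressed propagator $\widehat A(\xi,\eta)$ as free Cauchy kernel plus the affine-coordinate generating series, and applying Wick's theorem so that the connected part is the signed sum over $n$-cycles, with the $\delta_{n,2}$ subtraction accounting for the purely free loop. The bookkeeping points you flag (expansion prescriptions $i_{z_i,z_j}$, normal-ordering on the diagonal, the loop sign $(-1)^{n-1}$) are indeed where the care lies, and your plan handles them correctly.
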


The notation $i_{z_i,z_j}\frac{1}{z_i-z_j}$ indicates how to expand $\frac{1}{z_i-z_j}$
as a formal series in $z_i$ and $z_j$.
In what follows we will denote it by $\frac{1}{z_i-z_j}$ for simplicity.
Now one can use the formula \eqref{eq-thm-npt} to compute
the connected $n$-point functions:
\begin{equation}
\begin{split}
G_{(n)}^{1D}(z_1,\cdots,z_n)
:=&\sum_{j_1,\cdots,j_n\geq 1}
\frac{\pd^n F^{1D} (\bm t)}{\pd x_{j_1}\cdots \pd x_{j_n}}\bigg|_{\bm x=0}
\cdot z_1^{-j_1-1} \cdots z_n^{-j_n-1}\\
=&\sum_{j_1,\cdots,j_n\geq 1}
\frac{\pd^n F^{1D} (\bm t)}{\pd t_{j_1-1}\cdots \pd t_{j_n-1}}\bigg|_{\bm t=0}
\cdot \frac{j_1!\cdots j_n!}{z_1^{j_1+1} \cdots z_n^{j_n+1}}
\end{split}
\end{equation}
of the topological 1D gravity.
Denote:
\be
\label{eq-Agenerating-1d}
A^{1D}(\xi,\eta)=\sum_{n,m\geq 0} a_{n,m}^{1D} \xi^{-n-1} \eta^{-m-1}
=\xi^{-1}\cdot\sum_{k=0}^\infty (2k+1)!!\cdot \eta^{-2k-2}.
\ee
Then by \eqref{eq-thm-npt} we obtain the following formula for $G_{(n)}^{1D}$:
\begin{Proposition}
\label{prop-thm-npt-1d}
Let $A^{1D}(\xi,\eta)$ be given by \eqref{eq-Agenerating-1d},
then we have:
\be
\label{eq-thm-npt-1d}
G_{(n)}^{1D}(z_1,\cdots,z_n)
=(-1)^{n-1} \sum_{n\text{-cycles }\sigma}
\prod_{i=1}^{n}
\widehat A^{1D} (z_{\sigma(i)}, z_{\sigma(i+1)})
-\frac{\delta_{n,2}}{(z_1-z_2)^2},
\ee
where $\widehat A^{1D} (z_{\sigma(i)}, z_{\sigma(i+1)}):=
A^{1D} (z_{\sigma(i)}, z_{\sigma(i+1)}) + \frac{1}{z_{\sigma(i)}- z_{\sigma(i+1)}}$.
\end{Proposition}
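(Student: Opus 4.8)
The plan is to obtain the statement as a direct specialization of Zhou's universal $n$-point formula \eqref{eq-thm-npt} to the particular tau-function $\tau=Z^{1D}$. The groundwork is already in place: in the Nishigaki--Yoneya subsection we recalled that $Z^{1D}$ is a genuine tau-function of the KP hierarchy with time variables $x_n=\frac{1}{n!}t_{n-1}$, so the connected $n$-point functions $G_{(n)}^{1D}$ defined through $F^{1D}=\log Z^{1D}$ coincide with the universal objects $G^{(n)}$ appearing on the left-hand side of \eqref{eq-thm-npt}. Hence it suffices to insert the correct affine data and reconcile the two kernels.

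First I would assemble the generating series of the affine coordinates. By Theorem \ref{thm-affinecoord-1D} the coefficients $a_{n,m}^{1D}$ vanish unless $n=0$ and $m$ is odd, in which case $a_{0,m}^{1D}=m!!$. Substituting $m=2k+1$ into $A(\xi,\eta)=\sum_{n,m\geq 0}a_{n,m}\xi^{-n-1}\eta^{-m-1}$ collapses the double sum to the single series already recorded in \eqref{eq-Agenerating-1d}, namely $A^{1D}(\xi,\eta)=\xi^{-1}\sum_{k\geq 0}(2k+1)!!\,\eta^{-2k-2}$. With $A$ replaced by this $A^{1D}$, the right-hand side of \eqref{eq-thm-npt} is transported verbatim into the right-hand side of \eqref{eq-thm-npt-1d}, including the correction term $-\frac{\delta_{n,2}}{(z_1-z_2)^2}$, provided the hatted kernels agree.

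The one step that genuinely requires care, and which I expect to be the main (if modest) obstacle, is the passage from the piecewise kernel $\widehat A$ of Zhou's theorem to the single clean expression $\widehat A^{1D}(z_i,z_j)=A^{1D}(z_i,z_j)+\frac{1}{z_i-z_j}$. Here I would invoke the expansion convention introduced immediately after that theorem: the symbol $\frac{1}{z_i-z_j}$ abbreviates $i_{z_i,z_j}\frac{1}{z_i-z_j}$ when $i<j$ and $i_{z_j,z_i}\frac{1}{z_i-z_j}$ when $i>j$, so the two off-diagonal branches of the definition of $\widehat A$ are written uniformly as $A^{1D}(z_i,z_j)+\frac{1}{z_i-z_j}$. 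For an $n$-cycle $\sigma$ with $n\geq 2$, consecutive indices $\sigma(i)$ and $\sigma(i+1)$ are always distinct, so only these off-diagonal branches occur inside the cyclic product $\prod_{i=1}^{n}\widehat A^{1D}(z_{\sigma(i)},z_{\sigma(i+1)})$ and the identification is exact. The diagonal branch $\widehat A^{1D}(z_i,z_i)=A^{1D}(z_i,z_i)$, carrying no $\frac{1}{z_i-z_j}$ summand, is used solely in the unique $1$-cycle contributing to $G_{(1)}^{1D}$; keeping this case separate is what prevents a spurious pole and completes the specialization.
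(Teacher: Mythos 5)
Your proposal is correct and follows exactly the paper's route: the proposition is obtained by specializing Zhou's universal $n$-point formula \eqref{eq-thm-npt} to the affine coordinates of $Z^{1D}$ from Theorem \ref{thm-affinecoord-1D}, with the kernel branches unified via the stated expansion convention for $\frac{1}{z_i-z_j}$. Your extra care about the diagonal branch occurring only in the $1$-cycle is consistent with the paper's computation $G_{(1)}^{1D}(z)=A^{1D}(z,z)$.
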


\begin{Example}
For $n=1$,
the above formula gives:
\begin{equation*}
G_{(1)}^{1D} (z) = A^{1D} (z,z)
=\sum_{k=0}^\infty (2k+1)!! \cdot z^{-2k-3},
\end{equation*}
and the first a few terms are:
\begin{equation*}
\begin{split}
G_{(1)}^{1D}(z) =&
\frac{1}{z^3}+
+\frac{3}{z^5} + \frac{15}{z^{7}} + \frac{105}{z^{9}} + \frac{945}{z^{11}}
+ \frac{10395}{z^{13}} + \frac{135135}{z^{15}} +
\frac{2027025}{z^{17}}\\
&
 + \frac{34459425}{z^{19}}
+ \frac{654729075}{z^{21}} + \frac{13749310575}{z^{23}} +
\frac{316234143225}{z^{25}}\\
&
 + \frac{7905853580625}{z^{27}}
 + \frac{213458046676875}{z^{29}} +
\frac{6190283353629375}{z^{31}}\\
&
 + \frac{191898783962510625}{z^{33}} +
\frac{6332659870762850625}{z^{35}}
 + \cdots.
\end{split}
\end{equation*}
And for $n=2$,
there is only one $2$-cycle $(1,2)$,
thus the above formula gives:
\begin{equation*}
\begin{split}
G_{(2)}^{1D}(z_1,z_2)
=& - \big( A^{1D}(z_1,z_2)+\frac{1}{z_1-z_2}\big)
\big( A^{1D}(z_2,z_1)+\frac{1}{z_2-z_1}\big)
-\frac{1}{(z_1-z_2)^2}\\
=&\frac{A^{1D}(z_1,z_2)-A^{1D}(z_2,z_1)}{z_1-z_2}
-A^{1D}(z_1,z_2) A^{1D}(z_2,z_1)\\
=&\sum_{k,l\geq 0}\bigg( (2k+2l+3)!! - (2k+1)!! \cdot (2l+1)!!
\bigg) z_1^{-2k-3} z_2^{-2l-3}\\
&+\sum_{k,l\geq 0} (2k+2l+1)!!
\cdot z_1^{-2k-2} z_2^{-2l-2}.
\end{split}
\end{equation*}
The first a few terms are:
\begin{equation*}
\begin{split}
G_{(2)}^{1D}(z_1,z_2)=&
\frac{1}{z_1^2 z_2^2}
+ \frac{3}{z_1^2 z_2^4}
+\frac{2}{z_1^3 z_2^3}
+\frac{3}{z_1^4 z_2^2}
+\frac{15}{z_1^2 z_2^6}
+\frac{12}{z_1^{3} z_2^{5}}
+\frac{15}{z_1^{4} z_2^{4}}
+\frac{12}{z_1^{5} z_2^{3}}\\
&+\frac{15}{z_1^6 z_2^2}
 +\frac{105}{z_1^{2} z_2^{8}}
+\frac{90}{z_1^{3} z_2^{7}}
+\frac{105}{z_1^{4} z_2^{6}}
+\frac{96}{z_1^{5} z_2^{5}}
+\frac{105}{z_1^{6} z_2^{4}}
+\frac{90}{z_1^{7} z_2^{3}}
+\frac{105}{z_1^{8} z_2^{2}}\\
&+\frac{945}{z_1^{2} z_2^{10}}
+\frac{840}{z_1^{3} z_2^{9}}
 +\frac{945}{z_1^{4} z_2^{8}}
+\frac{900}{z_1^{5} z_2^{7}}
+\frac{945}{z_1^{6} z_2^{6}}
+\frac{900}{z_1^{7} z_2^{5}}
+\frac{945}{z_1^{8} z_2^{4}}
+\frac{840}{z_1^{9} z_2^{3}}\\
&+\frac{945}{z_1^{10} z_2^{2}}
+\frac{10395}{z_1^{2} z_2^{12}}
 +\frac{9450}{z_1^{3} z_2^{11}}
+\frac{10395}{z_1^{4} z_2^{10}}
+\frac{10080}{z_1^{5} z_2^{9}}
+\frac{10395}{z_1^{6} z_2^{8}}
+\frac{10170}{z_1^{7} z_2^{7}}\\
&+\frac{10395}{z_1^{8} z_2^{6}}
+\frac{10080}{z_1^{9} z_2^{5}}
+\frac{10395}{z_1^{10} z_2^{4}}
+\frac{9450}{z_1^{11} z_2^{3}}
+\frac{10395}{z_1^{12} z_2^{2}}
+\cdots.
\end{split}
\end{equation*}
From these data one sees that (after taking $\lambda=1$):
\begin{equation*}
\begin{split}
F^{1D}=& (x_2+3x_4 +15x_6 +105x_8 + 945 x_{10}+ 10395 x_{12}+\cdots)\\
&+(\half x_1^2 + 3x_1x_3 + x_2^2 + 15 x_1x_5 + 12x_2x_4 +\frac{15}{2} x_3^2+\cdots)
+\cdots\\
=& (\half t_1+\frac{1}{8}t_3 +\frac{1}{48}t_5 +\frac{1}{384}t_7
+ \frac{1}{3840}t_9+ \frac{1}{46080}t_{11}+\cdots)\\
&+(\half t_0^2 + \half t_0t_2 + \frac{1}{4}t_1^2 + \frac{1}{8} t_0t_4
+\frac{1}{4}t_1t_3 + \frac{5}{24}t_2^2 +\cdots)+\cdots.
\end{split}
\end{equation*}

The $n$-point functions $G_{(n)}^{1D}$ for $n> 2$ can also be computed similarly.
Numerical data for $G_{(n)}^{1D}$ can be easily computed using a computer,
and more examples will be listed in Appendix \ref{sec-app-1d-data}.
\end{Example}

Now let $W^{1D}$ be the total $n$-point function defined by:
\be
W^{1D}(\bm z)
=\sum_{n\geq 1} \sum_{j_1,\cdots,j_n \geq 1}
\frac{\pd^n F^{1D}(\bm t)}{\pd x_{j_1}\cdots \pd x_{j_n}}\bigg|_{\bm t=0} \cdot
z_1^{-j_1-1}\cdots z_n^{-j_n-1},
\ee
then one has:
\be
\begin{split}
W^{1D}(\bm z) :=&\sum_{(g,n)\not= (0,1)} W_{g,n}^{1D}(z_1,\cdots,z_n) \\
=& \sum_{g\geq 0, n\geq 1} W_{g,n}^{\text{fat}}(z_1,\cdots,z_n)\\
=& \sum_{n\geq 1} G_{(n)}^{1D} (z_1,\cdots,z_n).
\end{split}
\ee
This is simply a reassembling of all the correlators of the topological 1D gravity.
The free energy $F^{1D}(\bm t)$ can be recovered from $W^{1D}(\bm z)$ by:
\be
\label{eq-npt-fe}
F^{1D}(\bm t) = \Psi (W^{1D}(\bm z)),
\ee
where $\Psi $ is the following linear map between vector spaces:
\be
\begin{split}
\Psi :\quad  \prod_{n\geq 1} z_1^{-1}\cdots z_n^{-1} \bC[z_1^{-1},\cdots, z_n^{-1}]
&\quad\to\quad \bC[[\bm t]],\\
z_1^{-j_1-1}\cdots z_n^{-j_n-1} &\quad\mapsto\quad
\frac{x_{j_1}\cdots x_{j_n}}{ n! }
=\frac{t_{j_1-1}\cdots t_{j_n-1}}{n!\cdot j_1!\cdots j_n!}.
\end{split}
\ee

\section{Specialize to the Problem of Computing $\chi(\Mbar_{g,n})$}
\label{sec-compute-chi}

Now let us apply the above results of the topological 1D gravity
to the problem of the computation of the orbifold Euler characteristics of $\Mbar_{g,n}$.

\subsection{Orbifold characteristics of $\cM_{g,n}$ and $\Mbar_{g,n}$}

First let us recall some basic results on the orbifold Euler characteristics
$\chi(\cM_{g,n})$ and $\chi(\Mbar_{g,n})$.

Let $\cM_{g,n}$ be the moduli space of smooth stable curves of genus $g$ with $n$ marked points.
The orbifold Euler characteristics of $\cM_{g,n}$ is given by the famous Harer-Zagier formula
(see \cite{hz, pe}):
\be
\chi(\cM_{g, n})=(-1)^{n} \cdot \frac{(2 g-1) B_{2 g}}{(2 g) !}(2 g+n-3) !, \quad 2 g-2+n>0,
\ee
where $B_{2g}$ is the $(2g)$-th Bernoulli number.
Let us denote by $V_n(z)$ the following generating series of $\chi(\cM_{g,n})$ for every fixed $n$:
\be
\label{eq-def-gen-V}
\begin{split}
&V_{0}(z):=\sum_{g=2}^{\infty} \chi(\mathcal{M}_{g, 0}) z^{2-2 g} ; \\
&V_{n}(z):=\sum_{g=1}^{\infty} \chi(\mathcal{M}_{g, n}) z^{2-2 g-n}, \qquad n=1,2 ; \\
&V_{n}(z):=\sum_{g=0}^{\infty} \chi(\mathcal{M}_{g, n}) z^{2-2 g-n}, \qquad n \geq 3.
\end{split}
\ee
The series $V_{n}(z)$ can be related to the Gamma function as follows
(\cite[\S 4]{wz2}):
\begin{equation*}
\begin{split}
&V_0(z)
\sim \int_0^z \biggl(z\frac{d}{dz}\log\Gamma(z+1)\biggr) dz
-\half z^2\log z+\frac{1}{4}z^2 - \half z - C +\frac{1}{12}\log z,\\
&V_1(z)
\sim z\frac{d}{dz}\log\Gamma(z)-z\log z+\half,\\
&V_2(z)
\sim z\frac{d^2}{dz^2}\log\Gamma(z)+\frac{d}{dz}\log\Gamma(z)-\log z-1,\\
&V_n(z)
\sim z\frac{d^n}{dz^n}\log\Gamma(z)+(n-1)\frac{d^{n-1}}{dz^{n-1}}\log\Gamma(z),
\qquad n\geq 3;
\end{split}
\end{equation*}
and related to the Barnes $G$-function as follows
(\cite[\S 4]{wz2}):
\begin{equation*}
\begin{split}
&V_0(z) \sim \log G(z+1)
-\biggl( \zeta'(-1) + \frac{z}{2}\log (2\pi) + \biggl(\frac{z^2}{2}-\frac{1}{12}\biggr)\log z
- \frac{3z^2}{4} \biggr),\\
&V_1(z) \sim
\frac{d}{dz}\log G(z+1)
-\frac{1}{2} \log(2\pi) -z\log z +z,\\
&V_2(z) \sim \frac{d^2}{dz^2} \log G(z+1)-\log z,\\
&V_n (z) \sim \frac{d^n}{dz^n}\log  G(z+1),
\qquad  n\geq 3.
\end{split}
\end{equation*}

Let $\Mbar_{g,n}$ be the Deligne-Mumford compactification of $\cM_{g,n}$.
In \cite{bh},
Bini and Harer gave the following formula for the orbifold Euler characteristics
of $\Mbar_{g,n}$ as a summation over all connected stable graphs of genus $g$ with $n$ external edges
(here we denote bt $\cG_{g,n}^c$ the set of all such graphs):
\be
\label{eq-chi-graph}
\chi(\Mbar_{g, n})=n!\cdot
\sum_{\Gamma \in \cG_{g, n}^{c}}
\frac{1}{|\Aut(\Gamma)|} \prod_{v \in V(\Gamma)}
\chi(\cM_{g_{v}, \val_{v}}),
\ee
where $g_v$ is the genus of a vertex $v$, and $\val_v$ is the valence of $v$.
A consequence of this graph sum formula is the following (see \cite[(11)]{bh} and \cite[\S 6]{wz2}):
\be
\label{eq-generating-chibar}
\begin{split}
& \exp \bigg(\sum_{2 g-2+n>0} \frac{1}{n !} \chi(\Mbar_{g, n}) y^{n} z^{2-2 g}\bigg) \\
=& \frac{1}{\sqrt{2 \pi}} \int \exp \bigg(-\frac{1}{2}(x-y z)^{2}
+\sum_{n \geq 0} V_{n}(z) \cdot \frac{x^{n}}{n !}\bigg) d x.
\end{split}
\ee
However,
if one wants to compute the generating series
\ben
\sum_{2 g-2+n>0} \frac{1}{n !} \chi(\Mbar_{g, n}) y^{n} z^{2-2 g},
\een
expanding the above formal integral directly and then taking its logarithm is not an efficient way.

Since listing all possible graphs in the formula \eqref{eq-chi-graph} or
expanding the logarithm of the above formal integral are both unpractical in calculations,
one desires to find other tools to compute $\chi(\Mbar_{g,n})$.
In the earlier work \cite{wz2},
the authors have derived two types recursion relations to compute $\chi(\Mbar_{g, n})$
using the formalism of abstract QFT for stable graphs.
Using these recursions,
one is able to carry out the computations of all numerical data,
and derive some closed formulas.
Moreover,
we have related the computation of $\chi(\Mbar_{g,n})$ to the topological 1D gravity and KP hierarchy in
\cite[\S 6]{wz2}.
Then the results we discussed in last two sections can be applied to the problem,
and we will explain this in the following subsections.

\subsection{Orbifold characteristics $\Mbar_{g,n}$ and KP hierarchy}

Now let us recall the main results in \cite[\S 6]{wz2}.

Comparing the formula \eqref{eq-generating-chibar} to
the partition function \eqref{eq-1d-partitionZ-t} of the topological 1D gravity,
one can easily finds that:
\begin{Theorem}
[\cite{wz2}]
\label{thm-chi-KP}
Let $y,z$ be two formal variable and denote by $\chi(y,z)$ the following generating series
of the orbifold Euler characteristics of $\Mbar_{g,n}$:
\be
\label{eq-gen-chibar}
\chi(y,z):=
\sum_{2g-2+n>0} \frac{y^n z^{2-2g}}{n!} \cdot \chi(\Mbar_{g,n})
-\widetilde V_0(y,z),
\ee
then we have:
\be
\chi(y,z) = F^{1D}(\bm t)|_{t_n = \widetilde V_{n+1}(y,z),n\geq 0} ,
\ee
where $F^{1D}$ is the logarithm of the KP tau-function $Z^{1D}|_{\lambda=1}$ specified by the topological 1D gravity,
and $\widetilde V_n(y,z)$ are the following formal series:
\be
\label{eq-tildeV}
\widetilde V_n(y,z):=
-\half y^2 z^2\cdot\delta_{n,0} + yz\cdot \delta_{n,1} +V_n(z), \qquad n\geq 0,
\ee
and $V_n(z)$ are the generating series of $\chi(\cM_{g,n})$ given by \eqref{eq-def-gen-V}.
\end{Theorem}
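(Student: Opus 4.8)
The plan is to prove the identity by directly comparing the two formal Gaussian integrals at hand: the integral \eqref{eq-1d-partitionZ-t} defining $Z^{1D}$ and the integral \eqref{eq-generating-chibar} produced by the Bini--Harer graph sum. Since $F^{1D}=\log Z^{1D}$ at $\lambda=1$, exponentiating the claimed identity turns the whole theorem into an equality of formal integrals, and the argument reduces to completing the square in the exponent. I would first set $\lambda=1$ and write
\be
\exp\big(F^{1D}(\bm t)\big)=Z^{1D}(\bm t)=\frac{1}{\sqrt{2\pi}}\int dx\,\exp\Big(-\half x^2+\sum_{n\geq 1}t_{n-1}\frac{x^n}{n!}\Big),
\ee
noting that the prefactor $\tfrac{1}{\sqrt{2\pi}}$ here matches exactly the one in \eqref{eq-generating-chibar}.

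Next I would carry out the substitution $t_n=\widetilde V_{n+1}(y,z)$ prescribed by the theorem, i.e. $t_{n-1}=\widetilde V_n(y,z)$ for $n\geq 1$. Using the explicit form \eqref{eq-tildeV} of $\widetilde V_n$, among the $\delta$-contributions only the $yz\,\delta_{n,1}$ term survives in the sum over $n\geq 1$, so the action becomes
\be
-\half x^2+yz\cdot x+\sum_{n\geq 1}V_n(z)\frac{x^n}{n!}.
\ee
Completing the square via $-\half x^2+yzx=-\half(x-yz)^2+\half y^2z^2$, and restoring the $n=0$ term $V_0(z)$, this exponent equals
\be
-\half(x-yz)^2+\sum_{n\geq 0}V_n(z)\frac{x^n}{n!}+\Big(\half y^2z^2-V_0(z)\Big),
\ee
where the final parenthesis is independent of $x$ and equals $-\widetilde V_0(y,z)$ by \eqref{eq-tildeV}.

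I would then pull the $x$-independent factor $e^{-\widetilde V_0(y,z)}$ out of the integral and invoke \eqref{eq-generating-chibar} to obtain
\be
\exp\big(F^{1D}|_{t_n=\widetilde V_{n+1}}\big)=e^{-\widetilde V_0(y,z)}\,\exp\Big(\sum_{2g-2+n>0}\frac{1}{n!}\chi(\Mbar_{g,n})\,y^nz^{2-2g}\Big).
\ee
Taking logarithms and rearranging, $F^{1D}|_{t_n=\widetilde V_{n+1}}$ equals $\sum_{2g-2+n>0}\tfrac{1}{n!}\chi(\Mbar_{g,n})y^nz^{2-2g}-\widetilde V_0(y,z)$, which is precisely $\chi(y,z)$ by its definition \eqref{eq-gen-chibar}. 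This completes the comparison.

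The main obstacle is not this algebra but checking that the substitution $t_n\mapsto\widetilde V_{n+1}(y,z)$ into the formal power series $F^{1D}(\bm t)$ is well-defined. The dependence on $y$ enters only through $t_0=\widetilde V_1=yz+V_1(z)$, while each $V_n(z)$ is a series in $z^{2-2g-n}$ whose exponents are bounded above; together with the selection rule \eqref{eq-selection-1d} governing which monomials $t_{a_1}\cdots t_{a_k}$ occur at a given genus, I would verify that for each fixed power of $y$ and each fixed power of $z$ only finitely many monomials of $F^{1D}$ contribute after substitution. This bookkeeping is what legitimizes the composition of formal series and guarantees that the exponentiation and the final logarithm are valid order by order, so that the integral comparison above can be read off coefficientwise.
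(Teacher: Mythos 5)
Your proposal is correct and follows essentially the same route as the paper, which obtains the theorem precisely by comparing the Bini--Harer integral \eqref{eq-generating-chibar} with the formal Gaussian integral \eqref{eq-1d-partitionZ-t} at $\lambda=1$ under the substitution $t_{n}=\widetilde V_{n+1}$, with the completion of the square accounting for the $-\widetilde V_0$ shift. Your closing remark on the coefficientwise finiteness of the substitution is the same point the paper addresses separately via the selection rule and the bound \eqref{ineq-finiteness} in \S \ref{sec-compute-chi}.
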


As a corollary,
by taking $y=0$ one has:
\begin{Corollary}
[\cite{wz2}]
\label{cor-chi0-KP}
Let $\chi_0(z)$ be the following generating series of the orbifold Euler characteristics of $\Mbar_{g,0}$:
\be
\label{eq-gen-chibar-0}
\chi_0(z):=
\sum_{g\geq 2} \big( \chi(\Mbar_{g,0}) - \chi(\cM_{g,0}) \big)\cdot z^{2-2g}
\ee
Then we have:
\be
\chi_0 (z)= F^{1D}(\bm t)|_{t_n = V_{n+1}(y,z),n\geq 0} ,
\ee
where $F^{1D}$ is the logarithm of the KP tau-function $Z^{1D}$ specified by the topological 1D gravity,
and $V_n(z)$ are the generating series of $\chi(\cM_{g,n})$ given by \eqref{eq-def-gen-V}.
\end{Corollary}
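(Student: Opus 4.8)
The plan is to deduce this corollary directly from Theorem~\ref{thm-chi-KP} by specializing the formal variable $y$ to $0$; no input beyond that theorem is required. Since the identity in Theorem~\ref{thm-chi-KP} holds in $\bC[[y]]((z))$, both sides being formal power series in $y$ whose coefficients are formal series in $z$, I may extract the coefficient of $y^0$ on each side. The proof then reduces to two elementary verifications: that the left-hand side $\chi(y,z)$ specializes to $\chi_0(z)$, and that the substitution $t_n = \widetilde V_{n+1}(y,z)$ specializes to $t_n = V_{n+1}(z)$.

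For the substitution, I would read off from \eqref{eq-tildeV} that $\widetilde V_{n+1}(y,z) = -\half y^2 z^2\,\delta_{n+1,0} + yz\,\delta_{n+1,1} + V_{n+1}(z)$; the first Kronecker delta vanishes identically (as $n+1\geq 1$) and the second contributes only a term proportional to $y$, so setting $y=0$ gives $\widetilde V_{n+1}(0,z) = V_{n+1}(z)$ for all $n\geq 0$, which is precisely the substitution on the right-hand side of the claim. For the left-hand side I would set $y=0$ in \eqref{eq-gen-chibar}: in the double sum every term with $n\geq 1$ carries a factor $y^n$ and drops out, leaving only the $n=0$ terms (for which $2g-2>0$ forces $g\geq 2$), namely $\sum_{g\geq 2}\chi(\Mbar_{g,0})z^{2-2g}$; meanwhile $\widetilde V_0(0,z) = V_0(z) = \sum_{g\geq 2}\chi(\cM_{g,0})z^{2-2g}$ by \eqref{eq-def-gen-V}. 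Subtracting yields $\chi(0,z) = \sum_{g\geq 2}\bigl(\chi(\Mbar_{g,0}) - \chi(\cM_{g,0})\bigr)z^{2-2g}$, which is exactly $\chi_0(z)$ as defined in \eqref{eq-gen-chibar-0}.

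The only point needing care — and the one place where the argument could conceivably fail — is that the specialization $y=0$ must commute with the substitution inside $F^{1D}(\bm t)|_{t_n=\widetilde V_{n+1}(y,z)}$. This is guaranteed by the selection rule \eqref{eq-selection-1d}: under the substitution each monomial $t_{a_1}\cdots t_{a_n}$ appearing in $F^{1D}$ acquires a definite total $z$-weight, so each fixed power of $z$ in the composite receives contributions from only finitely many monomials, each of which is a polynomial in $y$. Hence extracting the $y^0$ coefficient of the composite coincides with first substituting $t_n = V_{n+1}(z)$, and combining this observation with the two verifications above completes the proof.
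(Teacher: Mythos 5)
Your proposal is correct and follows exactly the route the paper takes: the paper derives this corollary from Theorem \ref{thm-chi-KP} simply "by taking $y=0$", which is precisely your argument, with the specializations $\chi(0,z)=\chi_0(z)$ and $\widetilde V_{n+1}(0,z)=V_{n+1}(z)$ verified as you do. Your extra remark on why setting $y=0$ commutes with the substitution into $F^{1D}$ is a careful addition the paper leaves implicit, but it does not change the approach.
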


\subsection{Computation of the generating series of $\chi(\Mbar_{g,n})$ using topological recursion}

Now we can use the results for the tau-function $Z^{1D}|_{\lambda=1}$ of the KP hierarchy
developed in \S \ref{sec-mainresult}
to compute the generating series $\chi(y,z)$ of $\chi(\Mbar_{g,n})$.

Recall that the free energy $F^{1D}=\log Z^{1D}$ can be recovered from the
total $n$-point function $W^{1D}$
using the formula \eqref{eq-npt-fe}.
Now let $\widetilde \Psi$ be the evaluation of the map $\Psi$ to the time:
\ben
t_n = \widetilde V_{n+1} (y,z),
\qquad n\geq 0,
\een
i.e., $\widetilde \Psi$ is the linear map:
\be
\label{eq-specialization-chi}
\begin{split}
\widetilde\Psi :\quad
\prod_{n\geq 1} z_1^{-1}\cdots z_n^{-1} \bC[z_1^{-1},\cdots, z_n^{-1}]
&\quad\to\quad \bC[[y]]((z)),\\
z_1^{-j_1-1}\cdots z_n^{-j_n-1} &\quad\mapsto\quad
\frac{\widetilde V_{j_1}(y,z)\cdots \widetilde V_{j_n}(y,z)}{n!\cdot j_1!\cdots j_n!},
\end{split}
\ee
where $\widetilde V_n(y,z)$ are given by \eqref{eq-tildeV}.
Then the generating series $\chi(y,z)$ defined by \eqref{eq-gen-chibar} can be computed using the following formula:
\begin{Theorem}
\label{thm-chi-1dnpt}
We have:
\be
\label{eq-chi-1dnpt}
\begin{split}
\chi(y,z) =& \widetilde \Psi \bigg(
\sum_{(g,n)\not=(0,1)} W_{g,n}^{1D} (z_1,\cdots, z_n)
\bigg)\\
=& \widetilde \Psi \bigg(
\sum_{g\geq 0,n\geq 1} W_{g,n}^{\text{fat}} (z_1,\cdots, z_n)
\bigg)
\end{split}
\ee
where $W_{g,n}^{1D}$ can be computed recursively using \eqref{eq-toporec-1}
together with the initial value $W_{0,1}^{1D}(z_1)=\frac{1}{z_1}$;
and $W_{g,n}^{\text{fat}}$ can be computed recursively using the E-O topological recursion \eqref{eq-EO-fat}.
\end{Theorem}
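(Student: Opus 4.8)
The plan is to reduce the statement to two facts already in hand: the reconstruction of the free energy from its total $n$-point function, equation \eqref{eq-npt-fe}, which reads $F^{1D}(\bm t) = \Psi(W^{1D}(\bm z))$, and the specialization identity of Theorem \ref{thm-chi-KP}, which reads $\chi(y,z) = F^{1D}(\bm t)|_{t_n = \widetilde V_{n+1}(y,z)}$. The central observation I would make is that the map $\widetilde\Psi$ defined in \eqref{eq-specialization-chi} is nothing but the composition of $\Psi$ with the time-substitution $t_n \mapsto \widetilde V_{n+1}(y,z)$. Indeed, applying $\Psi$ to a monomial $z_1^{-j_1-1}\cdots z_n^{-j_n-1}$ produces $\frac{t_{j_1-1}\cdots t_{j_n-1}}{n!\, j_1!\cdots j_n!}$, and replacing each $t_{j_k-1}$ by $\widetilde V_{j_k}(y,z)$ yields exactly the image of that monomial under $\widetilde\Psi$. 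Thus, writing $\mathrm{spec}$ for the substitution $t_n \mapsto \widetilde V_{n+1}(y,z)$, we have $\widetilde\Psi = \mathrm{spec}\circ\Psi$ on every monomial, hence on the whole space by linearity.

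Granting this, the computation chains together immediately:
\ben
\widetilde\Psi\big(W^{1D}(\bm z)\big)
= \mathrm{spec}\big(\Psi(W^{1D}(\bm z))\big)
= \mathrm{spec}\big(F^{1D}(\bm t)\big)
= \chi(y,z),
\een
where the first equality is the observation above, the second is \eqref{eq-npt-fe}, and the third is Theorem \ref{thm-chi-KP}. To obtain the two displayed forms of the answer I would invoke the identity $W^{1D}(\bm z) = \sum_{(g,n)\neq(0,1)} W_{g,n}^{1D} = \sum_{g\geq 0,\, n\geq 1} W_{g,n}^{\text{fat}}$ recorded just before \eqref{eq-npt-fe}, which equates the thin and fat reassemblings of the correlators; the recursive computability of the two families then follows from the quadratic recursion \eqref{eq-toporec-1} and the Eynard--Orantin recursion \eqref{eq-EO-fat} respectively, as already established.

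The one point that genuinely requires care, and which I regard as the main obstacle, is justifying the interchange of the substitution $\mathrm{spec}$ with the infinite summation over $(g,n)$ that assembles $W^{1D}$; equivalently, that evaluating $\widetilde\Psi$ on the total $n$-point function termwise reproduces $\mathrm{spec}(F^{1D})$. Theorem \ref{thm-chi-KP} already guarantees that $\mathrm{spec}(F^{1D})$ is a well-defined element of $\bC[[y]]((z))$, so what must be added is that $\widetilde\Psi$ is well-defined and continuous on $W^{1D}$. The resolution is the selection rule \eqref{eq-selection-1d}: a monomial $t_{a_1}\cdots t_{a_n}$ occurs in $F^{1D}$ only when $\sum_i a_i = 2g-2+n$, and since each $\widetilde V_{a_k+1}(y,z)$ has top $z$-degree $z^{\,1-a_k}$, the product contributes top degree $z^{\,2-2g}$, matching the genus grading while lower $z$-powers come only from lower-genus terms of the $V_j$. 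This pins down, for each fixed coefficient of $y^a z^b$, a finite set of contributing monomials, so that $\mathrm{spec}\circ\Psi$ and $\widetilde\Psi$ agree not merely monomialwise but as maps on the full (infinite) total $n$-point function. I would make this grading bookkeeping explicit, after which the chain of equalities above is rigorous.
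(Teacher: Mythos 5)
Your main chain of equalities --- the observation that $\widetilde\Psi$ is the composition of $\Psi$ with the substitution $t_n\mapsto \widetilde V_{n+1}(y,z)$, followed by $\widetilde\Psi(W^{1D}) = \mathrm{spec}(\Psi(W^{1D})) = \mathrm{spec}(F^{1D}) = \chi(y,z)$ and the identification of $W^{1D}$ with the thin and fat reassemblings --- is exactly how the paper obtains the theorem; the paper states it without a formal proof precisely because it is this immediate combination of \eqref{eq-npt-fe} with Theorem \ref{thm-chi-KP}, and it treats the finiteness question separately in the subsection ``A remark on the formulas.'' The one place your write-up needs repair is the degree bookkeeping in your final paragraph. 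The bound ``each $\widetilde V_{a_k+1}(y,z)$ has top $z$-degree $z^{1-a_k}$'' fails for $a_k=1$: one has $\widetilde V_2(y,z)=\tfrac{1}{12}z^{-2}+\cdots=O(z^{-2})$, not $O(z^{0})$. More importantly, the loose bound $z^{1-a_k}$ is not enough to pin down a finite set of contributing monomials: the genus-zero monomials $t_0^2t_1^{\,n-2}$ satisfy the selection rule \eqref{eq-selection-1d} for every $n$, and under your bound each of the products $\widetilde V_1^2\widetilde V_2^{\,n-2}$ would be allowed top degree $z^{2}$ independently of $n$, so infinitely many of them could a priori contribute to a fixed coefficient $y^2z^b$. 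The correct estimates are \eqref{eq-property-z}, namely $\widetilde V_1=yz+O(z^{-1})$, $\widetilde V_2=O(z^{-2})$ and $\widetilde V_n=O(z^{-(n-2)})$ for $n\geq 3$; these yield the paper's inequality \eqref{ineq-finiteness}, which bounds both the number of factors and the indices that can appear for a fixed $y^nz^{2-2g}$. Substituting those sharper bounds for yours closes the finiteness step and makes the argument rigorous; everything else in your proposal matches the paper.
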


Similarly,
define $\widetilde\Psi_0$ to be the specialization:
\be
\label{eq-specialization-chi-0}
\begin{split}
\widetilde\Psi_0 :\quad
\prod_{n\geq 1} z_1^{-1}\cdots z_n^{-1} \bC[z_1^{-1},\cdots, z_n^{-1}]
&\quad\to\quad \bC[[z^{-1}]],\\
z_1^{-j_1-1}\cdots z_n^{-j_n-1} &\quad\mapsto\quad
\frac{ V_{j_1}(z)\cdots  V_{j_n}(z)}{n!\cdot j_1!\cdots j_n!},
\end{split}
\ee
where $V_n(z)=\widetilde V(0,z)$ are given by \eqref{eq-def-gen-V},
then the following is obtained by taking $y=0$ in \eqref{eq-chi-1dnpt}:
\begin{Corollary}
The generating series \eqref{eq-gen-chibar-0} of $\chi(\Mbar_{g,0})$ is given by:
\be
\chi_0(z) = \widetilde \Psi_0 \bigg(
\sum_{(g,n)\not=(0,1)} W_{g,n}^{1D} (z_1,\cdots, z_n)
\bigg)
=\widetilde \Psi_0 \bigg(
\sum_{g\geq 0,n\geq 1} W_{g,n}^{\text{fat}} (z_1,\cdots, z_n)
\bigg).
\ee
\end{Corollary}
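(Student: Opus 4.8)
The plan is to deduce the Corollary directly from Theorem~\ref{thm-chi-1dnpt} by specializing the formal parameter $y$ to $0$, after checking that both sides of \eqref{eq-chi-1dnpt} behave correctly under this substitution. The two facts I need are that the left-hand side satisfies $\chi(y,z)|_{y=0}=\chi_0(z)$, and that the specialization map $\widetilde\Psi$ reduces to $\widetilde\Psi_0$ when $y=0$.

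First I would analyze the left-hand side. In the definition \eqref{eq-gen-chibar} of $\chi(y,z)$, setting $y=0$ kills every term with $n\geq 1$ in the sum $\sum_{2g-2+n>0}\frac{y^n z^{2-2g}}{n!}\chi(\Mbar_{g,n})$, so only the $n=0$ contributions survive (and these force $g\geq 2$). Using that $\widetilde V_0(0,z)=V_0(z)$ by \eqref{eq-tildeV} and \eqref{eq-def-gen-V}, I would obtain
\[
\chi(y,z)\big|_{y=0}
=\sum_{g\geq 2}\chi(\Mbar_{g,0})\,z^{2-2g}-V_0(z)
=\sum_{g\geq 2}\big(\chi(\Mbar_{g,0})-\chi(\cM_{g,0})\big)\,z^{2-2g},
\]
which is exactly $\chi_0(z)$ as defined in \eqref{eq-gen-chibar-0}.

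Next I would treat the specialization map on the right-hand side. From \eqref{eq-tildeV} we have $\widetilde V_n(y,z)=-\half y^2 z^2\,\delta_{n,0}+yz\,\delta_{n,1}+V_n(z)$, and both $y$-dependent terms vanish at $y=0$, so $\widetilde V_n(0,z)=V_n(z)$ for all $n\geq 0$. Comparing \eqref{eq-specialization-chi} with \eqref{eq-specialization-chi-0}, this shows that $\widetilde\Psi$ evaluated at $y=0$ sends each monomial $z_1^{-j_1-1}\cdots z_n^{-j_n-1}$ to $\frac{V_{j_1}(z)\cdots V_{j_n}(z)}{n!\,j_1!\cdots j_n!}$, i.e.\ to the same value as $\widetilde\Psi_0$. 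Hence $(\widetilde\Psi(w))|_{y=0}=\widetilde\Psi_0(w)$ for every $w$ in the common domain. Combining this with the left-hand side computation and Theorem~\ref{thm-chi-1dnpt} then yields $\chi_0(z)=\widetilde\Psi_0\big(\sum_{(g,n)\not=(0,1)}W_{g,n}^{1D}\big)$, and the fat version follows identically from the second equality in \eqref{eq-chi-1dnpt}.

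Since the two key identities $\chi(0,z)=\chi_0(z)$ and $\widetilde\Psi|_{y=0}=\widetilde\Psi_0$ are immediate from the definitions, no serious obstacle arises; the only point deserving a word of care is that the substitution $y=0$ commutes with the infinite summation inside the specialization. I would justify this by noting that $\widetilde\Psi$ is $\bC$-linear and that, by the formal grading of $\bC[[y]]((z))$ (equivalently, via the substitution $t_n=\widetilde V_{n+1}(0,z)=V_{n+1}(z)$ into $F^{1D}(\bm t)$ together with Corollary~\ref{cor-chi0-KP}), each fixed monomial receives contributions from only finitely many $W_{g,n}^{1D}$, so termwise substitution agrees with substituting into the total $n$-point function. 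Apart from this bookkeeping, the Corollary is a formal consequence of the $y=0$ specialization of the Theorem.
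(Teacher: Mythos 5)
Your proposal is correct and follows exactly the paper's route: the Corollary is stated as being "obtained by taking $y=0$ in \eqref{eq-chi-1dnpt}", and your verification that $\chi(y,z)|_{y=0}=\chi_0(z)$ and that $\widetilde\Psi$ specializes to $\widetilde\Psi_0$ at $y=0$ (via $\widetilde V_n(0,z)=V_n(z)$) simply makes explicit what the paper leaves implicit. The extra remark on finitely many contributions to each fixed monomial is consistent with the paper's own finiteness discussion in \S\ref{sec-compute-chi} and is a reasonable, if not strictly necessary, piece of bookkeeping.
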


\subsection{A formula for the generating series in terms of affine coordinates}
\label{sec-chi-affine}

One can rewrite the above formula for the generating series $\chi(y,z)$
in terms of the affine coordinates for the tau-function $Z^{1D}$.
Using the fact
\begin{equation*}
W^{1D}(\bm z) = \sum_{n\geq 1} G_{(n)}^{1D} (z_1,\cdots,z_n).
\end{equation*}
and Proposition \ref{prop-thm-npt-1d} which gives a formula for $G_{(n)}^{1D}$
in terms of the affine coordinates of $Z^{1D}|_{\lambda=1}$,
we easily have:
\begin{Theorem}
The generating series
\ben
\chi(y,z):=
\sum_{2g-2+n>0} \frac{1}{n!} y^n z^{2-2g}\cdot \chi(\Mbar_{g,n})
-\widetilde V_0(y,z)
\een
is given by the formula:
\begin{equation}
\label{eq-formula-chi}
\chi(y,z)
=\widetilde \Psi\bigg(\sum_{n\geq 1} (-1)^{n-1} \sum_{n\text{-cycles }\sigma}
\prod_{i=1}^{n}
\widehat A^{1D} (z_{\sigma(i)}, z_{\sigma(i+1)})
-\frac{1}{(z_1-z_2)^2}
\bigg),
\end{equation}
where $\widetilde\Psi$ is given by \eqref{eq-specialization-chi}, and
\be
\widehat A^{1D}(\xi,\eta)
=\sum_{k=0}^\infty (2k+1)!!\cdot \xi^{-1}\eta^{-2k-2}
+ \frac{1}{\xi-\eta}.
\ee
\end{Theorem}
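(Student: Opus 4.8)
The plan is to obtain the statement as a direct combination of three results already in hand: the specialization formula of Theorem~\ref{thm-chi-1dnpt}, the reassembling identity $W^{1D}(\bm z)=\sum_{n\geq 1}G_{(n)}^{1D}(z_1,\cdots,z_n)$, and the closed expression for the connected $n$-point functions in terms of affine coordinates given by Proposition~\ref{prop-thm-npt-1d}. In other words, this theorem is a repackaging of Theorem~\ref{thm-chi-1dnpt} after substituting the explicit affine-coordinate formula for the total $n$-point function, rather than an independent computation.

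First I would start from Theorem~\ref{thm-chi-1dnpt}, which already gives $\chi(y,z)=\widetilde\Psi\big(\sum_{(g,n)\neq(0,1)}W_{g,n}^{1D}\big)=\widetilde\Psi(W^{1D}(\bm z))$; in particular the well-definedness of the right-hand side under the linear map $\widetilde\Psi$ is inherited from that theorem and needs no separate argument. I would then rewrite the total $n$-point function in its grouping by the number of marked points, $W^{1D}(\bm z)=\sum_{n\geq 1}G_{(n)}^{1D}(z_1,\cdots,z_n)$, so that, by linearity of $\widetilde\Psi$, it remains to insert the closed form of each $G_{(n)}^{1D}$.

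Next I would apply Proposition~\ref{prop-thm-npt-1d}, namely $G_{(n)}^{1D}=(-1)^{n-1}\sum_{n\text{-cycles }\sigma}\prod_{i=1}^{n}\widehat A^{1D}(z_{\sigma(i)},z_{\sigma(i+1)})-\delta_{n,2}/(z_1-z_2)^2$, and sum over $n\geq 1$. The cyclic products accumulate termwise into $\sum_{n\geq 1}(-1)^{n-1}\sum_{n\text{-cycles }\sigma}\prod_i\widehat A^{1D}$, while the correction term is nonzero only at $n=2$ and contributes the single summand $-1/(z_1-z_2)^2$; applying $\widetilde\Psi$ then yields exactly the right-hand side of the claimed formula. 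To finish, I would record that, by the affine coordinates $a_{n,m}^{1D}=m!!$ for $n=0$ with $m$ odd and $a_{n,m}^{1D}=0$ otherwise (Theorem~\ref{thm-affinecoord-1D}), the generating series $A^{1D}(\xi,\eta)=\sum_{m,n\geq 0}a_{n,m}^{1D}\xi^{-n-1}\eta^{-m-1}$ collapses to $\sum_{k\geq 0}(2k+1)!!\,\xi^{-1}\eta^{-2k-2}$, whence $\widehat A^{1D}(\xi,\eta)=A^{1D}(\xi,\eta)+1/(\xi-\eta)$ takes precisely the form displayed in the statement.

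Because every ingredient is already available, no step poses a genuine obstacle; the work is purely organizational. The only points I would double-check are the bookkeeping of the $\delta_{n,2}$ correction (so that a single $-1/(z_1-z_2)^2$ appears rather than a spurious factor or sign), and the truncation of the double sum defining $A^{1D}$ to a single series, which is immediate once all affine coordinates except the $n=0$, $m$ odd entries are known to vanish.
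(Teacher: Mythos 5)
Your proposal is correct and follows exactly the route the paper takes: the theorem is obtained by combining Theorem~\ref{thm-chi-1dnpt}, the identity $W^{1D}(\bm z)=\sum_{n\geq 1}G_{(n)}^{1D}(z_1,\cdots,z_n)$, and Proposition~\ref{prop-thm-npt-1d}, with the $\delta_{n,2}$ term contributing the single $-1/(z_1-z_2)^2$ and the affine coordinates of Theorem~\ref{thm-affinecoord-1D} collapsing $A^{1D}$ to the displayed series. The paper states this derivation in one sentence; your write-up is the same argument with the bookkeeping made explicit.
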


Taking $y=0$, we have:
\begin{Corollary}
The generating series \eqref{eq-gen-chibar-0} is given by:
\begin{equation}
\chi_0(z)
=\widetilde \Psi_0 \bigg(\sum_{n\geq 1} (-1)^{n-1} \sum_{n\text{-cycles }\sigma}
\prod_{i=1}^{n}
\widehat A^{1D} (z_{\sigma(i)}, z_{\sigma(i+1)})
-\frac{1}{(z_1-z_2)^2}
\bigg),
\end{equation}
where the map $\widetilde\Psi_0$ is given by \eqref{eq-specialization-chi-0}.
\end{Corollary}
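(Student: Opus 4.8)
The plan is to obtain this Corollary directly from the preceding Theorem by specializing $y=0$, so the whole argument reduces to checking that this substitution acts correctly on both sides of the formula \eqref{eq-formula-chi}. The one elementary fact I would record first is that, by the definition \eqref{eq-tildeV}, the two $y$-dependent terms $-\frac{1}{2}y^2z^2\delta_{n,0}$ and $yz\delta_{n,1}$ in $\widetilde V_n(y,z)$ both vanish at $y=0$, so that $\widetilde V_n(y,z)\big|_{y=0}=V_n(z)$ for every $n\geq 0$. Since the linear map $\widetilde\Psi$ of \eqref{eq-specialization-chi} sends each monomial $z_1^{-j_1-1}\cdots z_n^{-j_n-1}$ to $\frac{\widetilde V_{j_1}(y,z)\cdots \widetilde V_{j_n}(y,z)}{n!\,j_1!\cdots j_n!}$, replacing every $\widetilde V_{j_i}$ by $V_{j_i}$ turns $\widetilde\Psi$ into precisely the map $\widetilde\Psi_0$ of \eqref{eq-specialization-chi-0}. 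The key point making this clean is that the argument of $\widetilde\Psi$ in \eqref{eq-formula-chi}, namely $\sum_{n\geq 1}(-1)^{n-1}\sum_{n\text{-cycles }\sigma}\prod_{i=1}^n \widehat A^{1D}(z_{\sigma(i)},z_{\sigma(i+1)})-\frac{1}{(z_1-z_2)^2}$, is assembled only from $\widehat A^{1D}$ and the formal variables $z_i$ and carries no dependence on $y$; hence the specialization touches only the map, not its argument.

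Next I would verify that the left-hand side $\chi(y,z)$ collapses to $\chi_0(z)$ at $y=0$. In the defining sum \eqref{eq-gen-chibar} every term $\frac{y^n z^{2-2g}}{n!}\chi(\Mbar_{g,n})$ with $n\geq 1$ is annihilated by $y^n$, leaving only the $n=0$ contributions, which under the constraint $2g-2>0$ give $\sum_{g\geq 2}\chi(\Mbar_{g,0})z^{2-2g}$; meanwhile the subtracted term $\widetilde V_0(y,z)$ reduces to $V_0(z)=\sum_{g\geq 2}\chi(\cM_{g,0})z^{2-2g}$. Their difference is exactly $\chi_0(z)$ as defined in \eqref{eq-gen-chibar-0}. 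Combining this with the specialization $\widetilde\Psi\mapsto\widetilde\Psi_0$ established in the first step then yields the asserted identity.

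The only step that truly requires care, and which I would state explicitly rather than leave implicit, is the legitimacy of commuting the evaluation $y=0$ past the infinite formal operations involved, that is, past the summation over $n$ and over $n$-cycles in the total $n$-point function and past the monomial-by-monomial definition of $\widetilde\Psi$. Because all objects live in formal power series rings and the maps $\widetilde\Psi,\widetilde\Psi_0$ are defined termwise, the substitution $y=0$ is a continuous ring homomorphism for the relevant adic topologies and therefore exchanges with these sums; once this is noted, no genuine obstacle remains, and the Corollary follows.
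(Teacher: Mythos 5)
Your proposal is correct and follows exactly the paper's route: the Corollary is obtained from the Theorem in \S \ref{sec-chi-affine} simply by setting $y=0$, under which $\widetilde V_n(y,z)$ reduces to $V_n(z)$, $\widetilde\Psi$ reduces to $\widetilde\Psi_0$, and $\chi(y,z)$ reduces to $\chi_0(z)$. Your additional checks (that the argument of $\widetilde\Psi$ is $y$-independent and that the termwise specialization commutes with the formal sums) are exactly the implicit content of the paper's one-line proof, spelled out.
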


\subsection{A remark on the formulas}

Here let us remark that for every fixed pair $(g,n)$ with $2g-2+n>0$,
the coefficients of $y^n z^{2-2g}$ in the right-hand sides of \eqref{eq-chi-1dnpt} and \eqref{eq-formula-chi}
are summations of finite numbers of terms,
hence these formulas indeed work for practical computations even though they involve infinite summations.

In fact,
notice that the generating series $\widetilde V_n(y,z)$ ($n\geq 1$) are of the form:
\be
\begin{split}
\label{eq-data-tildeV}
& \widetilde V_1(y,z) =
yz -\frac{1}{12}z^{-1} +\frac{1}{120} z^{-3}-\frac{1}{252}z^{-5}
+ \frac{1}{240}z^{-7} -\frac{1}{132} z^{-9} +\cdots,\\
& \widetilde V_2(y,z) =
\frac{1}{12}z^{-2} -\frac{1}{40}z^{-4} +\frac{5}{252}z^{-6}
-\frac{7}{240}z^{-8} +\frac{3}{44}z^{-10}
-\cdots,\\
& \widetilde V_3(y,z) = z^{-1}
-\frac{1}{6}z^{-3}+ \frac{1}{10}z^{-5} -\frac{5}{42}z^{-7}+ \frac{7}{30}z^{-9} -\frac{15}{22}z^{-11} +\cdots,\\
& \widetilde V_4(y,z) = -z^{-2}
+\frac{1}{2}z^{-4} -\frac{1}{2}z^{-6} +\frac{5}{6}z^{-8} -\frac{21}{10}z^{-10} +\frac{15}{2}z^{-12} -\cdots,\\
& \widetilde V_5(y,z) = 2z^{-3}
-2 z^{-5} +3z^{-7} -\frac{20}{3}z^{-9}+ 21z^{-11} -90z^{-13} +\cdots,\\
&\cdots\cdots
\end{split}
\ee
In particular, when $z\to\infty$, one has:
\be
\label{eq-property-z}
\begin{split}
&\widetilde V_1(y,z) = yz + O(z^{-1}),\\
&\widetilde V_2(y,z) =  O(z^{-2}),\\
&\widetilde V_n(y,z) =  O(z^{-(n-2)}),\qquad n\geq 3.
\end{split}
\ee
Now fix a pair $(g,n)$ with $2g-2+n>0$,
and consider all the terms which are nonzero multiples of $y^n z^{2-2g}$
in the right-hand side of \eqref{eq-chi-1dnpt} or \eqref{eq-formula-chi}.
We claim:
\begin{Lemma}
Let $p_1,p_2,\cdots,p_m$ be some nonnegative integers.
If the coefficient of $y^n z^{2-2g}$ is nonzero in
$\widetilde V_1^{p_1}(y,z) \widetilde V_2^{p_2}(y,z) \cdots \widetilde V_m^{p_m}(y,z)$,
then:
\be
\label{ineq-finiteness}
p_1+ 2p_2+\sum_{i=3}^m (i-2) p_i \leq 2n-2+2g.
\ee
\end{Lemma}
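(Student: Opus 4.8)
The plan is to reduce the statement to a single additive invariant. For a monomial $y^a z^b$ define its \emph{weight} $w(y^a z^b) := 2a - b$; since the exponents of $y$ and of $z$ add under multiplication, $w$ is additive on products of monomials. The first observation---and the reason this invariant is the right one---is that the right-hand side of \eqref{ineq-finiteness} is exactly the weight of the target monomial:
\[
w(y^n z^{2-2g}) = 2n - (2-2g) = 2n - 2 + 2g .
\]
Likewise, writing $c_1 := 1$, $c_2 := 2$ and $c_i := i-2$ for $i \geq 3$, the left-hand side of \eqref{ineq-finiteness} is precisely $\sum_i c_i p_i$.

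The heart of the argument is then a per-factor lower bound: \emph{every} monomial occurring in $\widetilde V_i(y,z)$ has weight at least $c_i$. This is read off directly from \eqref{eq-tildeV} and the explicit shape of $V_n(z)$ in \eqref{eq-def-gen-V}. Concretely: (i) for $i=1$, the term $yz$ has weight $2-1 = 1 = c_1$, while each term $\chi(\cM_{g,1})\,z^{1-2g}$ with $g \geq 1$ has weight $2g - 1 \geq 1$; (ii) for $i=2$, each term $\chi(\cM_{g,2})\,z^{-2g}$ with $g \geq 1$ has weight $2g \geq 2 = c_2$; (iii) for $i \geq 3$, each term $\chi(\cM_{g,i})\,z^{2-2g-i}$ with $g \geq 0$ has weight $2g + i - 2 \geq i - 2 = c_i$. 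In every case raising the genus by one raises the weight by $2$, so the bound is governed by the minimal genus ($g=1$ for $i = 1,2$ and $g=0$ for $i \geq 3$), and it is attained by the $yz$ term when $i=1$ and by the leading $z$-power otherwise. This is the same information already encoded in the large-$z$ asymptotics \eqref{eq-property-z}.

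To finish, suppose $y^n z^{2-2g}$ occurs with nonzero coefficient in $\widetilde V_1^{p_1} \cdots \widetilde V_m^{p_m}$. Expanding the product by the distributive law, this coefficient is a sum over the ways of selecting one monomial from each of the $\sum_i p_i$ factors whose product equals $y^n z^{2-2g}$; nonvanishing of the coefficient forces at least one such selection to exist. By additivity of $w$ the target weight equals the sum of the weights of the selected monomials, and by the per-factor bound this sum is at least $\sum_i c_i p_i$. Hence
\[
p_1 + 2p_2 + \sum_{i=3}^m (i-2) p_i = \sum_i c_i p_i \leq w(y^n z^{2-2g}) = 2n - 2 + 2g,
\]
which is \eqref{ineq-finiteness}. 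I expect no genuine obstacle: the only substantive step is the per-factor weight estimate of the second paragraph, and that is pure bookkeeping once the weight $w = 2a - b$ has been isolated as the correct quantity.
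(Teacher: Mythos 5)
Your proof is correct and is essentially the paper's argument: both rest on the leading-order behavior \eqref{eq-property-z} of the $\widetilde V_i$, the paper handling the $y$-degree via a binomial expansion of $\widetilde V_1^{p_1}$ and the $z$-degree via the asymptotics, while you package the same degree count into the single additive weight $w(y^az^b)=2a-b$. That repackaging is a clean way to organize the bookkeeping, but the substance is identical.
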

\begin{proof}
We have:
\ben
\widetilde V_1^{p_1} \widetilde V_2^{p_2} \cdots \widetilde V_m^{p_m}
= \sum_{l=0}^{p_1}
\binom{k}{l}\cdot (yz)^l \cdot
(-\frac{1}{12}z^{-1} +\cdots)^{p_1-l}
\cdot V_2^{p_2} \cdots \widetilde V_m^{p_m}.
\een
If there is a nonzero multiple of $y^n z^{2-2g}$ in
$\widetilde V_1^{p_1} \widetilde V_2^{p_2} \cdots \widetilde V_m^{p_m}$,
by \eqref{eq-property-z} one must have:
\ben
2-2g\leq n-(p_1-n) - 2p_2 - p_3 -2p_4 - \cdots -(m-2)p_m.
\een
This proves the inequality \eqref{ineq-finiteness}.
\end{proof}

Now given a pair $(g,n)$ with $2g-2+n>0$,
assume that $\widetilde\Psi (z_1^{-j_1-1}\cdots z_N^{-j_N-1})$ contains a nonzero multiple of $y^nz^{2-2g}$
where $\widetilde \Psi$ is the specialization \eqref{eq-specialization-chi}.
Then using the condition \eqref{ineq-finiteness} one easily sees that $N\leq 2n-2+2g$ and
$j_i\leq 2n+2g$ for every $i=1,2,\cdots, N$.
This tells us that for fixed $(g,n)$ we only need to compute a finite number of terms in the $n$-point functions
in the formulas we have derived in last two subsections to get the coefficient of $y^nz^{2-2g}$.
Hence one can indeed compute the numbers $\chi(\Mbar_{g,n})$ using these formulas
with the help of a compute.

\begin{Example}
Let us give some simple examples of the computations.
Consider $(g,n)=(1,1)$,
and we need to find all the terms $z_1^{-j_1-1}\cdots z_N^{-j_N-1}$ in the total $n$-point functions
with $N\leq 2$ and $j_i\leq 4$.
Using the data listed in Appendix \ref{sec-app-1d-data},
one only needs to compute the coefficient of $y^1 z^0 $ in:
\ben
\half \widetilde V_2 +\frac{1}{8}\widetilde V_4
+\half \widetilde V_1^2 +\frac{1}{2}\widetilde V_1 \widetilde V_3
+\frac{1}{4} \widetilde V_2 \widetilde V_4
+\frac{1}{4} \widetilde V_2^2 +\frac{5}{24} \widetilde V_3^2 + \frac{1}{12}\widetilde V_4^2.
\een
One easily sees that only $\half \widetilde V_1^2$ and $\frac{1}{2}\widetilde V_1 \widetilde V_3$
have nontrivial contributes since all the others are independent of $y$.
Then using the data \eqref{eq-data-tildeV} one finds that
the coefficient of $y^1 z^0$ in $\half \widetilde V_1^2 + \frac{1}{2}\widetilde V_1 \widetilde V_3$ is:
\ben
\half \cdot 2\cdot 1 \cdot(-\frac{1}{12}) + \half \cdot 1 \cdot 1=\frac{5}{12}.
\een
This tells that $\chi(\Mbar_{1,1})= \frac{5}{12}$.

Similarly
consider the case $(g,n)=(1,2)$,
and all the terms that have nonzero contribute to the coefficient of $y^2 z^0$ are:
\ben
\half \widetilde V_1^2\widetilde V_2 +\half \widetilde V_1^2\widetilde V_3^2
+\frac{1}{4}\widetilde V_1^2\widetilde V_4 +\frac{1}{6}\widetilde V_1^3\widetilde V_3,
\een
and thus the coefficient of $y^2 z^0$ is:
\ben
\half\cdot \frac{1}{12} + \half \cdot 1^2 +
\frac{1}{4}\cdot (-1)+ \frac{1}{6}\cdot 3\cdot (-\frac{1}{12})\cdot 1 =
\frac{1}{4},
\een
thus $\chi(\Mbar_{1,2})=2!\cdot \frac{1}{4} = \half$.

More numerical data of $\chi(\Mbar_{g,n})$ will be given in Appendix \ref{sec-app-data-chi}.
\end{Example}

\vspace{.2in}


\begin{appendices}

\section{A Combinatorial Calculation of the Affine Coordinates of $Z^{1D}$}
\label{sec-proof-comb}

In this appendix,
we give a proof to Theorem \ref{thm-affinecoord-1D}
by a simple observation concerning symmetric functions and ordinary graphs.

\subsection{Preliminaries of the boson-fermion correspondence}
\label{sec-app-pre}

First let us recall some basic knowledge of the boson-fermion correspondence.
We only state some necessary facts without proof,
for details see the book \cite{djm}.
Here the notations we use are following \cite{zhou1}.

Let $U\in Gr_{(0)}$ be an element of the big cell of the Sato Grassmannian.
It is a subspace of $H= \bC[z]\oplus z^{-1}\bC [[z^{-1}]]$ spanned by the normalized basis:
\be
\big\{
f_n
=z^{n} + \sum_{m\geq 0} a_{n,m} z^{-m-1}
\big\}_{n\geq 0},
\ee
where $\{a_{n,m}\}_{n,m\geq 0}$ are the affine coordinates for $U$.
This basis determines a semi-infinite wedge product:
\begin{equation*}
\begin{split}
(z^\half f_0)\wedge (z^\half f_1)\wedge (z^\half f_2)\wedge \cdots
=& \sum \alpha_{m_1,\cdots, m_k;n_1,\cdots, n_k}
z^{-m_1-\half}\wedge\cdots\wedge z^{-m_k-\half} \\
&\quad \wedge z^{\half}\wedge z^{\frac{3}{2}}\wedge\cdots\wedge \widehat{z^{n_k+\half}}
\wedge\cdots \wedge \widehat{z^{n_1+\half}} \wedge \cdots,
\end{split}
\end{equation*}
where $m_1>m_2>\cdots >m_k\geq 0$ and $n_1>n_2>\cdots >n_k\geq 0$ are two sequences of integers,
and the coefficients are given by:
\ben
\alpha_{m_1,\cdots, m_k;n_1,\cdots, n_k}
=(-1)^{n_1+\cdots+n_k} \cdot
\det
\begin{pmatrix}
a_{n_1,m_1} & \cdots & a_{n_1,m_k} \\
\vdots & \vdots & \vdots\\
a_{n_k,m_1} & \cdots & a_{n_k,m_k}
\end{pmatrix}.
\een
Thus we get a linear map:
\be
Gr_{(0)}\to \cF^{(0)},\qquad
U=\text{span}\{f_0,f_1,f_2\cdots\}
\mapsto |U\rangle = (z^\half f_0)\wedge (z^\half f_1) \wedge \cdots,
\ee
where $\cF^{(0)}$ is the fermionic Fock space of charge $0$
(see eg. \cite[\S 3]{zhou1} for definitions and notations),
and $\{f_n\}_{n\geq 0}$ is the normalized basis for $U$.
Moreover, one has:
\begin{Theorem}
[\cite{zhou1}]
\label{thm-coeff-Bogoliubov}
Let $\{a_{n,m}\}_{n,m\geq 0}$ be the affine coordinates of $U\in Gr_{(0)}$.
Then $|U\rangle \in \cF^{(0)}$ is equal to the following Bogoliubov transform of the fermionic vacuum $|0\rangle \in \cF^{(0)}$:
\ben
|U\rangle = e^A |0\rangle,
\een
where $A:\cF^{(0)}\to\cF^{(0)}$ is defined by:
\be
\label{eq-def-bog}
A:= \sum_{n,m\geq 0} a_{n,m} \psi_{-m-\half} \psi_{-n-\half}^*,
\ee
and $\psi_{-m-\half}, \psi_{-n-\half}^*$ are the fermionic creators for $m,n\geq 0$.
\end{Theorem}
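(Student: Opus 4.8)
The plan is to evaluate both sides of $|U\rangle=e^{A}|0\rangle$ in the monomial basis of $\cF^{(0)}$ and match coefficients. I would index this basis by the pairs of strictly decreasing sequences $m_1>\cdots>m_k\geq 0$, $n_1>\cdots>n_k\geq 0$ that already appear in the semi-infinite wedge expansion preceding the statement, with $|0\rangle=z^{\half}\wedge z^{\frac{3}{2}}\wedge\cdots$ corresponding to the empty pair. Under the conventions fixed there, the creator $\psi_{-m-\half}$ inserts the factor $z^{-m-\half}$ at the front of a wedge, $\psi^*_{-n-\half}$ deletes the factor $z^{n+\half}$, and the only relations I use are $\{\psi_r,\psi_s\}=\{\psi^*_r,\psi^*_s\}=0$ together with $\{\psi_r,\psi^*_s\}=\delta_{r+s,0}$.

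The first step is to turn the exponential into a product. Writing $B_{n,m}:=\psi_{-m-\half}\psi^*_{-n-\half}$, one checks $B_{n,m}^2=0$ (since $\psi_{-m-\half}^2=0$) and that $B_{n,m}$ and $B_{n',m'}$ commute for distinct index pairs: all four modes $-m-\half,-n-\half,-m'-\half,-n'-\half$ are negative, so no two of the operators can contract and the two bilinears simply anticommute through each other in four steps. Hence $e^{A}=\prod_{n,m\geq 0}\bigl(1+a_{n,m}B_{n,m}\bigr)$, the product being well defined because the coefficient of any fixed basis vector receives contributions from only finitely many factors.

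The second step is to act with this product on $|0\rangle$. For each $n$ the factors $\prod_m\bigl(1+a_{n,m}B_{n,m}\bigr)$ either leave the sea mode $z^{n+\half}$ untouched or replace it by $a_{n,m}z^{-m-\half}$; any repetition of an index $n$ (resp.\ $m$) forces $\psi^*_{-n-\half}$ (resp.\ $\psi_{-m-\half}$) to occur twice and kills the term. Thus the surviving contributions to a target state are labelled by a set of emptied modes $\{n_1,\dots,n_k\}$, a set of filled modes $\{m_1,\dots,m_k\}$, and a bijection $\sigma$ pairing them, and summing over $\sigma$ assembles the determinant $\det(a_{n_i,m_j})_{1\leq i,j\leq k}$. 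This is precisely the multilinear expansion of the infinite wedge $\bigwedge_{n\geq 0}\bigl(z^{n+\half}+\sum_{m\geq 0}a_{n,m}z^{-m-\half}\bigr)=|U\rangle$, so the two sides agree entry by entry once the scalar factors are shown to coincide.

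The delicate point, which I expect to be the main obstacle, is the sign bookkeeping needed to reproduce the global factor $(-1)^{n_1+\cdots+n_k}$ attached to the determinant in the wedge expansion. On the fermionic side this sign comes from commuting each deletion $\psi^*_{-n_i-\half}$ past the occupied sea modes and from reordering the inserted negative modes into their sorted positions; on the Grassmannian side it is the sign produced by sorting those same inserted factors in the semi-infinite wedge. I would fix conventions in the rank-one case $k=1$, where $a_{n,m}B_{n,m}|0\rangle$ visibly equals $(-1)^{n}a_{n,m}$ times the basis vector with $z^{n+\half}$ replaced by $z^{-m-\half}$, and then deduce the general case from the common alternating (Laplace) structure of the two expansions, so that the permutation sign $\mathrm{sgn}(\sigma)$ lands inside the determinant while the global sign $(-1)^{\sum_i n_i}$ factors out identically on both sides.
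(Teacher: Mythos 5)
Your argument is correct, and it is the standard proof of this fact; note that the paper itself does not prove Theorem \ref{thm-coeff-Bogoliubov} at all --- it is quoted from \cite{zhou1} as a preliminary in Appendix A --- so there is no in-paper proof to compare against. Each of your three steps checks out: the bilinears $B_{n,m}$ are built from four negative (creation) modes, so all mutual anticommutators vanish, the $B_{n,m}$ pairwise commute and square to zero, and $e^{A}=\prod(1+a_{n,m}B_{n,m})$ is legitimate with only finitely many factors contributing to any fixed basis vector. For the sign bookkeeping, which you rightly single out, the cleanest way to finish is to use the commutativity of the $B$'s to order the chosen factors so that $B_{n_k,m_{\sigma(k)}}$ acts first and $B_{n_1,m_{\sigma(1)}}$ last (with $n_1>\cdots>n_k$): at the $i$-th stage the mode $z^{n_i+\frac{1}{2}}$ still sits in position $n_i$, because the $k-i$ earlier deletions all removed modes of smaller index and were each compensated by one prepended negative mode, so each contraction contributes exactly $(-1)^{n_i}$; the prepended modes then come out in the order $z^{-m_{\sigma(1)}-\frac{1}{2}}\wedge\cdots\wedge z^{-m_{\sigma(k)}-\frac{1}{2}}$ and sorting them contributes $\mathrm{sgn}(\sigma)$, which is absorbed into the determinant while the global $(-1)^{n_1+\cdots+n_k}$ factors out, exactly matching the coefficient $\alpha_{m_1,\cdots,m_k;n_1,\cdots,n_k}$ of the wedge expansion. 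With that one paragraph made explicit, your proposal is a complete proof.
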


The bosonic Fock space $\cB$ is defined to be
$\cB:=\Lambda[w,w^{-1}]$,
where $\Lambda$ is the space of symmetric functions,
and $w$ is a formal variable.
The boson-fermion correspondence is a certain linear isomorphism
\ben
\Phi:\quad\cF =\bigoplus_{n\in\bZ}\cF^{(n)} \quad \to\quad \cB
\een
of vector spaces.
In particular,
$\Phi |_{\cF^{(0)}} :  \cF^{(0)} \to \Lambda$
is an isomorphism of vector spaces.
Moreover,
the bosonic operators acting on $\cB$ and the fermionic operators $\psi_r,\psi_s^*$ acting on $\cF$
can be represented in terms of each other.
See \cite{djm} for details.

There is a natural basis for $\cF^{(0)}$ labeled by partitions of non-negative integers.
Let $\mu=(\mu_1,\mu_2,\cdots)$ be a partition of a positive integer whose Frobenius notation is
$\mu=(m_1,\cdots, m_k | n_1,\cdots,n_k)$
(see eg. \cite{mac}).
Then $\{|\mu\rangle\}$ is a basis for $\cF^{(0)}$ where:
\be
|\mu\rangle=(-1)^{n_1+\cdots+n_k}\cdot
\psi_{-m_1-\half} \psi_{-n_1-\half}^* \cdots
\psi_{-m_k-\half} \psi_{-n_k-\half}^* |0\rangle.
\ee
Here for the zero partition $\mu=(0)$ we define $|(0)\rangle = |0\rangle$ to be the vacuum.

The space $\Lambda$ of symmetric functions also has a basis labeled by partitions.
Let $s_\mu$ be the Schur function (see eg. \cite{mac}) labeled by the partition $\mu$,
then $\{s_\mu\}$ is a basis for $\Lambda$.
An important property of the boson-fermion correspondence $\Phi : \cF^{(0)} \to \Lambda$
is that it takes $|\mu\rangle$ to $s_\mu$:
\ben
\Phi |_{\cF^{(0)}} : \qquad
 \cF^{(0)} \to \Lambda,
\qquad
|\mu\rangle \mapsto s_\mu=
\langle 0 | e^{\sum_{n=1}^\infty \frac{p_n}{n} \alpha_n} | \mu \rangle,
\een
where $p_n$ are Newton symmetric functions and $\alpha_n$ are some bosonic operators.
As a corollary,
one has:
\be
\Phi (|U\rangle) = 1+ \sum_{|\mu|>0} (-1)^{n_1+\cdots +n_k} \cdot
\det (a_{n_i,m_j})_{1\leq i,j \leq k} \cdot s_\mu,
\ee
where $\{a_{n,m}\}_{n,m\geq 0}$ are the affine coordinates for $U$,
and $(m_1,\cdots, m_k | n_1,\cdots,n_k)$ is the Frobenius notation of the partition $\mu$.

Sato \cite{sa} showed that given an element $U$ on the Sato Grassmannian,
the vector $\Phi(|U\rangle)$ in the bosonic Fock space is a tau-function of the KP hierarchy
with respect to time variables $T_n:= \frac{p_n}{n}$ ($n=1,2,\cdots$).
Moreover,
every (formal power series) tau-function can be constructed in this way,
thus the Sato Grassmannian is the space of all tau-functions of the KP hierarchy.
It is the orbit of an $\widehat{GL(\infty)}$-action on the trivial solution $\tau=1$.

\subsection{Symmetric functions and Feynman diagrams}

Now let us recall some backgrounds of symmetric functions.

Denote by $p_n$ the Newton symmetric function of degree $n$:
\ben
p_n (\bm x):= x_1^n+x_2^n+ x_3^n +\cdots,
\een
then $\{p_\lambda\}_{\lambda}$ is a basis of the space of symmetric functions $\Lambda$,
where $\lambda=(\lambda_1, \cdots, \lambda_k)$
(with $\lambda_1\geq \cdots\geq\lambda_k>0$)
is a partition of an integer
$|\lambda|:=\lambda_1+\cdots +\lambda_k$,
and
\be
p_\lambda:=p_{\lambda_1}p_{\lambda_2}\cdots p_{\lambda_k}.
\ee
We also denote $p_{(0)}:=1$ for the trivial partition $(0)$ of $0$.

Let $h_n\in\Lambda$ be the
complete symmetric function of degree $n$:
\ben
h_n (\bm x):= \sum_{\sum_i d_i =n} x_1^{d_1}x_2^{d_2}x_3^{d_3}\cdots.
\een
Then it can be spanned with respect to the basis $\{p_\lambda\}_{\lambda}$ by
the following well-known relation
(see eg. \cite[\S 1]{mac}):
\be
\label{eq-relation-symm}
h_n = \sum_{|\lambda|=n} \frac{1}{z_\lambda} p_\lambda,
\ee
where
\be
z_\lambda:=\prod_{i\geq 1} i^{m_i}\cdot m_i !,
\ee
and $m_i$ is the number of $i$ in the partition $\lambda$,
i.e.,
$\lambda=(1^{m_1} 2^{m_2} 3^{m_3}\cdots)$.

An easy but interesting observation is that the above formula \eqref{eq-relation-symm} for even $n$
can be interpreted in terms of Feynman graphs:
\begin{Proposition}
For every $n\geq 1$,
the complete symmetric function $h_{2n}$ can be represented as
a summation over (not necessarily stable, not necessarily connected) ordinary graphs:
\be
\label{eq-twosymm-2}
h_{2n}= \frac{1}{(2n-1)!!}\cdot
\sum_{\text{$\Gamma:$ $|E(\Gamma)|=n$}}
\frac{1}{|\Aut(\Gamma)|} p_\Gamma,
\ee
where $|E(\Gamma)|$ is the set of edges of $\Gamma$,
and
\be
p_\Gamma:=
\prod_{v:\text{ vertex}}(\val(v)-1)!\cdot p_{\val(v)}.
\ee
\end{Proposition}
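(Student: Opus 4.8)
The plan is to reorganize the classical expansion \eqref{eq-relation-symm} of $h_{2n}$ in the Newton basis $\{p_\lambda\}$ according to graph types, and then to invoke the graph-counting identity \eqref{eq-graphcounting}. The bridge between the two sides is the handshake lemma: a (not necessarily stable or connected) ordinary graph $\Gamma$ with $n$ edges has total valence $\sum_{v}\val(v)=2n$, so the multiset of its valences is a partition of $2n$. I would therefore group the graphs appearing in the claimed formula by this multiset: writing $\lambda=(1^{m_1}2^{m_2}\cdots)$ for a partition of $2n$, call $\Gamma$ a graph of \emph{type} $\lambda$ if it has exactly $m_j$ vertices of valence $j$ for every $j$, so that $\sum_j j\,m_j=2n$.

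First I would observe that $p_\Gamma$ depends only on the type $\lambda$ and not on the particular graph: by definition $p_\Gamma=\prod_v(\val(v)-1)!\,p_{\val(v)}=\bigl(\prod_j[(j-1)!]^{m_j}\bigr)\,p_\lambda$. This lets me pull $p_\Gamma$ out of the sum over all graphs of a fixed type. The remaining sum $\sum_{\Gamma\text{ of type }\lambda}1/|\Aut(\Gamma)|$ is exactly what \eqref{eq-graphcounting} evaluates, yielding $(2n-1)!!\big/\prod_j(j!)^{m_j}m_j!$; here the factor $(2n-1)!!$ appears precisely because $\sum_j j\,m_j=2n$.

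Combining these two facts, the type-$\lambda$ contribution to the right-hand side of the claimed identity becomes
\begin{equation*}
\frac{1}{(2n-1)!!}\Bigl(\prod_j[(j-1)!]^{m_j}\Bigr)p_\lambda\cdot\frac{(2n-1)!!}{\prod_j(j!)^{m_j}m_j!}
=p_\lambda\prod_j\frac{[(j-1)!]^{m_j}}{(j!)^{m_j}m_j!}.
\end{equation*}
Using $(j-1)!/j!=1/j$ this simplifies to $p_\lambda\prod_j 1/(j^{m_j}m_j!)=p_\lambda/z_\lambda$, which is exactly the type-$\lambda$ summand in \eqref{eq-relation-symm}. Summing over all partitions $\lambda$ of $2n$ — equivalently, over all types of graphs with $n$ edges — then reproduces $h_{2n}$ and closes the argument.

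The computation is entirely mechanical once the types are introduced, so I expect no genuine obstacle. The only point that needs care is the bookkeeping that makes the two occurrences of $(2n-1)!!$ cancel; this hinges on correctly matching the valence profile of a graph with $n$ edges to a partition of $2n$ via the handshake lemma, which is exactly the hypothesis under which \eqref{eq-graphcounting} carries the prefactor $(2n-1)!!$.
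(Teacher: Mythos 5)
Your proposal is correct and follows essentially the same route as the paper: both group the graphs by their valence profile (a partition of $2n$ via the handshake lemma), factor out $p_\Gamma$, and reduce the identity to the graph-counting formula \eqref{eq-graphcounting} together with the classical expansion \eqref{eq-relation-symm} of $h_{2n}$ in the power-sum basis. The only difference is cosmetic — you compute forward from the graph sum to $p_\lambda/z_\lambda$, whereas the paper works backward to isolate the graph-counting identity as the remaining claim.
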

\begin{proof}
Given a partition $\lambda=(\lambda_1,\lambda_2,\cdots,\lambda_k)$ of $|\lambda|=2n$,
let us denote by $\cG_\lambda$ the set of all ordinary graphs
(not necessarily stable, not necessarily connected)
whose vertices are of valences $\lambda_1,\lambda_2,\cdots,\lambda_k$ respectively.
We only need to prove:
\be
h_{2n} = \frac{1}{(2n-1)!!}\cdot
\sum_{|\lambda|=2n} \sum_{\Gamma\in \cG_\lambda}
\frac{\prod\limits_{v:\text{ vertex}}(\val(v)-1)!}{|\Aut(\Gamma)|}p_\lambda.
\ee
And by \eqref{eq-relation-symm},
it suffices to show that:
\be
\frac{(\lambda_1-1)!\cdots (\lambda_k-1)!}{(2n-1)!!} \cdot \sum_{\Gamma\in \cG_\lambda}
\frac{1}{|\Aut(\Gamma)|}
= \frac{1}{z_\lambda},
\ee
for every partition $\lambda=(\lambda_1,\cdots,\lambda_k)$ with $|\lambda|=2n$.
Recall that
\ben
z_\lambda= \prod_{i\geq 1} i^{m_i}\cdot m_i !
= \lambda_1\cdots\lambda_k \cdot \prod_{i\geq 1}  m_i !,
\een
where $m_i$ is the number of $i$ appearing in $\lambda$.
Thus we only need to show:
\be
\sum_{\Gamma\in \cG_\lambda}
\frac{1}{|\Aut(\Gamma)|}
= \frac{(2n-1)!!}{\lambda_1!\cdots \lambda_k!\cdot\prod_{i\geq 1}  m_i !},
\ee
and this is a well-known result in graph-counting,
see eg. \eqref{eq-graphcounting}.
\end{proof}

\begin{Example}
We give some examples of the above proposition.
First consider $k=1$.
In this case,
there are only two graphs with one edge:
\begin{equation*}
\begin{tikzpicture}[scale=1.5]
\draw [fill] (0,0) circle [radius=0.035];
\draw [fill] (0.5,0) circle [radius=0.035];
\draw (0,0) -- (0.5,0);
\draw [fill] (2.5,0) circle [radius=0.035];
\draw (2.75,0) circle [radius=0.25];
\end{tikzpicture}
\end{equation*}
The automorphism groups of them are both of order $2$.
Then the formula \eqref{eq-twosymm-2} gives:
\ben
h_2  = \half p_1^2 + \half p_2.
\een
Now consider $k=2$.
In this case,
there are $7$ graphs with two edges,
and four of them are connected:
\begin{equation*}
\begin{tikzpicture}[scale=1.5]
\draw [fill] (0,0) circle [radius=0.035];
\draw [fill] (0+0.5,0) circle [radius=0.035];
\draw [fill] (0.5+0.5,0) circle [radius=0.035];
\draw (0,0) -- (0.5+0.5,0);
\draw [fill] (2,0) circle [radius=0.035];
\draw [fill] (2.5,0) circle [radius=0.035];
\draw (2.75,0) circle [radius=0.25];
\draw (2,0) -- (2.5,0);
\draw [fill] (4.5,0) circle [radius=0.035];
\draw (4.75,0) circle [radius=0.25];
\draw (4.25,0) circle [radius=0.25];
\draw [fill] (6.5,0) circle [radius=0.035];
\draw [fill] (6,0) circle [radius=0.035];
\draw (6.25,0) circle [radius=0.25];
\end{tikzpicture}
\end{equation*}
and the other three are disconnected:
\begin{equation*}
\begin{tikzpicture}[scale=1.5]
\draw [fill] (-0.2-0.3,0) circle [radius=0.035];
\draw [fill] (0.3-0.3,0) circle [radius=0.035];
\draw [fill] (0+0.5-0.3,0) circle [radius=0.035];
\draw [fill] (0.5+0.5-0.3,0) circle [radius=0.035];
\draw (-0.2-0.3,0) -- (0.3-0.3,0);
\draw (0.5-0.3,0) -- (1-0.3,0);
\draw [fill] (1.9,0) circle [radius=0.035];
\draw [fill] (2.4,0) circle [radius=0.035];
\draw [fill] (2.6,0) circle [radius=0.035];
\draw (2.85,0) circle [radius=0.25];
\draw (1.9,0) -- (2.4,0);
\draw [fill] (4.5+0.3,0) circle [radius=0.035];
\draw [fill] (4.7+0.3,0) circle [radius=0.035];
\draw (4.95+0.3,0) circle [radius=0.25];
\draw (4.25+0.3,0) circle [radius=0.25];
\end{tikzpicture}
\end{equation*}
Then by \eqref{eq-twosymm-2} we have:
\begin{equation*}
\begin{split}
h_4 =& \frac{1}{3}\bigg(
\half p_1^2 p_2 + \frac{1}{2} p_1 (2!\cdot p_3)
+ \frac{1}{8} (3!\cdot p_4) + \frac{1}{4} p_2^2
+ \frac{1}{8} p_1^4 + \frac{1}{4} p_1^2 p_2
+ \frac{1}{8} p_2^2 \bigg)\\
=& \frac{1}{24} p_1^4 + \frac{1}{4} p_1^2 p_2 +\frac{1}{8} p_2^2
+ \frac{1}{3} p_1 p_3 +\frac{1}{4} p_4 .
\end{split}
\end{equation*}
For $k=3$ and $k=4$,
we omit the graphs and only write down the result:
\begin{equation*}
\begin{split}
h_6 = &\frac{1}{720} p_1^6 +\frac{1}{48} p_1^4p_2 +\frac{1}{16}p_1^2p_2^2
 +\frac{1}{18} p_1^3 p_3 +\frac{1}{48} p_2^3 +\frac{1}{6} p_1p_2p_3
  +\frac{1}{8}p_1^2p_4\\
& +\frac{1}{5}p_1p_5 +\frac{1}{8}p_2p_4
 +\frac{1}{18}p_3^2 +\frac{1}{6}p_6,\\
h_8 = & \frac{1}{40320} p_1^8 +\frac{1}{1440}p_1^4 p_2 +\frac{1}{360}p_1^3 p_3
+\frac{1}{192}p_1^4p_2^2 +\frac{1}{36}p_1^3p_2p_3 +\frac{1}{96}p_1^4p_4 \\
&+\frac{1}{96}p_1^2p_2^3 +\frac{1}{30}p_1^3p_5 +\frac{1}{36}p_1^2p_3^2
+\frac{1}{384}p_2^4 +\frac{1}{16}p_1^2p_2p_4 +\frac{1}{24}p_1p_2^2p_3\\
&+ \frac{1}{12}p_1^2p_6 +\frac{1}{32}p_2^2p_4 +\frac{1}{36}p_2p_3^2
+\frac{1}{10}p_1p_2p_5 +\frac{1}{12}p_1p_3p_4 +\frac{1}{32}p_4^2\\
&+ \frac{1}{12}p_2p_6 +\frac{1}{15}p_3p_5 +\frac{1}{7}p_1p_7 +\frac{1}{8}p_8.
\end{split}
\end{equation*}
\end{Example}

\begin{Remark}
On the ring $\Lambda$ of symmetric functions there is an involution $\omega:\Lambda \to \Lambda$ such that
\begin{align}
\omega(h_n)&= e_n, &\omega(e_n) &= h_n, & \omega(p_n) = (-1)^{n-1}p_n.
\end{align}
See \cite[\S I.2]{mac}.
By applying $\omega$ to the formulas above,
we can also interpret the formula of $e_n$ in terms of $p_k$ as Feynman sums.
\end{Remark}

\subsection{Representing $Z^{1D}$ in terms of symmetric functions}

Now recall the graph-sum formula \eqref{eq-1d-partitionZ-tFeynman} for the partition function $Z^{1D}$.
If we take $\lambda=1$ and
\be
t_n = n!\cdot p_{n+1}\in \Lambda,
\qquad \forall n\geq 0,
\ee
then:
\be
Z^{1D} (\bm t) = \sum_{\Gamma \in \cG^{or}} \frac{1}{|\Aut(\Gamma)|}
\prod_{v\in V(\Gamma)} (\val(v)-1)!\cdot p_{\val(v)}.
\ee
Applying \eqref{eq-twosymm-2},
we obtain:
\begin{Theorem}
\label{thm-Z-completesymm}
Taking $t_n = n!\cdot p_{n+1}\in \Lambda$ for every $n\geq 0$,
then we have:
\be
Z^{1D} (\bm t)=
1+\sum_{k=0} ^\infty (2k+1)!!\cdot h_{2k+2}.
\ee
\end{Theorem}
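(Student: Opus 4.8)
The plan is to feed the graph-sum formula \eqref{eq-1d-partitionZ-tFeynman} (at $\lambda=1$) directly into the identity \eqref{eq-twosymm-2} established just above, organizing the sum over ordinary graphs by the number of edges. Concretely, after the substitution $t_n=n!\cdot p_{n+1}$ each vertex weight $\lambda^{\val(v)-2}t_{\val(v)-1}$ collapses to $(\val(v)-1)!\cdot p_{\val(v)}$, so the product over vertices in \eqref{eq-1d-partitionZ-tFeynman} is exactly the quantity $p_\Gamma$ appearing in \eqref{eq-twosymm-2}. This is precisely the display recorded immediately before the statement, so the starting point is $Z^{1D}=\sum_{\Gamma\in\cG^{or}}\frac{1}{|\Aut(\Gamma)|}\,p_\Gamma$.

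Next I would partition $\cG^{or}$ according to the edge count $|E(\Gamma)|=n$. The key bookkeeping observation is that every vertex occurring in the Feynman expansion has valence at least $1$, since a vertex of valence $k$ carries the weight $t_{k-1}$ and the interaction terms in the action \eqref{eq-1d-actionS-t} range over $k\geq1$. Hence there are no isolated vertices, and by the handshake lemma a graph with $n$ edges has total valence $2n$; in particular the unique graph with $n=0$ is the empty graph, contributing the constant $p_\emptyset=1$. For each fixed $n\geq1$ the inner sum $\sum_{|E(\Gamma)|=n}\frac{1}{|\Aut(\Gamma)|}\,p_\Gamma$ is exactly the right-hand side of \eqref{eq-twosymm-2} cleared of the prefactor, so it equals $(2n-1)!!\cdot h_{2n}$.

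Assembling the contributions gives $Z^{1D}=1+\sum_{n\geq1}(2n-1)!!\cdot h_{2n}$, and reindexing by $n=k+1$ yields the stated formula. The argument is in essence a repackaging: all the combinatorial content has already been absorbed into \eqref{eq-twosymm-2}, whose proof reduced the claim to the graph-counting identity \eqref{eq-graphcounting}. Accordingly I do not expect a genuine obstacle here. The only points requiring care are that the regrouping of the formal series $Z^{1D}$ by edge number is legitimate (for each fixed degree in the $p_i$ only finitely many graphs contribute, so every coefficient is a finite sum), and that the degree-$2n$ part of $Z^{1D}$ matches $h_{2n}$ rather than some $h_m$ with $m$ odd; the latter is guaranteed by the handshake lemma, which forces the total valence, and hence the symmetric-function degree of $p_\Gamma$, to be even.
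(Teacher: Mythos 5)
Your proposal is correct and is essentially identical to the paper's own argument: substitute $t_n=n!\cdot p_{n+1}$ into the Feynman expansion \eqref{eq-1d-partitionZ-tFeynman} so that each vertex contributes $(\val(v)-1)!\cdot p_{\val(v)}$, then group the graph sum by edge count and invoke \eqref{eq-twosymm-2}. The extra bookkeeping you supply (no isolated vertices, handshake lemma forcing even total valence, finiteness of each coefficient) is sound and only makes explicit what the paper leaves implicit.
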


Now let us briefly recall the definition of Schur functions (see eg. \cite{mac}).
Let $\lambda=(m_1,\cdots,m_k|n_1,\cdots,n_k)$ be the Frobenius notation of a partition function $\lambda$,
then the Schur function labeled by $\lambda$ is defined by:
\be
s_\lambda := \det (s_{(m_i|n_j)})_{1\leq i,j\leq k},
\ee
where $s_{(m|n)}$ for a hook partition $\lambda=(m|n)=(m+1,1^n)$
is defined to be:
\be
\label{eq-def-Schur}
s_{(m|n)}= h_{m+1}e_n - h_{m+2}e_{n-1} + \cdots
+ (-1)^n h_{m+n+1}.
\ee
The set of all Schur functions $\{s_\lambda\}_\lambda$ forms another basis
for the space of symmetric functions $\Lambda$.
In particular,
the complete symmetric functions $h_n$ are some special Schur functions.
In fact,
taking $n=0$ in \eqref{eq-def-Schur}, then we see:
\ben
h_{m+1}=s_{(m|0)},\qquad
\forall m\geq 0,
\een
thus Theorem \ref{thm-Z-completesymm} can be rewritten as:
\begin{Corollary}
\label{cor-Z-schur}
Taking $t_n = n!\cdot p_{n+1}\in \Lambda$ for every $n\geq 0$,
then we have:
\be
\label{eq-Z-schur}
Z^{1D} (\bm t)=
1+\sum_{k=0} ^\infty (2k+1)!!\cdot s_{(2k+1|0)}.
\ee
\end{Corollary}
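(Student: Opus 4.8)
The plan is to deduce this corollary immediately from Theorem \ref{thm-Z-completesymm}, which already presents $Z^{1D}$ as $1 + \sum_{k\geq 0}(2k+1)!!\cdot h_{2k+2}$, by identifying each complete symmetric function $h_{2k+2}$ with a single Schur function. The only tool needed is the definition \eqref{eq-def-Schur} of the hook Schur functions specialized to leg length zero; all of the genuine mathematical content has already been established in the preceding theorem.

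First I would set $n=0$ in \eqref{eq-def-Schur}. Since the right-hand side $h_{m+1}e_n - h_{m+2}e_{n-1} + \cdots + (-1)^n h_{m+n+1}$ then collapses to its single surviving term $h_{m+1}e_0 = h_{m+1}$, this yields $s_{(m|0)} = h_{m+1}$ for every $m\geq 0$. Geometrically $(m|0)$ is the one-row partition $(m+1)$, and its Schur function is precisely the corresponding complete homogeneous symmetric function; this identity is in fact already recorded in the text immediately preceding the corollary statement.

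Then I would substitute $m = 2k+1$, obtaining $h_{2k+2} = s_{(2k+1|0)}$ for each $k\geq 0$, and feed this term-by-term into the formula of Theorem \ref{thm-Z-completesymm}. The resulting series
\[
Z^{1D}(\bm t) = 1 + \sum_{k=0}^\infty (2k+1)!!\cdot s_{(2k+1|0)}
\]
is exactly the assertion, so the proof concludes in essentially one line.

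There is no substantive obstacle here: the entire difficulty lives in Theorem \ref{thm-Z-completesymm}, and the passage to Schur functions is a pure rewriting. The single point meriting attention is bookkeeping with the Frobenius convention fixed in \eqref{eq-def-Schur}, namely that $(m|0)$ denotes the single row $(m+1)$ rather than a hook of positive leg length. Once this convention is in force, the index matching $m = 2k+1$ gives $s_{(2k+1|0)} = h_{2k+2}$ without ambiguity, and the claimed expansion follows.
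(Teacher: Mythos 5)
Your proposal is correct and coincides with the paper's own argument: the paper likewise obtains the corollary by setting $n=0$ in the hook-Schur definition \eqref{eq-def-Schur} to get $h_{m+1}=s_{(m|0)}$ and then rewriting Theorem \ref{thm-Z-completesymm} term by term with $m=2k+1$. Your remark on the Frobenius convention $(m|0)=(m+1)$ is the right bookkeeping point (and is in fact the convention used in the appendix, whereas the identification $(2k+1|0)=(1^{2k+2})$ stated in \S 3.2 of the paper appears to be a typo).
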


\subsection{Determination of the affine coordinates}

Now we are able to give a proof to Theorem \ref{thm-affinecoord-1D}.

Recall that under the boson-fermion correspondence \cite{djm}:
\ben
|a\rangle\quad
\mapsto \quad
\langle 0 | e^{\sum_{n\geq 1}\frac{p_n}{n}\alpha_n}
|a\rangle,
\een
the vector $|\mu\rangle$ is mapped to $s_\mu$,
where $\mu=(m_1,\cdots,m_k| n_1,\cdots,n_k)$ is the Frobenius notation
of a partition $\mu$.
Thus the formula \eqref{eq-Z-schur} simply tells us that in the fermionic picture,
the tau-function $Z^{1D}$ is given by:
\be
\label{eq-Z-ferm-pf}
\begin{split}
Z^{1D}=& |0\rangle +
\sum_{k=0}^\infty (2k+1)!! \cdot |(2k+1|0)\rangle\\
=& |0\rangle +
\sum_{k=0}^\infty (2k+1)!! \cdot
\psi_{-2k-\frac{3}{2}}\psi_{-\half}^* |0\rangle.
\end{split}
\ee
Recall that the fermionic creators $\psi_r,\psi_s^*$ ($r,s>0$) anti-commute with each other:
\begin{equation*}
\psi_r\psi_s + \psi_s\psi_r =
\psi_r^*\psi_s^* + \psi_s^*\psi_r^* =
\psi_r\psi_s^* + \psi_s^*\psi_r=0,
\qquad \forall r,s>0,
\end{equation*}
thus the above formula is equivalent to
$Z^{1D} = \exp (A^{1D}) |0\rangle$,
where:
\ben
A^{1D}=
\sum_{k=0}^\infty (2k+1)!! \cdot \psi_{-2k-\frac{3}{2}} \psi_{-\half}^*.
\een
Comparing this with Theorem \ref{thm-coeff-Bogoliubov},
we easily see that the affine coordinates for the tau-function $Z^{1D}$ is given by:
\ben
a_{n,m}^{1D} = \begin{cases}
m!!, & \text{if $n=0$ and $m$ is odd;}\\
0, & \text{otherwise.}
\end{cases}
\een
This proves Theorem \ref{thm-affinecoord-1D}.

\section{Some Numerical Data}
\label{sec-app-data}

In this appendix let us present more numerical data using the formulas derived in this work.

\subsection{$N$-point functions of the topological 1D gravity}
\label{sec-app-1d-data}

In this subsection we give some data of the $n$-point functions $G_{(n)}^{1D}(z_1,\cdots,z_n)$
of the topological 1D gravity.
They are computed using the formula \eqref{eq-thm-npt-1d}.

For simplicity let us denote by $\frac{1}{z^{(n_1,\cdots,n_k)}}$ the summation of all distinct terms
of the form $\frac{1}{z_{m_1}^{n_1}\cdots z_{m_k}^{n_k}}$ where $\{m_1,\cdots,m_k\}=\{1,2,\cdots,k\}$,
for example:
\footnotesize
\begin{equation*}
\begin{split}
&\frac{1}{z^{(2,2,2)}}:=\frac{1}{z_1^2z_2^2z_3^2},\\
&\frac{1}{z^{(2,2,5)}}:=
\frac{1}{z_1^2z_2^2z_3^5}+\frac{1}{z_1^2z_2^5z_3^2}+\frac{1}{z_1^5z_2^2z_3^2},\\
&\frac{1}{z^{(2,3,4)}}:=
\frac{1}{z_1^2z_2^3z_3^4} +\frac{1}{z_1^2z_2^4z_3^3}
+\frac{1}{z_1^3z_2^2z_3^4}
+\frac{1}{z_1^3z_2^4z_3^2} +\frac{1}{z_1^4z_2^2z_3^3} +\frac{1}{z_1^4z_2^3z_3^2}.
\end{split}
\end{equation*}
\normalsize
Then the following data are computed using \eqref{eq-thm-npt-1d}:
\footnotesize
\begin{flalign*}
\begin{split}
&G_{(2)}^{1D}(z_1,z_2)=
\frac{1}{z^{(2,2)}}
+ \frac{3}{z^{(2,4)}}
+\frac{2}{z^{(3,3)}}
+\frac{15}{z^{(2,6)}}
+\frac{12}{z^{(3,5)}}
+\frac{15}{z^{(4,4)}}
 +\frac{105}{z^{(2,8)}}
 +\frac{90}{z^{(3,7)}}
+\frac{105}{z^{(4,6)}}\\
&\quad
+\frac{96}{z^{(5,5)}}
+\frac{945}{z^{(2,10)}}
+\frac{840}{z^{(3,9)}}
 +\frac{945}{z^{(4,8)}}
+\frac{900}{z^{(5,7)}}
 +\frac{945}{z^{(6,6)}}
+\frac{10395}{z^{(2,12)}}
+\frac{9450}{z^{(3,11)}}
+\frac{10395}{z^{(4,10)}}
+\frac{10080}{z^{(5,9)}}\\
&\quad
+\frac{10395}{z^{(6,8)}}
+\frac{10170}{z^{(7,7)}}
+\frac{135135}{z^{(2,14)}}
+\frac{124740}{z^{(3,13)}}
+\frac{135135}{z^{(4,12)}}
+\frac{132300}{z^{(5,11)}}
+\frac{135135}{z^{(6,10)}}
+\frac{133560}{z^{(7,9)}}
+\frac{135135}{z^{(8,8)}}\\
&\quad
+\frac{2027025}{z^{(2,16)}}
+\frac{1891890}{z^{(3,15)}}
+\frac{2027025}{z^{(4,14)}}
+\frac{1995840}{z^{(5,13)}}
+\frac{2027025}{z^{(6,12)}}
+\frac{2012850}{z^{(7,11)}}
+\frac{2027025}{z^{(8,10)}}
+\frac{2016000}{z^{(9,9)}}\\
&\quad
+\frac{34459425}{z^{(2,18)}}
+\frac{32432400}{z^{(3,17)}}
+\frac{34459425}{z^{(4,16)}}
+\frac{34054020}{z^{(5,15)}}
+\frac{34459425}{z^{(6,14)}}
+\frac{34303500}{z^{(7,13)}}
+\frac{34459425}{z^{(8,12)}}\\
&\quad
+\frac{34360200}{z^{(9,11)}}
+\frac{34459425}{z^{(10,10)}}
+\frac{654729075}{z^{(2,20)}}
+\frac{620269650}{z^{(3,19)}}
+\frac{654729075}{z^{(4,18)}}
+\frac{648648000}{z^{(5,17)}}\\
&\quad
+\frac{654729075}{z^{(6,16)}}
+\frac{652702050}{z^{(7,15)}}
+\frac{654729075}{z^{(8,14)}}
+\frac{653637600}{z^{(9,13)}}
+\frac{654729075}{z^{(10,12)}}
+\frac{653836050}{z^{(11,11)}}
+\cdots.
\end{split}&&
\end{flalign*}
\begin{flalign*}
\begin{split}
&G_{(3)}^{1D}(z_1,z_2,z_3)=
\frac{2}{z^{(2,2,3)}}
+\frac{12}{z^{(2,2,5)}}
+\frac{12}{z^{(2,3,4)}}
+\frac{8}{z^{(3,3,3)}}
+\frac{90}{z^{(2,2,7)}}
+\frac{90}{z^{(2,3,6)}}
+\frac{96}{z^{(2,4,5)}}\\
&\quad
+\frac{72}{z^{(3,3,5)}}
+\frac{90}{z^{(3,4,4)}}
+\frac{840}{z^{(2,2,9)}}
+\frac{840}{z^{(2,3,8)}}
+\frac{900}{z^{(2,4,7)}}
+\frac{900}{z^{(2,5,6)}}
+\frac{720}{z^{(3,3,7)}}
+\frac{840}{z^{(3,4,6)}}\\
&\quad
+\frac{768}{z^{(3,5,5)}}
+\frac{900}{z^{(4,4,5)}}
+\frac{9450}{z^{(2,2,11)}}
+\frac{9450}{z^{(2,3,10)}}
+\frac{10080}{z^{(2,4,9)}}
+\frac{10080}{z^{(2,5,8)}}
+\frac{10170}{z^{(2,6,7)}}
+\frac{8400}{z^{(3,3,9)}}\\
&\quad
+\frac{9450}{z^{(3,4,8)}}
+\frac{9000}{z^{(3,5,7)}}
+\frac{9450}{z^{(3,6,6)}}
+\frac{10170}{z^{(4,4,7)}}
+\frac{10080}{z^{(4,5,6)}}
+\frac{9504}{z^{(5,5,5)}}
+\frac{124740}{z^{(2,2,13)}}
+\frac{124740}{z^{(2,3,12)}}\\
&\quad
+\frac{132300}{z^{(2,4,11)}}
+\frac{132300}{z^{(2,5,10)}}
+\frac{133560}{z^{(2,6,9)}}
+\frac{133560}{z^{(2,7,8)}}
+\frac{113400}{z^{(3,3,11)}}
+\frac{124740}{z^{(3,4,10)}}
+\frac{120960}{z^{(3,5,9)}}
+\frac{124740}{z^{(3,6,8)}}\\
&\quad
+\frac{122040}{z^{(3,7,7)}}
+\frac{133560}{z^{(4,4,9)}}
+\frac{132300}{z^{(4,5,8)}}
+\frac{133560}{z^{(4,6,7)}}
+\frac{128160}{z^{(5,5,7)}}
+\frac{132300}{z^{(5,6,6)}}
+\frac{1891890}{z^{(2,2,15)}}
+\frac{1891890}{z^{(2,3,14)}}\\
&\quad
+\frac{1995840}{z^{(2,4,13)}}
+\frac{1995840}{z^{(2,5,12)}}
+\frac{2012850}{z^{(2,6,11)}}
+\frac{2012850}{z^{(2,7,10)}}
+\frac{2016000}{z^{(2,8,9)}}
+\frac{1746360}{z^{(3,3,13)}}
+\frac{1891890}{z^{(3,4,12)}}\\
&\quad
+\frac{1852200}{z^{(3,5,11)}}
+\frac{1891890}{z^{(3,6,10)}}
+\frac{1869840}{z^{(3,7,9)}}
+\frac{1891890}{z^{(3,8,8)}}
+\frac{2012850}{z^{(4,4,11)}}
+\frac{1995840}{z^{(4,5,10)}}
+\frac{2016000}{z^{(4,6,9)}}\\
&\quad
+\frac{2012850}{z^{(4,7,8)}}
+\frac{1955520}{z^{(5,5,9)}}
+\frac{1995840}{z^{(5,6,8)}}
+\frac{1968840}{z^{(5,7,7)}}
+\frac{2012850}{z^{(6,6,7)}}+\cdots.
\end{split}&&
\end{flalign*}
\begin{flalign*}
\begin{split}
&G_{(4)}^{1D}(z_1,z_2,z_3,z_4)=
\frac{6}{z^{(2,2,2,4)}}
+\frac{8}{z^{(2,2,3,3)}}
+\frac{60}{z^{(2,2,2,6)}}
+\frac{72}{z^{(2,2,3,5)}}
+\frac{72}{z^{(2,2,4,4)}}
+\frac{72}{z^{(2,3,3,4)}}\\
&\quad
+\frac{48}{z^{(3,3,3,3)}}
+\frac{630}{z^{(2,2,2,8)}}
+\frac{720}{z^{(2,2,3,7)}}
+\frac{750}{z^{(2,2,4,6)}}
+\frac{768}{z^{(2,2,5,5)}}
+\frac{720}{z^{(2,3,3,6)}}
+\frac{768}{z^{(2,3,4,5)}}\\
&\quad
+\frac{810}{z^{(2,4,4,4)}}
+\frac{576}{z^{(3,3,3,5)}}
+\frac{720}{z^{(3,3,4,4)}}
+\frac{7560}{z^{(2,2,2,10)}}
+\frac{8400}{z^{(2,2,3,9)}}
+\frac{8820}{z^{(2,2,4,8)}}
+\frac{9000}{z^{(2,2,5,7)}}\\
&\quad
+\frac{9000}{z^{(2,2,6,6)}}
+\frac{8400}{z^{(2,3,3,8)}}
+\frac{9000}{z^{(2,3,4,7)}}
+\frac{9000}{z^{(2,3,5,6)}}
+\frac{9540}{z^{(2,4,4,6)}}
+\frac{9504}{z^{(2,4,5,5)}}
+\frac{7200}{z^{(3,3,3,7)}}\\
&\quad
+\frac{8400}{z^{(3,3,4,6)}}
+\frac{7680}{z^{(3,3,5,5)}}
+\frac{9000}{z^{(3,4,4,5)}}
+\frac{9720}{z^{(4,4,4,4)}}
+\frac{103950}{z^{(2,2,2,12)}}
+\frac{113400}{z^{(2,2,3,11)}}
+\frac{119070}{z^{(2,2,4,10)}}\\
&\quad
+\frac{120960}{z^{(2,2,5,9)}}
+\frac{121590}{z^{(2,2,6,8)}}
+\frac{122040}{z^{(2,2,7,7)}}
+\frac{113400}{z^{(2,3,3,10)}}
+\frac{120960}{z^{(2,3,4,9)}}
+\frac{120960}{z^{(2,3,5,8)}}
+\frac{122040}{z^{(2,3,6,7)}}\\
&\quad
+\frac{127890}{z^{(2,4,4,8)}}
+\frac{128160}{z^{(2,4,5,7)}}
+\frac{129150}{z^{(2,4,6,6)}}
+\frac{128160}{z^{(2,5,5,6)}}
+\frac{100800}{z^{(3,3,3,9)}}
+\frac{113400}{z^{(3,3,4,8)}}
+\frac{108000}{z^{(3,3,5,7)}}\\
&\quad
+\frac{113400}{z^{(3,3,6,6)}}
+\frac{122040}{z^{(3,4,4,7)}}
+\frac{120960}{z^{(3,4,5,6)}}
+\frac{114048}{z^{(3,5,5,5)}}
+\frac{130410}{z^{(4,4,4,6)}}
+\frac{128160}{z^{(4,4,5,5)}}
+\cdots.
\end{split}&&
\end{flalign*}
\begin{flalign*}
\begin{split}
&G_{(5)}^{1D}(z_1,z_2,z_3,z_4,z_5)=
\frac{24}{z^{(2,2,2,2,5)}}
+\frac{36}{z^{(2,2,2,3,4)}}
+\frac{48}{z^{(2,2,3,3,3)}}
+\frac{360}{z^{(2,2,2,2,7)}}
+\frac{480}{z^{(2,2,2,3,6)}}\\
&\quad
+\frac{504}{z^{(2,2,2,4,5)}}
+\frac{576}{z^{(2,2,3,3,5)}}
+\frac{576}{z^{(2,2,3,4,4)}}
+\frac{576}{z^{(2,3,3,3,4)}}
+\frac{384}{z^{(3,3,3,3,3)}}
+\frac{5040}{z^{(2,2,2,2,9)}}\\
&\quad
+\frac{6300}{z^{(2,2,2,3,8)}}
+\frac{6660}{z^{(2,2,2,4,7)}}
+\frac{6840}{z^{(2,2,2,5,6)}}
+\frac{7200}{z^{(2,2,3,3,7)}}
+\frac{7500}{z^{(2,2,3,4,6)}}
+\frac{7680}{z^{(2,2,3,5,5)}}\\
&\quad
+\frac{7848}{z^{(2,2,4,4,5)}}
+\frac{7200}{z^{(2,3,3,3,6)}}
+\frac{7680}{z^{(2,3,3,4,5)}}
+\frac{8100}{z^{(2,3,4,4,4)}}
+\frac{5760}{z^{(3,3,3,3,5)}}
+\frac{7200}{z^{(3,3,3,4,4)}}\\
&\quad
+\frac{75600}{z^{(2,2,2,2,11)}}
+\frac{90720}{z^{(2,2,2,3,10)}}
+\frac{95760}{z^{(2,2,2,4,9)}}
+\frac{98280}{z^{(2,2,2,5,8)}}
+\frac{99000}{z^{(2,2,2,6,7)}}
+\frac{100800}{z^{(2,2,3,3,9)}}\\
&\quad
+\frac{105840}{z^{(2,2,3,4,8)}}
+\frac{108000}{z^{(2,2,3,5,7)}}
+\frac{108000}{z^{(2,2,3,6,6)}}
+\frac{111240}{z^{(2,2,4,4,7)}}
+\frac{112680}{z^{(2,2,4,5,6)}}
+\frac{114048}{z^{(2,2,5,5,5)}}\\
&\quad
+\frac{100800}{z^{(2,3,3,3,8)}}
+\frac{108000}{z^{(2,3,3,4,7)}}
+\frac{108000}{z^{(2,3,3,5,6)}}
+\frac{114480}{z^{(2,3,4,4,6)}}
+\frac{114048}{z^{(2,3,4,5,5)}}
+\frac{119880}{z^{(2,4,4,4,5)}}\\
&\quad
+\frac{86400}{z^{(3,3,3,3,7)}}
+\frac{100800}{z^{(3,3,3,4,6)}}
+\frac{92160}{z^{(3,3,3,5,5)}}
+\frac{108000}{z^{(3,3,4,4,5)}}
+\frac{116640}{z^{(3,4,4,4,4)}}
+\cdots.
\end{split}&&
\end{flalign*}
\begin{flalign*}
\begin{split}
&G_{(6)}^{1D}(z_1,z_2,z_3,z_4,z_5,z_6)=
\frac{120}{z^{(2,2,2,2,2,6)}}
+\frac{192}{z^{(2,2,2,2,3,5)}}
+\frac{216}{z^{(2,2,2,2,4,4)}}
+\frac{288}{z^{(2,2,2,3,3,4)}}\\
&\quad
+\frac{384}{z^{(2,2,3,3,3,3)}}
+\frac{2520}{z^{(2,2,2,2,2,8)}}
+\frac{3600}{z^{(2,2,2,2,3,7)}}
+\frac{3960}{z^{(2,2,2,2,4,6)}}
+\frac{4032}{z^{(2,2,2,2,5,5)}}
+\frac{4800}{z^{(2,2,2,3,3,6)}}\\
&\quad
+\frac{5040}{z^{(2,2,2,3,4,5)}}
+\frac{5184}{z^{(2,2,2,4,4,4)}}
+\frac{5760}{z^{(2,2,3,3,3,5)}}
+\frac{5760}{z^{(2,2,3,3,4,4)}}
+\frac{5760}{z^{(2,3,3,3,3,4)}}
+\frac{3840}{z^{(3,3,3,3,3,3)}}\\
&\quad
+\frac{45360}{z^{(2,2,2,2,2,10)}}
+\frac{60480}{z^{(2,2,2,2,3,9)}}
+\frac{65520}{z^{(2,2,2,2,4,8)}}
+\frac{67680}{z^{(2,2,2,2,5,7)}}
+\frac{68400}{z^{(2,2,2,2,6,6)}}
+\frac{75600}{z^{(2,2,2,3,3,8)}}\\
&\quad
+\frac{79920}{z^{(2,2,2,3,4,7)}}
+\frac{82080}{z^{(2,2,2,3,5,6)}}
+\frac{84240}{z^{(2,2,2,4,4,6)}}
+\frac{85824}{z^{(2,2,2,4,5,5)}}
+\frac{86400}{z^{(2,2,3,3,3,7)}}
+\frac{90000}{z^{(2,2,3,3,4,6)}}\\
&\quad
+\frac{92160}{z^{(2,2,3,3,5,5)}}
+\frac{94176}{z^{(2,2,3,4,4,5)}}
+\frac{97200}{z^{(2,2,4,4,4,4)}}
+\frac{86400}{z^{(2,3,3,3,3,6)}}
+\frac{92160}{z^{(2,3,3,3,4,5)}}
+\frac{97200}{z^{(2,3,3,4,4,4)}}\\
&\quad
+\frac{69120}{z^{(3,3,3,3,3,5)}}
+\frac{86400}{z^{(3,3,3,3,4,4)}}
+\cdots.
\end{split}&&
\end{flalign*}

\normalsize

\subsection{Numerical data of $\chi(\Mbar_{g,n})$}
\label{sec-app-data-chi}

Now the orbifold Euler characteristics of $\Mbar_{g,n}$ can be computed using either methods in
our earlier work \cite{wz2} or the formulas developed this work.
In this subsection let us present some data here.

\vspace{0.3cm}

\small
\renewcommand\arraystretch{1.75}
\begin{tabular}{l|cccc}
\diagbox{$g$}{$n$} &$ 0 $&$ 1 $&$ 2 $&$ 3 $\\
\hline
$0 $ &   &   &   &$ 1$\\
$1 $ & &$ \frac{5}{12} $&$ \frac{1}{2} $&$ \frac{17}{12} $ \\
$2 $ &$ \frac{119}{1440 } $&$  \frac{247}{1440} $&$  \frac{413}{720} $&
$ \frac{89}{32} $ \\
$3 $ &$ \frac{8027}{181440}  $&$  \frac{13159}{72576} $&$  \frac{179651}{181440} $&$ \frac{495611}{72576}$ \\
$4 $ &$ \frac{2097827}{43545600}  $&$  \frac{5160601}{17418240} $&$  \frac{97471547}{43545600} $&$ \frac{1747463783}{87091200}$  \\
$5 $ &$ \frac{150427667}{1916006400}  $&$  \frac{1060344499}{1642291200} $&$  \frac{35763130021}{5748019200} $&$ \frac{157928041517}{2299207680}$  \\
$6 $ &$ \frac{31966432414753}{188305108992000}  $&$  \frac{43927799939987}{25107347865600} $&$  \frac{350875518979697}{17118646272000} $&$ \frac{14466239894532961}{53801459712000}$  \\
$7 $ &$ \frac{21067150021261}{46115536896000}  $&$  \frac{25578458051299001}{4519322615808000 }$&$  \frac{5346168720992921}{68474585088000} $&$ \frac{766050649843508339}{645617516544000}$  \\
$8$ & \tiny{$\frac{27108194937436478387}{18438836272496640000}$} & \tiny{$\frac{71323310082487963309}{3352515685908480000}$} &
\tiny{$\frac{6227476659153540516409}{18438836272496640000}$} &
\tiny{$\frac{409876415908263532817}{70243185799987200}$} \\
\hline
\end{tabular}

\vspace{0.5cm}

\begin{tabular}{l|ccc}
\diagbox{$g$}{$n$} &$ 4 $&$ 5 $ &$6$\\
\hline
$0 $ &$ 2$&$ 7$ &$ 34 $\\
$1 $ &$ \frac{35}{6} $&$ \frac{389}{12} $& $ \frac{1349}{6} $\\
$2 $ &$ \frac{12431}{720} $&$ \frac{189443}{1440} $&$ \frac{853541}{720} $\\
$3 $ &$ \frac{684641}{12096}$&$ \frac{199014019}{362880}$& $ \frac{1103123803}{181440} $\\
$4 $ &$ \frac{9056350741}{43545600}$&$ \frac{71024755987}{29030400}$& $ \frac{1402182822991}{43545600} $\\
$5 $ &$ \frac{701735503159}{821145600}$&$ \frac{135972856739213}{11496038400}$& $\frac{115110462356893}{638668800}$\\
$6 $ &$ \frac{105018494553645499}{26900729856000}$&$ \frac{4680800827073885069}{75322043596800}$ &$\frac{15587244161672916947}{14485008384000}$\\
$7 $ &$ \frac{44501877704266668461}{2259661307904000}$&$ \frac{1601797289485334976137}{4519322615808000}$ &$\frac{3106681102072897118941}{451932261580800}$\\
$8$ & \tiny{$\frac{182685625436225237071349}{1676257842954240000}$}
&  \tiny{$\frac{80688405819001411538000371}{36877672544993280000}$}
 &  \tiny{$\frac{866365437544472661827562757}{18438836272496640000}$}\\
\hline
\end{tabular}

\begin{Remark}
In particular,
the numbers $\chi(\Mbar_{0,n})$ for $n\geq 3$ are all integers:
\ben
1,\quad
2,\quad
7,\quad
34,\quad
213,\quad
1630,\quad
14747,\quad
153946,\quad
\cdots.
\een
This sequence is A074059 on Sloane¡¯s on-line Encyclopedia of Integer Sequences \cite{sl},
which describes a particular weighted counting of stable trees.
Moreover,
there is a refinement of this sequence of integers (A075856)
which counts stable trees with a fixed number of internal edges,
and they are the coefficients (up to an additional factor $\pm n!$)
of the refined orbifold Euler characteristics of $\Mbar_{0,n}$ introduced in \cite[\S 3.1]{wz2}.
See \cite[\S 3]{wz2} and the references within for an introduction of this refined integer sequence.
In particular,
they are special values of the Ramanujan psi polynomials \cite{be, bew},
see \cite[\S 3.5]{wz2} for details.
\end{Remark}

\end{appendices}

\end{document}